\keywords{cut elimination, bunched implications, interactive theorem proving, Coq, substructural logics}
\newcommand{\cal}{\mathcal}
\newcommand{\basedir}{tex/}
\newcommand\basedir{}%
\definecolor{LinkColor}{rgb}{0.55,0.0,0.3}
\definecolor{CiteColor}{rgb}{0.55,0.0,0.3}
\renewcommand{\Prop}{\textdom{Prop}}
\newcommand{\Set}{\textdom{Set}}
\newcommand{\Bunch}{\textdom{Bunch}}
\newcommand{\Atom}{\textdom{Atom}}
\newcommand{\Frml}{\textdom{Frml}}
\newcommand{\mult}{\cdot}
\newcommand{\empM}{\varnothing_m}
\newcommand{\empA}{\varnothing_a}
\newcommand{\ccomma}{\mathbin{\scaleobj{1.8}{\bm{,}}}}
\newcommand{\csemic}{\mathbin{\scaleobj{1.5}{\bm{;}}}}
\newcommand{\outI}[1]{\langle #1 \rangle}
\newcommand{\cl}[1]{\mathsf{cl}(#1)}
\newcommand{\frmlI}[1]{(#1)^{\bm{\ast}}}
\newcommand{\provesCF}{\proves_{\mathsf{cf}}}
\newcommand{\bunchDecomp}[3]{#1 \rightsquigarrow \langle #2 \mid #3 \rangle}
\newcommand{\shorthash}{93aa954}
\newcommand{\nolinkcoqident}[1]{\texttt{\detokenize{#1}}}
\newcommand{\coqident}{\begingroup\@makeother\#\@coqident}
\newcommand{\@coqident}[3][]{%
  \ifthenelse{\isempty{#2}}%
  {\nolinkcoqident{#3}}%
  {\ifthenelse{\isempty{#1}}%
  {\href{\coqdocurl{#2}{#3}}{\nolinkcoqident{#3}}}%
  {\href{\coqdocurl{#2}{#3}}{\nolinkcoqident{#1}}}}%
\endgroup}
\newcommand{\coqfile}[2]{%
  \ifthenelse{\isempty{#1}}%
  {\href{\coqdocbaseurl #2.html}{\nolinkcoqident{#2.v}}}%
  {\href{\coqdocbaseurl #1.#2.html}{\nolinkcoqident{#2.v}}}}
\newcommand{\coqmod}[3]{%
  \href{\coqdocurl{#1}{#2.#3}}{\nolinkcoqident{#3}}}
\begin{document}

\author{Dan Frumin}
\affiliation{\institution{Bernoulli Institute, University of Groningen}
   \country{The Netherlands}}
\email{d.frumin@rug.nl}

\date{\today}

\title[Semantic Cut Elimination for BI, Formalized in Coq]
{Semantic Cut Elimination for the Logic of Bunched Implications, Formalized in Coq}

\begin{abstract}
The logic of bunched implications (BI) is a substructural logic that forms the backbone of separation logic, the much studied logic for reasoning about heap-manipulating programs.
Although the proof theory and metatheory of BI are mathematically involved, the formalization of important metatheoretical results is still incipient.
In this paper we present a self-contained formalized, in the Coq proof assistant, proof of a central metatheoretical property of BI: cut elimination for its sequent calculus.

The presented proof is \emph{semantic}, in the sense that is obtained by interpreting sequents in a particular ``universal'' model.
This results in a more modular and elegant proof than a standard Gentzen-style cut elimination argument, which can be subtle and error-prone in manual proofs for BI.
In particular, our semantic approach avoids unnecessary inversions on proof derivations, or the uses of cut reductions and the multi-cut rule.

Besides modular, our approach is also robust:
we demonstrate how our method scales, with minor modifications, to (i) an extension of BI
with an arbitrary set of \emph{simple structural rules}, and (ii) an extension with an S4-like $\Box$ modality.
\end{abstract}

\maketitle

\section{Introduction}
  The logic of bunched implications (BI) \cite{ohearn:pym:99} is an extension of intuitionistic logic with substructural connectives.
  BI (and its classical cousin Boolean BI) is known for, among other things, forming a basis for separation {logic~\cite{reynolds:2002,ohearn:2019}} -- a popular program logic for verification of heap-manipulating programs.
  The BI itself, and many of its important models, are based on the idea that propositions denote ownership of resources and BI includes a \emph{separating conjunction} connective $\ast$, which signifies ownership of \emph{disjoint} resources~\cite{pym.etal:2004}.
  As an adjoint to $\ast$, BI also includes a \emph{magic wand} connective $\wand$, which is determined by the property
  \[
    A \proves B \wand  C \qquad \iff \qquad A \ast B \proves C.
  \]
  Additionally, BI includes a unit element $\EMP$ for the separating conjunction $\ast$.

Proof theoretically, BI can be formalized in a Gentzen-style sequent calculus, which operates on the judgments of the form $\Delta \proves A$,
where $\Delta$ is not merely a multiset of formulas, but a \emph{bunch}: a tree in which leaves are formulae and nodes are connected with either $\csemic$ or $\ccomma$ (signifying connecting the resources using $\wedge$ and $\ast$, respectively).
For example, a bunch might be $((a \wedge b)\csemic c)\ccomma (d\csemic e)$.
Due to this nested structure of bunches, the left rules in the BI sequent calculus can apply deep inside bunches.
For example, an instance of the left rule for $\wedge$, specialized to the bunch above, is
\begin{mathpar}
  \infer
  {((a\csemic b)\csemic c)\ccomma (d\csemic e) \proves \varphi}
  {((a \wedge b)\csemic c)\ccomma (d\csemic e) \proves \varphi}
\end{mathpar}
That is, $a \wedge b$ got ``destructed'' into $a \csemic b$ in the context ${([-]\csemic c)\ccomma (d\csemic e)}$, where $[-]$ signifies a hole that can be filled.

BI treats separating conjunction $\ast$ (and, hence, $\ccomma$) as a substructural connective, that does not admit contraction and weakening (i.e.~neither $a \proves a \ast a$ nor $a \ast b \proves a$ hold), but it retains the usual structural rules for intuitionistic conjunction $\wedge$ (and, hence, $\csemic)$.
In the sequent calculus, the corresponding structural rules can as well be applied deeply inside bunches.
For example, an instance of a contraction rule might look like this:
\begin{mathpar}
\infer
{\big((a\ccomma b) \csemic (a\ccomma b)\big) \ccomma c \proves \varphi}
{(a\ccomma b) \ccomma c \proves \varphi}
\end{mathpar}
Here we contract the bunch $(a\ccomma b)$ inside the context $[-] \ccomma c$.
In BI we have to permit contraction on arbitrary bunches, whereas in intuitionistic logic contraction on individual formulas is sufficient.

As usual, BI includes a \emph{cut rule}, which formalizes the informal process of applying an intermediate lemma in a proof.
Similar to the other rules, the cut rule can be applied on a formula deeply nested inside a bunch:
\begin{mathpar}
  \infer
  {\Delta' \proves \psi \and \Delta(\psi) \proves \varphi}
  {\Delta(\Delta') \proves \varphi}
\end{mathpar}
where $\Delta(-)$ is an arbitrary bunch with a hole.

In this paper we study the \emph{cut elimination} property for BI.
That is, every sequent that has a proof in BI involves the cut rule also has a proof that is cut-free (i.e. does not use of the cut rule).
From a theoretical point of view, cut elimination can be used to show important meta-theoretical properties (subformula property, consistency, conservativity).
From a more practical standpoint, cut elimination is an important ingredient in proof search.

\paragraph{Why formalize cut elimination?}
Cut elimination is a staple in metatheory of logics.
Because of that, the question of cut elimination is often one of the first to be raised, whenever a new logic or a new sequent calculus is proposed.
It is then common to prove cut elimination directly, by providing a recursive procedure on derivation trees, potentially using additional measure(s) to prove that this procedure terminates.

Proofs organized along those lines are repetitive, consist of many sub-cases, and include many implicit details (e.g. about the structure of the contexts).
As a result, it is not uncommon to see proofs that are ``analogous'' to known correct proofs of cut elimination for related systems, or proofs that only discuss a couple of cases that are considered illustrative, with the bulk of the proof being left as a (rarely completed) exercise for the reader.

Unfortunately, due to the interplay and complexity of all the details, such informal proofs can be quite risky.
In the case of BI, the deep nested structure of bunches and explicit structural rules contribute to the complexity and the level of details.
For example, a proof of cut elimination for BI given in~\cite[Chapter 6]{pym:2002} had a gap, that was later fixed in~\cite{arisaka.qin:2012}.
The issue seems to arise from the treatment of the contraction rule.
In presence of explicit contraction a naive approach of pushing each instance of the cut rule up along the derivation tree does not necessarily work.
In order to resolve this, the cut rule should be generalized to the \emph{multicut} rule, combining contraction and cut together.
Then cut elimination is generalized to multicut elimination, offering a stronger induction hypothesis that can be applied to subproofs.
Unfortunately, this generalization was originally done in a way that only works for some of the cases.
See~\cite{arisaka.qin:2012} for more details.\footnote{It is possible to avoid the multicut generalization by using more fine-grained measure functions, see~\cite{borisavljevic.etal:2000} for the case of intuitionistic logic. As another alternative, Brotherston~\cite{brotherston:2012} gave a proof of cut elimination for BI by going through a displayed calculus.}

This is not the only instance of erroneous proofs of cut elimination slipping in.
Several sequent calculus formulations for bi-intuitionistic logic were wrongly believed to enjoy cut elimination.
These mistakes were later fixed in~\cite{pinto.uustalu:2009}.
Other instances include an incorrect proof of cut elimination for full intuitionistic linear logic, fixed in~\cite{depaiva.brauner:1996,bierman:1996};
an incorrect proof of cut elimination for nested sequent systems for modal logic~\cite{brunnler.strassburger:2009}, fixed in~\cite{marin.strassburger:2014}.
While not incorrect in itself, cut elimination for a formulation of the provability logic GL by Sambin and Valentini~\cite{sambin.valentini:1982} with explicit structural rules was subject of some controversy until it was resolved in~\cite{gore.ramanayake:2012}.

\paragraph{Semantic cut elimination.}
To counterbalance informal pen-and-paper proofs of cut elimination for BI, we provide a fully formalized proof in the Coq proof assistant.
However, instead of trying to formalize an intricate Gentzen-style process, as in \cite{arisaka.qin:2012}, we approach cut elimination using the ideas of algebraic proof theory: a research area aimed at making tight connections between structural proof theory and algebraic semantics of logics.
In our proof we adapt the methods of algebraic semantic cut elimination for linear {logic~\cite{okada:99,okada:02}}, in which cut elimination is obtained by constructing a special model for linear logic that is universal w.r.t. cut-free provability.
We believe that this approach to cut elimination is more amendable to formalization and extension, than a direct Gentzen-style proof.

Semantic cut elimination for BI was first developed by Galatos and Jipsen~\cite{galatos.jipsen:2017}, building on their work on residuated frames~\cite{galatos.jipsen:2013}.
Their approach is quite general, and the proof makes heavy use of intermediate structures (the aforementioned residuated frames), which lie in between sequent calculus and algebraic semantics.
By contrast, the proof presented here  only involves the ``syntax'' (sequent calculus), and the ``semantics'' (algebraic models) parts.
This leaves us with fewer structures to consider in the formalization.

To demonstrate the modularity of our proof, we extend it to cover two different types of extensions of BI.
Firstly, we consider BI extended with a particular class of structural rules (\emph{simple structural rules}), which cover weakening and contraction (both for $\ccomma$ and $\csemic$), as well as many other kinds of structural rules.
Secondly, we consider BI extended with an S4-like $\Box$ modality.
In both cases we show that we do not have to make a lot of modifications to the proof, and the modifications that we do have to make are, in a way, systematic.

\subsection{Contributions and outline}
The main contributions of this paper are as follows.
We present an algebraic proof of cut elimination for BI.
Our proof can be seen as a simplification of the Galatos and Jipsen's method~\cite{galatos.jipsen:2017}, without the framework of residuated frames.
We demonstrate the modularity of our approach by extending it to cut elimination of BI with an S4-like modality (modalities were not previously considered in the framework of residuated frames).
We formalize the results in the Coq proof assistant, which is to our knowledge the first published formalization of cut elimination for BI.

The remained of the paper is structured as follows.
In \Cref{sec:main_idea} we present the main idea behind semantic proofs of cut elimination.
In \Cref{{sec:seqcalc}} and \Cref{sec:alg_sem} we recall the sequence calculus for BI and its (standard) algebraic semantics via BI algebras.
In \Cref{sec:moore} we consider when a closure operator on a BI algebra induces a subalgebra itself.
We then apply this construction in \Cref{sec:cutelim_thm} to obtain a ``universal'' model for cut-free provability, and use it to prove cut elimination.
In \Cref{{sec:simple_ext}} we extend the proof of cut elimination to all possible extensions of BI with a particular class of structural rules.
In \Cref{sec:modal_ext} we extend the proof to account for an S4-like modality.
We discuss our formalization efforts in \Cref{sec:formalization}.
We discuss related work in \Cref{sec:related_work} and conclude in \Cref{sec:conclusion}.

\subsection{Formalization}
The formalization is available online at:
\begin{center}
  \url{https://github.com/co-dan/BI-cutelim}.
\end{center}
In this paper we specifically refer to the version with git hash 
\href{https://github.com/co-dan/BI-cutelim/tree/\longhash}{\shorthash}, permanently available under DOI \href{https://doi.org/10.5281/zenodo.5770478}{10.5281/zenodo.5770478}.
Throughout the paper, identifiers in monospaced font (\href{https://github.com/co-dan/BI-cutelim}{\nolinkcoqident{like this}}) accompany statements and proposition.
They indicate the names of the statements in the Coq formalization and link to the corresponding place in the online documentation.
For example, the link \coqident{seqcalc}{proves} points to the inductive definition of the BI sequent calculus.

\section{Semantic cut elimination}
\label{sec:main_idea}
In this section we explain some of the ideas and intuitions behind a semantic proof of cut elimination in a semi-formal way, before diving straight into the complexities of BI.
The starting point is that there is a class of algebras in which we can interpret logic.
The main idea is to find a particular algebra $\mathcal{C}$, in which we can interpret the sequent calculus, and which has a property that if $\Sem{\psi} \leq \Sem{\varphi}$ in $\mathcal{C}$, then $\psi \proves \varphi$ is derivable without applications of the \ruleref{cut} rule.
In this case, we say that $\mathcal{C}$ is a ``universal'' algebra for cut-free provability.
Then, cut elimination can be obtained by the (sound) interpretation of sequent calculus into $\mathcal{C}$.

Finding such a ``universal'' algebra is reminiscent of proving \emph{completeness} of a logic w.r.t. a class of algebras.
In the case of completeness, we construct a ``universal'' algebra $\mathcal{L}$ such that $\Sem{\psi} \leq \Sem{\varphi}$ implies derivability of $\psi \proves \varphi$.
This Lindenbaum-Tarski algebra $\mathcal{L}$ is usually defined to be the collection of equivalence classes of formulas modulo interprovability:
\[
  [\varphi] \eqdef \{ \psi \mid ( \psi \proves \varphi)~\wedge~(\varphi \proves \psi)\}
\]
And the ordering $\leq$ on $\mathcal{L}$ is induced by provability:
\[
  [\varphi] \leq [\psi] \iff \varphi \proves \psi.
\]
Provability does not depend on the representative of the equivalence class, and so we get a poset $\mathcal{L}$.
The logical operators are interpreted in $\mathcal{L}$ in such a way that $\Sem{\varphi} = [\varphi]$.
The argument for completeness then goes as follows: suppose that $\Sem{\varphi} \leq \Sem{\psi}$ in all the possible algebras; then, in particular that inequality holds in $\mathcal{L}$, which amounts to $\varphi \proves \psi$.
Thus, any valid sequent is derivable.

It is precisely the connection between provability and the order on the algebra that makes this model useful.
We can imagine a reformulation of the above model in terms of cut-free provability in sequent calculus:
\[
  [\varphi] \leq [\psi] \iff \varphi \provesCF \psi.
\]
This adaptation, however, does not work.
In order to prove that the ordering $\leq$ is transitive, we need to show
\begin{mathpar}
\infer{
  \varphi \provesCF \psi
  \and \psi \provesCF \chi
}
{\varphi \provesCF \chi}
\end{mathpar}
which amounts to showing that \ruleref{cut} is admissible in the cut-free fragment.
We seem to be back at square one.

To fix this, instead of interpreting formulas as sets of equivalent formulas (which is what equivalence classes can be seen as),
we would like to interpret formulas as sets of \emph{contexts} which prove the formula:
\[
  \outI{\varphi} \eqdef \{ \Delta \mid \Delta \provesCF \varphi \}.
\]
Then, inclusion of sets is a good candidate for the ordering, because $\psi \in \outI{\psi}$ and, hence, $\outI{\psi} \subseteq \outI{\varphi}$ implies $\psi \provesCF \varphi$.

But how do we interpret logical connectives?
We can interpret $\top$ as the set of all contexts; then, clearly $\top = \outI{\TRUE}$.
However, we cannot pick the empty set as an interpretation of $\bot$: the set $\outI{\FALSE}$ is non-empty, as it contains at least $\FALSE$ itself.
What we need is to find an interpretation $\Sem{-}$ such that $\Sem{\varphi} \subseteq \Sem{\psi}$ implies $\varphi \provesCF \psi$ (or, equivalently $\varphi \in \outI{\psi}$).
Okada~\cite{okada:99} proposed a sufficient condition for such an interpretation: for any formula $\varphi$, $\varphi \in \Sem{\varphi}$ and $\Sem{\varphi} \in \outI{\varphi}$.
Then, the desired property on the interpretation follows via a chain of inclusions:
\[
  \varphi \in \Sem{\varphi} \subseteq \Sem{\psi} \subseteq \outI{\psi}.
\]

Note that the set of all contexts does not satisfy this condition: as we have seen, the empty set a counter-example.
It is the least element w.r.t~set inclusion, but $\FALSE \notin \emptyset$, so we cannot set $\Sem{\FALSE} = \emptyset$.
This suggests that, instead of considering arbitrary sets of contexts, we need to refine the powerset algebra somehow.
A good starting point would be to consider the carrier of the algebra containing just the sets of the form $\outI{\varphi}$,
i.e.~$\mathcal{C} = \{ \outI{\varphi} \mid \varphi \in \Frml \}$ (by analogy with the Lindenbaum-Tarski algebra, which consists only of elements of the form $[\varphi]$).
We can then interpret bottom as $\bot = \outI{\FALSE}$, and it indeed will be the least element in the algebra.

Looking at other connectives, we cannot interpret disjunction as set-theoretic union, because the union $\outI{\varphi} \cup \outI{\psi}$ cannot always be written as $\outI{\chi}$, for some formula $\chi$.
That is, we cannot actually show that $\mathcal{C}$, as given above, is closed under unions, so $\cup$ is not a well-defined operation on $\mathcal{C}$.

How should we then interpret disjunction if not as the union of sets?
If we cannot use the set union, we will use the ``next best thing'': the smallest set in $\mathcal{C}$ that actually contains the union.
Formally, we set:
$$X \vee Y = \bigcap\{Z \in \mathcal{C} \mid X \cup Y \subseteq Z \}.$$
This definition is still not without issues:
for this operation to be defined, we need to ensure that $\mathcal{C}$ is closed under arbitrary intersections.
It turns out that we can achieve this by modifying the carrier of $\mathcal{C}$ and ``baking in'' the closedness under arbitrary intersections.
Such a construction is obtained in a generic way as a subalgebra of the powerset algebra generated by a particular \emph{closure operator}, as we will see in \Cref{sec:moore,sec:cutelim_thm}.

In the remainder of the paper we develop this construction in details.
But first, to make the matters concrete, we recall the BI sequent calculus and properties of its cut-free fragment (\Cref{sec:seqcalc}), and the algebraic semantics for BI (\Cref{{sec:alg_sem}}).


\section{Sequent calculus for BI}
\label{sec:seqcalc}
\begin{figure*}[t]
  \begin{mdframed}
    \raggedright
    \textbf{Equivalence of bunches}
    \begin{mathpar}
      \axiom{\Delta_1 \ccomma \Delta_2 \equiv \Delta_2 \ccomma \Delta_1}
      \and
      \axiom{\Delta_1 \csemic \Delta_2 \equiv \Delta_2 \csemic \Delta_1}
      \and
      \axiom{\Delta_1 \ccomma (\Delta_2 \ccomma \Delta_3) \equiv (\Delta_1 \ccomma \Delta_2) \ccomma \Delta_3}
      \and
      \axiom{\Delta_1 \csemic (\Delta_2 \csemic \Delta_3) \equiv (\Delta_1 \csemic \Delta_2) \csemic \Delta_3}
      \and
      \\
      \axiom{\Delta \ccomma \empM \equiv \Delta}
      \and
      \axiom{\Delta \csemic \empA \equiv \Delta}
      \and
      \infer
      {\Delta \equiv \Delta'}
      {\Gamma(\Delta) \equiv \Gamma(\Delta')}
    \end{mathpar}
    \smallskip
    \textbf{Structural rules}
    \begin{mathpar}
      \inferH{ax}
      {a \in \Atom}
      {a \proves a}
      \and
      \inferH{equiv}
      {\Delta' \proves \varphi{} \and \Delta \equiv \Delta'}
      {\Delta \proves \varphi}
      \and
      \inferhref{W$\csemic$}{W;}
      {\Delta(\Delta_1) \proves \varphi}
      {\Delta(\Delta_1 \csemic \Delta_2) \proves \varphi}
      \and
      \inferhref{C$\csemic$}{C;}
      {\Delta(\Delta_1 \csemic \Delta_1) \proves \varphi}
      {\Delta(\Delta_1) \proves \varphi}
      \and
      \inferH{cut}
      {\Delta' \proves A
        \and \Delta(A) \proves B}
      {\Delta(\Delta') \proves B}
    \end{mathpar}
    \smallskip

    \raggedright
    \textbf{Multiplicatives}
    \begin{mathpar}
      \axiomhref{$\EMP$R}{empR}
      {\empM \proves \EMP}
      \and
      \inferhref{$\EMP$L}{empL}
      {\Delta(\empM) \proves \varphi}
      {\Delta(\EMP) \proves \varphi}
      \and
      \inferhref{$\ast$R}{sepR}
      {\Delta_1 \proves \varphi \and \Delta_2 \proves \psi}
      {\Delta_1\ccomma \Delta_2 \proves \varphi \ast \psi}
      \and
      \inferhref{$\ast$L}{sepL}
      {\Delta(\varphi\ccomma \psi) \proves \chi}
      {\Delta(\varphi \ast \psi) \proves \chi}
      \and
      \inferhref{$\wand$R}{wandR}
      {\Delta\ccomma \varphi \proves \psi}
      {\Delta \proves \varphi \wand \psi}
      \and
      \inferhref{$\wand$L}{wandL}
      {\Delta_1 \proves \varphi \and
        \Delta(\Delta_2\ccomma \psi) \proves \chi}
      {\Delta(\Delta_1\ccomma \Delta_2\ccomma \varphi \wand \psi) \proves \chi}
    \end{mathpar}
    \smallskip

    \raggedright
    \textbf{Additives}
    \begin{mathpar}
      \axiomhref{$\TRUE$R}{trueR}
      {\empA \proves \TRUE}
      \and
      \inferhref{$\TRUE$L}{trueL}
      {\Delta(\empA) \proves \varphi}
      {\Delta(\TRUE) \proves \varphi}
      \and
      \inferhref{$\wedge$R}{andR}
      {\Delta_1 \proves \varphi \and \Delta_2 \proves \psi}
      {\Delta_1 \csemic \Delta_2 \proves \varphi \wedge \psi}
      \and
      \inferhref{$\wedge$L}{andL}
      {\Delta(\varphi\csemic \psi) \proves \chi}
      {\Delta(\varphi \wedge \psi) \proves \chi}
      \and
      \inferhref{$\to$R}{implR}
      {\Delta \csemic \varphi \proves \psi}
      {\Delta \proves \varphi \to \psi}
      \and
      \inferhref{$\to$L}{implL}
      {\Delta_1 \proves \varphi \and
        \Delta(\Delta_2\csemic \psi) \proves \chi}
      {\Delta(\Delta_1\csemic \Delta_2\csemic \varphi \to \psi) \proves \chi}
      \and
      \axiomhref{$\FALSE$L}{falseL}
      {\Delta(\FALSE) \proves \varphi}
      \and
      \inferhref{$\vee$R1}{disjR1}
      {\Delta \proves \varphi}
      {\Delta \proves \varphi \vee \psi}
      \and
      \inferhref{$\vee$R2}{disjR2}
      {\Delta \proves \psi}
      {\Delta \proves \varphi \vee \psi}
      \and
      \inferhref{$\vee$L}{disjL}
      {\Delta(\varphi) \proves \chi \and \Delta(\psi) \proves \chi}
      {\Delta(\varphi \vee \psi) \proves \chi}
    \end{mathpar}
  \end{mdframed}
  \caption{BI sequent calculus.}
\label{fig:bi_seqcalc}
\end{figure*}
In this section we briefly recall the sequent calculus formulation of BI \cite{ohearn:pym:99}, and some of the properties of its cut-free fragment.
The formulas of BI are obtained from the following grammar:
\begin{align*}
\varphi, \psi & \bnfdef{}
	\TRUE \ALT \FALSE \ALT 
        \varphi \wedge \psi \ALT
        \varphi \vee \psi \ALT
        \varphi \to \psi \\ & \ALT
        \EMP \ALT
	\varphi * \psi \ALT
	\varphi \wand \psi \ALT a \qquad\quad (a \in \textdom{Atom})
\end{align*}
BI extends intuitionistic propositional logic with separating conjunction ($\ast$), magic wand ($\wand$, adjoint to separating conjunction), and the empty proposition ($\EMP$, unit for separating conjunction).
We also include atomic propositions drawn from a fixed set $\textdom{Atom}$.

The sequent calculus for BI 
is given in \Cref{fig:bi_seqcalc}.
It operates on the sequents of the form $\Delta \proves \varphi$, where $\varphi$ is a formula and $\Delta$ is a \emph{bunch} -- a tree composed of binary nodes labeled with $\ccomma$ and $\csemic$, and leaves being either formulas or empty bunches $\empM$ and $\empA$.
Morally, we view bunches as equivalence classes of such trees modulo commutative monoid laws for $(\ccomma, \empM)$, and $(\csemic, \empA)$.
These are given using structural congruence $\equiv$, the rules for which are also given in \Cref{fig:bi_seqcalc}.
We could have defined provability on such equivalence classes, but we opt for using explicit context conversions using \ruleref{equiv}.

Most of the structural rules and the left rules can be applied to formulas that occur nested inside some bunch with a hole $\Delta(-)$.
We refer to such bunches with holes as \emph{bunched contexts}.
For example, in the application of the rule \ruleref{andL} below we use the bunched context $(p \ccomma [-])$:
\begin{mathpar}
\inferrule*[left=\ruleref{andL}]
{p \ccomma (p \csemic q) \proves p \ast q}
{p \ccomma (p \wedge q) \proves p \ast q.}
\end{mathpar}

\subsection{Cut-free provability}
Let us write $\Delta \provesCF \varphi$ if $\Delta \proves \varphi$ is derivable \emph{without} the \ruleref{cut} rule.
In the rest of this section we prove invertibility of several rules in the cut-free fragment of BI.
Those derived rules will be useful to us when constructing the algebraic model in \Cref{sec:cutelim_thm}.

The first observation about the sequent calculus, is that we have formulated the ``axiom'' rule $\varphi \proves \varphi$ only for atomic formulas $a \in \Atom$.
This will significantly simplify some of the proofs (for example, \Cref{lem:sepL-inv}), but does not limit the expressivity of the system, as witness by the following lemma.
\begin{proposition}[Identity expansion, \coqmod{seqcalc}{Seqcalc}{seqcalc_id_ext}]
  \label{lem:id_expansion}
  For every formula $\varphi$ we can derive a sequent
  $\varphi \provesCF \varphi$.
\end{proposition}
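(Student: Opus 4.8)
The plan is to proceed by structural induction on the formula $\varphi$, constructing the cut-free identity derivation $\varphi \provesCF \varphi$ from identity derivations for the immediate subformulas. The base case is immediate: when $\varphi$ is an atom $a$, the sequent $a \provesCF a$ is exactly an instance of the \ruleref{ax} rule, which is precisely why restricting \ruleref{ax} to atoms costs no expressivity.

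For each compound connective the recipe is uniform: apply the corresponding right rule read bottom-up, and discharge the resulting premise by the matching left rule at the top of the bunch, feeding in the induction hypotheses on the subformulas. Concretely, for $\varphi \wedge \psi$ I would first use \ruleref{andR} to reduce $\varphi \csemic \psi \provesCF \varphi \wedge \psi$ to the premises $\varphi \provesCF \varphi$ and $\psi \provesCF \psi$ (available by induction), and then apply \ruleref{andL} with the trivial context $[-]$ to obtain $\varphi \wedge \psi \provesCF \varphi \wedge \psi$. The cases for $\vee$ (via \ruleref{disjL} together with \ruleref{disjR1} and \ruleref{disjR2}) and for $\ast$ (via \ruleref{sepR} and \ruleref{sepL}) follow the same pattern. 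The degenerate cases are even simpler: $\FALSE \provesCF \FALSE$ is a direct instance of \ruleref{falseL}, while $\TRUE$ and $\EMP$ follow by a single left rule from the respective right axioms (for instance $\EMP \provesCF \EMP$ from \ruleref{empR} by \ruleref{empL}).

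The only cases needing extra care are the two implications, since their left rules introduce additional context material. For $\varphi \to \psi$ I would apply \ruleref{implR} to reduce to $(\varphi \to \psi) \csemic \varphi \provesCF \psi$, and then instantiate \ruleref{implL} with $\Delta_1 = \varphi$, $\Delta_2 = \empA$, and the empty outer context, whose premises are $\varphi \provesCF \varphi$ and $\empA \csemic \psi \provesCF \psi$. The second premise is $\psi \provesCF \psi$ only up to the unit law $\empA \csemic \psi \equiv \psi$, and the conclusion matches $(\varphi \to \psi) \csemic \varphi$ only after the same rewriting, so an application of \ruleref{equiv} is required at both points. The $\wand$ case is identical with $\csemic$ and $\empA$ replaced by $\ccomma$ and $\empM$, using instead the unit law $\Delta \ccomma \empM \equiv \Delta$.

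I do not expect a genuine obstacle: the proposition is a routine induction once the calculus is fixed, and in fact it is exactly the sort of lemma the semantic approach will later lean on. The one spot where mistakes are easy is the bookkeeping in the $\to$ and $\wand$ cases, where one must pick the empty bunch for $\Delta_2$ and clear the spurious unit with \ruleref{equiv}; choosing the correct unit ($\empA$ for the additive implication, $\empM$ for the multiplicative wand) and remembering that \ruleref{equiv} is itself a primitive, cut-free rule is what keeps these derivations honest.
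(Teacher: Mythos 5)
Your proposal is correct and takes exactly the paper's route: the paper's entire proof of this proposition is ``By induction on the structure of $\varphi$,'' and your case analysis is a faithful elaboration of that induction, including the only delicate cases ($\to$ and $\wand$), where instantiating the left rule with the empty bunch for $\Delta_2$ and clearing the spurious unit via the \ruleref{equiv} rule is precisely what is needed. No gaps.
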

\begin{proof}
  By induction on the structure of $\varphi$.
\end{proof}

For the construction presented in this paper we need to show that a number of rules are invertible in the cut-free sequent calculus.
Specifically, we need to show that \ruleref{wandR}, \ruleref{implR}, \ruleref{sepL}, \ruleref{andL}, \ruleref{empL}, and \ruleref{trueL} are invertible.
\begin{lemma}[\coqmod{seqcalc_height}{SeqcalcHeight}{wand_r_inv} and \coqmod{seqcalc_height}{SeqcalcHeight}{impl_r_inv}]
  \label{lem:wand_inv_adm}
  The following rules are admissible:
  \begin{mathpar}
    \inferhref{$\wand$R-inv}{wandR-inv}
  {\Delta \provesCF \varphi \wand \psi}
  {\Delta \ccomma \varphi \provesCF \psi}  
  \and
    \inferhref{$\to$R-inv}{implR-inv}
  {\Delta \provesCF \varphi \to \psi}
  {\Delta\csemic \varphi \provesCF \psi}  
  \end{mathpar}
\end{lemma}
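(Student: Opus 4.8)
Both statements assert that the inverses of the right rules for $\wand$ and $\to$ are admissible in the cut-free calculus, and since \ruleref{implR-inv} is the verbatim $\csemic$-analogue of \ruleref{wandR-inv} I would prove them by the same argument and describe only \ruleref{wandR-inv}. The plan is to induct on a cut-free derivation of $\Delta \provesCF \varphi \wand \psi$ and do a case analysis on the last rule, producing from it a cut-free derivation of $\Delta \ccomma \varphi \provesCF \psi$. The organizing observation is that every rule except the right rules carries the succedent along unchanged, while each right rule introduces a formula with a fixed top-level connective; hence the only right rule that can conclude with succedent $\varphi \wand \psi$ is \ruleref{wandR}, and the rules \ruleref{ax}, \ruleref{trueR}, \ruleref{empR}, \ruleref{sepR}, \ruleref{andR}, \ruleref{implR}, \ruleref{disjR1}, and \ruleref{disjR2} are all impossible as the last step.

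If the last rule is \ruleref{wandR}, its premise is exactly $\Delta \ccomma \varphi \provesCF \psi$, which is the desired conclusion. Every remaining case is a left rule or a structural rule (the $\csemic$-weakening and $\csemic$-contraction rules, and \ruleref{equiv}) acting on the bunch through some bunched context $\Gamma(-)$, with the succedent untouched. For all of these I would use a single context-extension move: replace $\Gamma(-)$ by $\Gamma(-) \ccomma \varphi$, apply the induction hypothesis to exactly those premises whose succedent is still $\varphi \wand \psi$ (leaving side premises, such as the minor premise $\Delta_1 \provesCF \varphi'$ of \ruleref{wandL} and \ruleref{implL}, unchanged), and reapply the same rule. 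For example, from a \ruleref{sepL} step with premise $\Gamma(\chi \ccomma \theta) \provesCF \varphi \wand \psi$ the hypothesis yields $\Gamma(\chi \ccomma \theta) \ccomma \varphi \provesCF \psi$, and re-applying \ruleref{sepL} through $\Gamma(-) \ccomma \varphi$ gives $\Gamma(\chi \ast \theta) \ccomma \varphi \provesCF \psi$. The branching rule \ruleref{disjL} is handled by applying the hypothesis to both premises; \ruleref{falseL} is immediate since $\Gamma(\FALSE) \ccomma \varphi$ still contains $\FALSE$; and for \ruleref{equiv} I additionally need that $\Delta \equiv \Delta'$ entails $\Delta \ccomma \varphi \equiv \Delta' \ccomma \varphi$, which holds by the congruence-closure rule for $\equiv$, so the bunch conversion survives the extension.

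I expect the difficulty to be organizational rather than conceptual: the crux is to make the ``append $\ccomma \varphi$ at the root'' operation commute cleanly with every left and structural rule and to keep the bunched-context plumbing correct, the one genuinely non-routine ingredient being the compatibility of $\equiv$ with this extension. This is precisely why it is convenient to index derivations by height (as in the \nolinkcoqident{seqcalc_height} development) and to induct on the height rather than on derivation structure: each premise has strictly smaller height, the context-extension operation preserves height, and the whole argument then collapses into one induction with a mechanical case split. For \ruleref{implR-inv} one repeats the argument verbatim, replacing $\ccomma$ by $\csemic$ and \ruleref{wandR} by \ruleref{implR}.
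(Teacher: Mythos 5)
Your proposal is correct and takes essentially the same route as the paper, which proves \Cref{lem:wand_inv_adm} by exactly this induction on the cut-free derivation: the atom-restricted axiom rule excludes the bad base cases, \ruleref{wandR} (resp.\ \ruleref{implR}) gives the conclusion outright, and every left and structural rule commutes with appending $\ccomma\,\varphi$ (resp.\ $\csemic\,\varphi$) to the bunch. Your closing suggestion that height induction is needed is harmless but superfluous for this particular lemma: since the formula being inverted is the succedent, contraction never duplicates it and the induction hypothesis is only ever applied to strict subderivations, which is precisely why the paper reserves the height-indexed induction for the \ruleref{sepL} and \ruleref{andL} inversions (\Cref{lem:sepL-inv}).
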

\begin{proof}
  By induction on the derivations $\Delta \provesCF \varphi \wand \psi$
and $\Delta \provesCF \varphi \to \psi$.
\end{proof}
At the end of the day, the proof of \Cref{lem:wand_inv_adm} by induction on derivations is not very complicated, because the form of the context on the left-hand side of the sequent is relatively simple.
It is easy to show that the left rules can commute with \ruleref{wandR} and \ruleref{implR}.
By contrast, showing that the rules \ruleref{sepL} and \ruleref{andL} are invertible is more involved for several reasons.

First of all, just like for other sequent calculi with explicit contraction, structural induction on the proof is not strong enough.
Consider the following derivation of ${\varphi \ast \psi \provesCF \chi}$:
\begin{mathpar}
\infer*
{\varphi \ast \psi \csemic \varphi \ast \psi \provesCF \chi}
{\varphi \ast \psi \provesCF \chi}
\end{mathpar}
Since $\varphi \ast \psi$ occurs twice in the premise, we need to apply the induction hypothesis twice.
From the first application of the induction hypothesis we get a proof
\[
\varphi \ast \psi \csemic (\varphi \ccomma \psi) \provesCF \chi,
\]
but this proof is not a strict subderivation of the original derivation.
Therefore, we cannot use the induction hypothesis second time to obtain a proof of
$(\varphi \ccomma \psi) \csemic (\varphi \ccomma \psi) \provesCF \chi$.

In order to circumvent this, we do induction on the \emph{height} of the derivation, strengthening the statement to:
\begin{lemma}[\coqmod{seqcalc_height}{SeqcalcHeight}{sep_l_inv}]
\label{lem:sepL-inv}
If there is a derivation of $$\Delta(\varphi \ast \psi) \provesCF \chi$$
  with height $n$, then there is a derivation of $$\Delta(\varphi\ccomma \psi) \provesCF \chi$$ with height strictly less than $n$.
\end{lemma}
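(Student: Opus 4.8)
The plan is to prove this by induction on the \emph{height} $n$ of the given derivation of $\Delta(\varphi \ast \psi) \provesCF \chi$, keeping the strengthened conclusion exactly as stated, namely that the resulting derivation of $\Delta(\varphi \ccomma \psi) \provesCF \chi$ has height \emph{strictly} below $n$. The strict decrease is not cosmetic: it is precisely what will license the double use of the induction hypothesis in the contraction case. First I would case on the last rule $R$ of the derivation and, in each case, analyse how the hole of the bunched context $\Delta(-)$ sits relative to the principal formula or subbunch that $R$ acts on. The cleanest case is when $R$ is $\ruleref{sepL}$ applied to the distinguished occurrence itself: its premise is then literally $\Delta(\varphi \ccomma \psi) \provesCF \chi$, of height $n-1 < n$, and there is nothing more to do. Among the axioms, only $\ruleref{falseL}$ can occur — and it is immediate, since replacing the distinguished leaf leaves the $\FALSE$ in place; the rules $\ruleref{ax}$, $\EMP$R and $\TRUE$R cannot occur at all, as their left-hand sides cannot contain the compound formula $\varphi \ast \psi$.

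For all the remaining \emph{spectator} cases — every left, right, and structural rule whose principal formula or subbunch is disjoint from the distinguished occurrence — the strategy is uniform: decompose $\Delta(-)$ so as to separate the part touched by $R$ from the occurrence, apply the induction hypothesis to each premise to rewrite $\varphi \ast \psi$ into $\varphi \ccomma \psi$ there, and then reapply $R$. Since each premise has height at most $n-1$, the induction hypothesis returns a derivation of height strictly below $n-1$, and reinstating $R$ adds one, so the total stays strictly below $n$. The one case needing extra care here is $\ruleref{equiv}$: I would first establish a small sublemma that a bunch congruence $\Delta(t) \equiv \Delta''$ lifts to the level of contexts, i.e.\ $\Delta'' = \Delta'''(t)$ for some context $\Delta'''$ with $\Delta(u) \equiv \Delta'''(u)$ for every $u$, so that the occurrence can be tracked through the equivalence and the rule reinstated after the rewrite. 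Weakening is handled analogously, either discarding the rewritten subbunch outright or commuting past it.

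The hard part will be the contraction case, where the distinguished occurrence lies inside the contracted subbunch, so that the premise contains \emph{two} copies of $\varphi \ast \psi$. Here I would apply the induction hypothesis to one copy, obtaining a derivation of height strictly below $n-1$ in which that copy has become $\varphi \ccomma \psi$; because this is again a derivation satisfying the hypothesis of the lemma and of height strictly below $n$, the induction hypothesis applies a \emph{second} time to the remaining copy, and its strict-decrease guarantee lowers the height once more. Reapplying contraction then yields $\Delta(\varphi \ccomma \psi) \provesCF \chi$ of height strictly below $n$. This double application of the hypothesis to a derivation that is \emph{not} a structural subderivation of the original is exactly why naive structural induction fails, and why the height measure together with the strict-decrease strengthening is the right formulation.

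Beyond this, the principal difficulty is bookkeeping rather than conceptual depth: every left and structural rule forces a case split on the relative position of its own hole and the distinguished occurrence — whether one is nested inside the other, or the two are disjoint — together with the corresponding height accounting. This is exactly the kind of bunched-context analysis captured by the decomposition relation $\bunchDecomp{\cdot}{\cdot}{\cdot}$, and getting these decompositions and their height bounds uniformly correct across all the rules is where the bulk of the (formalized) work lies.
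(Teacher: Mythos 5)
Your overall strategy is the paper's: induction on derivation height with a case analysis on the last rule, bunched-context decomposition (via $\bunchDecomp{\cdot}{\cdot}{\cdot}$) to locate the distinguished occurrence, and a double application of the induction hypothesis in the contraction case. But there is a genuine gap in the height accounting, and it sits exactly in the two cases you wave through. In the \ruleref{falseL} case with the occurrence elsewhere in the bunch, your move (``replace the leaf, the axiom still applies'') produces a derivation of the \emph{same} height as the original, and nothing better is possible: the sequent $\FALSE \csemic (a \ast b) \provesCF c$ is a \ruleref{falseL} axiom, of minimal height, yet strict decrease would demand a derivation of $\FALSE \csemic (a \ccomma b) \provesCF c$ of height below that of an axiom, which cannot exist. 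The same failure occurs in the \ruleref{W;} case when the occurrence lies inside the weakened-in bunch $\Delta_2$: ``discarding the rewritten subbunch outright'' means re-weakening over the untouched premise of height $n-1$, which yields height exactly $n$, not strictly less. So the strict-decrease claim is not merely unproven by your argument --- it is false as stated (the paper's own prose statement of the lemma shares this overstatement; its Coq relation records an \emph{upper bound} on height, and the provable statement is the non-strict one).

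The fix is to drop strictness and prove that the inversion is height-\emph{nonincreasing}: a derivation of height at most $n$ yields one of height at most $n$. Your stated reason for insisting on strictness --- that it ``is precisely what will license the double use of the induction hypothesis'' --- is the misconception to correct: non-increase suffices. In the contraction case the premise has height at most $n-1 < n$, so the strong induction hypothesis applies to it; the first application returns a derivation still of height at most $n-1 < n$, so the hypothesis applies a second time; reapplying contraction gives height at most $n$, which is exactly the non-strict bound being proved. With the bound corrected, the rest of your case analysis --- including the sublemma for transporting occurrences along bunch equivalence in the \ruleref{equiv} case, and the spectator cases --- goes through essentially as you describe, and matches what the formalization does.
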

Note that this lemma would be false if we would have included an axiom rule for arbitrary formulas: there would be a proof $\varphi \ast \psi \provesCF \varphi \ast \psi$  of height 0, but the smallest proof of $\varphi\ccomma \psi \provesCF \varphi \ast \psi$ is of height 1.
That is why we have restricted the axiom rule to atomic formulas, and got the general form of the axiom rule as a derived statement (\Cref{lem:id_expansion}).

Similarly, by induction on the derivation height, we show that \ruleref{empL} and \ruleref{trueL} are invertible.
We only care about the derivation height for the purposes of induction, so we summarize the results on invertible rules in the following lemma.
\begin{lemma}
  \label{lem:inv_rules}
  The following rules are admissible:
  \begin{mathpar}
  \inferhref{$\wedge$L-inv}{andL-inv}
  {\Delta(\varphi \wedge \psi) \provesCF \chi}
  {\Delta(\varphi \csemic \psi) \provesCF \chi}  
  \and
  \inferhref{$\ast$L-inv}{sepL-inv}
  {\Delta(\varphi \ast \psi) \provesCF \chi}
  {\Delta(\varphi \ccomma \psi) \provesCF \chi}  
  \and\\
  \inferhref{$\top$L-inv}{trueL-inv}
  {\Delta(\TRUE) \provesCF \chi}
  {\Delta(\empA) \provesCF \chi}  
  \and
  \inferhref{$\EMP$L-inv}{empL-inv}
  {\Delta(\EMP) \provesCF \chi}
  {\Delta(\empM) \provesCF \chi}  
  \end{mathpar}
\end{lemma}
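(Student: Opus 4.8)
The plan is to treat the four rules nonuniformly, since they sit at different stages of the development. The rule \ruleref{sepL-inv} is nothing but \Cref{lem:sepL-inv} with its height annotation erased: given a cut-free derivation of $\Delta(\varphi \ast \psi) \provesCF \chi$ of some height $n$, \Cref{lem:sepL-inv} already hands us a derivation of $\Delta(\varphi \ccomma \psi) \provesCF \chi$, so there is nothing left to prove there. For the remaining three rules I would establish, by induction on the height of the derivation, the corresponding height-strengthened statements: if $\Delta(\varphi \wedge \psi) \provesCF \chi$ (resp.\ $\Delta(\TRUE) \provesCF \chi$, $\Delta(\EMP) \provesCF \chi$) has a derivation of height $n$, then $\Delta(\varphi \csemic \psi) \provesCF \chi$ (resp.\ $\Delta(\empA) \provesCF \chi$, $\Delta(\empM) \provesCF \chi$) has a derivation of height strictly less than $n$. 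Erasing the height bounds then yields the admissible rules exactly as stated.

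Each of these inductions follows the pattern of the proof of \Cref{lem:sepL-inv}, so I would carry the three out in parallel. Fix the distinguished occurrence marked by the hole in $\Delta(-)$ and case-split on the last rule $r$ applied in the derivation of the premise. There are three kinds of cases. (i) If $r$ is the rule introducing the formula being inverted at exactly the distinguished occurrence—\ruleref{andL} acting on that $\varphi \wedge \psi$, \ruleref{trueL} on that $\TRUE$, or \ruleref{empL} on that $\EMP$—then the premise of $r$ is literally the target sequent, and since it sits strictly below the conclusion its height is $< n$, so we are done. (ii) If $r$ acts on a position disjoint from (or above) the distinguished occurrence—this includes the axioms, notably \ruleref{falseL}, whose side formula survives the substitution untouched—I would apply the induction hypothesis to each premise in which the occurrence persists and then re-apply $r$; since the induction hypothesis strictly lowers the height of each such premise, re-applying $r$ keeps the total height below $n$. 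The only bookkeeping here is rewriting the bunched context $\Delta(-)$ so that the hole and the principal position of $r$ are simultaneously visible. (iii) Structural rules and \ruleref{equiv} that rearrange the subbunch containing the distinguished occurrence are handled by pushing the inversion through and reconstructing the rule afterwards, using the congruence $\equiv$ to absorb an inverted $\TRUE$ into $\empA$ or an inverted $\EMP$ into $\empM$ where needed.

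The main obstacle is, once again, the contraction rule \ruleref{C;}. When a contraction duplicates a subbunch containing the distinguished occurrence of $\varphi \wedge \psi$ (or $\TRUE$, or $\EMP$), the single occurrence in the conclusion becomes two occurrences in the premise, so the inversion must be performed twice. This is precisely the situation that defeats a naive structural induction, and it is the reason the statement is phrased with a strict height decrease: applying the induction hypothesis to the first copy yields a derivation whose height is still strictly below $n$, which licenses a second appeal to the induction hypothesis on the second copy, followed by a final \ruleref{C;} to re-contract. Keeping the height strictly decreasing through this double application—and doing the analogous accounting for \ruleref{W;} and for the unit cases—is the only genuinely delicate part; everything else is a routine commutation of rules.
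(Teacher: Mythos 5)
Your proposal matches the paper's own treatment: the paper proves \Cref{lem:sepL-inv} as the height-strengthened statement for \ruleref{sepL}, notes that \ruleref{andL}, \ruleref{trueL}, and \ruleref{empL} are handled ``similarly, by induction on the derivation height,'' and then states \Cref{lem:inv_rules} as the height-erased summary. Your case analysis --- including the identification of contraction as the reason structural induction fails and the strict height decrease as what licenses the double application of the induction hypothesis --- is exactly the paper's argument.
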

Let us write $\frmlI{\Delta}$ for interpretation of $\Delta$ as a formula: we substitute every occurrence of $\ccomma$ in $\Delta$ with $\ast$, every occurrence of $\empM$ with $\EMP$, and similarly for the additive connectives.
Clearly, there is a derivation from $\Delta \provesCF \chi$ to $\frmlI{\Delta} \provesCF \chi$, by repeated application of \ruleref{sepL} and \ruleref{andL}.
For the other direction we have the following.
\begin{corollary}[\coqmod{seqcalc_height}{SeqcalcHeight}{collapse_l_inv}]
\label{lem:comprehension_adm}
The following rule is admissible:
\begin{mathpar}
\infer
{\Delta'(\frmlI{\Delta}) \provesCF \chi}
{\Delta'(\Delta) \provesCF \chi}  
\end{mathpar}
\end{corollary}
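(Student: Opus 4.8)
The plan is to prove this by structural induction on the bunch $\Delta$, keeping the surrounding context $\Delta'$ (and the conclusion $\chi$) universally quantified, so that the induction hypothesis can be reapplied under enlarged contexts. The key observation is that $\frmlI{\Delta}$ differs from $\Delta$ only in that each bunch-former has been replaced by its corresponding formula-former, and that each such replacement can be undone, one layer at a time, by exactly the invertibility results already established in \Cref{lem:inv_rules}. So the corollary is really just a systematic, outside-in application of those inversions matched against the recursive descent into $\Delta$.

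For the base cases: if $\Delta$ is a single formula $\varphi$, then $\frmlI{\Delta} = \varphi = \Delta$, and the premise is literally the conclusion. If $\Delta = \empM$, then $\frmlI{\Delta} = \EMP$, and one application of \ruleref{empL-inv} turns $\Delta'(\EMP) \provesCF \chi$ into $\Delta'(\empM) \provesCF \chi$; the case $\Delta = \empA$ is symmetric, using \ruleref{trueL-inv}.

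For the inductive cases, suppose $\Delta = \Delta_1 \ccomma \Delta_2$, so that $\frmlI{\Delta} = \frmlI{\Delta_1} \ast \frmlI{\Delta_2}$. Starting from $\Delta'(\frmlI{\Delta_1} \ast \frmlI{\Delta_2}) \provesCF \chi$, a single application of \ruleref{sepL-inv} yields $\Delta'(\frmlI{\Delta_1} \ccomma \frmlI{\Delta_2}) \provesCF \chi$. I would then apply the induction hypothesis twice. First, instantiating the outer context with $\Delta''(-) \eqdef \Delta'(- \ccomma \frmlI{\Delta_2})$, the hypothesis for $\Delta_1$ converts $\Delta''(\frmlI{\Delta_1}) \provesCF \chi$ into $\Delta''(\Delta_1) \provesCF \chi$, i.e. $\Delta'(\Delta_1 \ccomma \frmlI{\Delta_2}) \provesCF \chi$. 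Second, instantiating the context with $\Delta'''(-) \eqdef \Delta'(\Delta_1 \ccomma -)$, the hypothesis for $\Delta_2$ yields $\Delta'(\Delta_1 \ccomma \Delta_2) \provesCF \chi$, which is the goal. The case $\Delta = \Delta_1 \csemic \Delta_2$ is identical after replacing $\ast$ and $\ccomma$ by $\wedge$ and $\csemic$, and \ruleref{sepL-inv} by \ruleref{andL-inv}.

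The reason the argument is short is that all of the genuine difficulty has been absorbed into \Cref{lem:inv_rules}: in particular \ruleref{sepL-inv} (\Cref{lem:sepL-inv}) already tames the interaction between the left rules and explicit contraction via the \emph{height-based induction}. The only thing to be careful about here is the bookkeeping of context composition — each use of the induction hypothesis is under a context built from $\Delta'$ together with an already-processed or a not-yet-processed sub-bunch. The step I would treat as the main (if minor) obstacle is precisely ensuring these nested contexts are formed correctly; it is essential that the inversion lemmas hold for \emph{arbitrary} bunched contexts, which is exactly what permits peeling the connectives of $\frmlI{\Delta}$ from the outside in while the recursion descends into $\Delta_1$ and $\Delta_2$. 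No step beyond this bookkeeping should present real trouble.
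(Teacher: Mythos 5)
Your proof is correct and takes essentially the same approach as the paper, whose entire proof reads ``By induction on $\Delta$, using \Cref{lem:inv_rules}.'' The details you supply --- keeping $\Delta'$ and $\chi$ universally quantified, peeling one connective with the appropriate inversion rule, and applying the induction hypothesis twice under the enlarged contexts $\Delta'(- \ccomma \frmlI{\Delta_2})$ and $\Delta'(\Delta_1 \ccomma -)$ --- are exactly the bookkeeping the paper leaves implicit.
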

\begin{proof}
  By induction on $\Delta$, using \Cref{{lem:inv_rules}}.
\end{proof}


\section{Algebraic semantics for BI}
\label{sec:alg_sem}
We interpret the BI sequent calculus in the algebraic structures known as BI algebras, which are bounded Heyting algebras with a compatible residuated monoidal structure.

\begin{definition}
  \begin{sloppypar}
    A BI algebra $\mathcal{B}$ is a tuple \sloppy${(B, \bot, \top, \wedge, \vee, \to, \EMP, \ast, \wand)}$ where
  \end{sloppypar}
  \begin{itemize}
  \item $(B, \bot, \top, \wedge, \vee, \to)$ is a bounded Heyting algebra, i.e. a bounded distributive lattice with the Heyting implication satisfying
    \[
      a \wedge b \leq c \iff a \leq b \to c
    \]
  \item ${\ast} : {\cal B} \times {\cal B} \to {\cal B}$ is a monotone commutative and associative function;
  \item $\EMP : {\cal B}$ is a unit element for $\ast$;
  \item ${\wand} : {\cal B} \times {\cal B} \to {\cal B}$ is a binary operation satisfying
    \[
      a \ast b \leq c \iff a \leq b \wand c
    \]
  \end{itemize}
\end{definition}

\begin{definition}
  Let ${\cal B}$ be an arbitrary BI algebra.
  Given an interpretation $i : \Atom \to {\cal B}$ of atomic propositions, 
  we interpret formulas of BI in ${\cal B}$ in the usual tautological way:
\begin{align*}
\Sem{\EMP} &{} = \EMP & \Sem{\TRUE} &{} = \top \\
\Sem{\varphi \ast \psi} &{} = \Sem{\varphi} \ast \Sem{\psi} &
\Sem{\varphi \wedge \psi} &{} = \Sem{\varphi} \wedge \Sem{\psi} \\
\Sem{\varphi \wand \psi} &{} = \Sem{\varphi} \wand \Sem{\psi} &
\Sem{\varphi \to \psi} &{} = \Sem{\varphi} \to \Sem{\psi} \\
\Sem{\varphi \vee \psi} &{} = \Sem{\varphi}\vee \Sem{\psi} &
\Sem{\FALSE} &{} = \bot\\
\Sem{a} &{} = i(a)
\end{align*}  
\end{definition}
\begin{theorem}[Soundness, \coqmod{seqcalc}{Seqcalc}{seq_interp_sound}]
  \label{thm:soundness}
  If $\Delta \proves \varphi$ is derivable, then $\Sem{\frmlI{\Delta}} \leq \Sem{\varphi}$ holds in any BI algebra.
\end{theorem}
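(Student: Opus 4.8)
The plan is to argue by induction on the derivation of $\Delta \proves \varphi$, showing that every rule of \Cref{fig:bi_seqcalc} preserves the target inequality. The central object is the reading of a bunch as a formula: $\Sem{\frmlI{\Delta}}$ evaluates $\Delta$ in the algebra by recursing through the tree, sending each $\ccomma$ to $\ast$, each $\csemic$ to $\wedge$, each $\empM$ to $\EMP$, and each $\empA$ to $\top$. Concretely this means $\Sem{\frmlI{\Delta_1 \ccomma \Delta_2}} = \Sem{\frmlI{\Delta_1}} \ast \Sem{\frmlI{\Delta_2}}$ and $\Sem{\frmlI{\Delta_1 \csemic \Delta_2}} = \Sem{\frmlI{\Delta_1}} \wedge \Sem{\frmlI{\Delta_2}}$, and $\Sem{\frmlI{A}} = \Sem{A}$ on a leaf formula $A$.

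Before the induction I would establish a few facts about this reading, all by induction on the structure of a bunch or of a bunched context. First, $\equiv$-invariance: if $\Delta \equiv \Delta'$ then $\Sem{\frmlI{\Delta}} = \Sem{\frmlI{\Delta'}}$; this just replays commutativity, associativity and the unit laws of $\ast$ and $\wedge$ in the algebra, and handles the \ruleref{equiv} rule. Second --- and this is the workhorse --- a congruence lemma for bunched contexts: if $\Sem{\frmlI{\Gamma_1}} \leq \Sem{\frmlI{\Gamma_2}}$ then $\Sem{\frmlI{\Delta(\Gamma_1)}} \leq \Sem{\frmlI{\Delta(\Gamma_2)}}$, proved by induction on $\Delta(-)$ using that $\ast$ and $\wedge$ are monotone. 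This lemma is exactly what lets me discharge every rule that operates deep inside a context, since the induction hypothesis only constrains the sub-bunch being rewritten.

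With these in hand the cases split into routine groups. The axioms (\ruleref{ax}, \ruleref{empR}, \ruleref{trueR}) are immediate; the left rules \ruleref{sepL}, \ruleref{andL}, \ruleref{empL}, \ruleref{trueL} are trivial because the two sides have literally equal readings (e.g.\ $\Sem{\frmlI{\varphi \ast \psi}} = \Sem{\frmlI{\varphi \ccomma \psi}}$), so congruence transports the hypothesis unchanged. The right rules \ruleref{sepR} and \ruleref{andR} follow from monotonicity of $\ast$ and from $\wedge$ being the meet; \ruleref{wandR} and \ruleref{implR} are exactly the two residuation adjunctions $a \ast b \leq c \iff a \leq b \wand c$ and $a \wedge b \leq c \iff a \leq b \to c$; \ruleref{disjR1} and \ruleref{disjR2} use $\Sem{\varphi} \leq \Sem{\varphi} \vee \Sem{\psi}$. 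Structural \ruleref{W;} and \ruleref{C;} reduce, via congruence, to $a \wedge b \leq a$ and $a \leq a \wedge a$. For \ruleref{cut} I would apply congruence to the first hypothesis $\Sem{\frmlI{\Delta'}} \leq \Sem{A}$ (note $\Sem{\frmlI{A}} = \Sem{A}$) to get $\Sem{\frmlI{\Delta(\Delta')}} \leq \Sem{\frmlI{\Delta(A)}} \leq \Sem{B}$. The two-premise elimination rules \ruleref{wandL} and \ruleref{implL} are slightly fiddly: after congruence they come down to the modus-ponens inequalities $a \ast (a \wand b) \leq b$ and $a \wedge (a \to b) \leq b$ (themselves consequences of residuation), combined with monotonicity.

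The genuinely interesting cases are \ruleref{falseL} and \ruleref{disjL}, which I expect to be the main obstacle because they force the hole to interact with the context rather than being rewritten in place. For \ruleref{falseL} there is no premise, so I must prove outright that $\Sem{\frmlI{\Delta(\FALSE)}} = \bot$; this needs the annihilation laws $\bot \ast a = \bot$ (which follows from residuation, since $\bot \leq a \wand c$ always holds) and $\bot \wedge a = \bot$, pushed through the context by induction. For \ruleref{disjL} I must show that plugging into $\Delta(-)$ preserves binary joins, i.e.\ $\Sem{\frmlI{\Delta(\varphi \vee \psi)}} = \Sem{\frmlI{\Delta(\varphi)}} \vee \Sem{\frmlI{\Delta(\psi)}}$, whence two hypotheses each bounded by $\Sem{\chi}$ yield a join bounded by $\Sem{\chi}$. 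This is precisely where the distributivity of the lattice and the fact that $\ast$ preserves joins (being a left adjoint to $\wand$) are used; both are again proved by induction on the bunched context. Once these context lemmas are in place, the induction goes through uniformly.
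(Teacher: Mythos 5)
Your proof is correct and takes essentially the same approach as the paper, whose entire stated proof is ``By induction on the derivation'': your $\equiv$-invariance and context-congruence lemmas, the residuation steps for the implication rules, join-preservation of $\ast$ and $\wedge$ for the $\vee$-left case, and the annihilation laws for the $\FALSE$-left case are exactly the details that this induction requires. Nothing is missing.
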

\begin{proof}
  By induction on the derivation.
\end{proof}

\subsection{BI algebras from monoids}
\label{sec:bi_alg_from_pcm}
In practice, a lot of BI algebras arise as predicates over a partial commutative monoid.
Let $(M, \cdot, e)$ be a partially commutative monoid; we write $x \cdot y = \bot$ if composition of $x$ and $y$ is undefined.
Then the powerset $\pset{M}$ is a (complete) Heyting algebra, and it forms a BI algebra ${(\pset{M}, \emptyset, M, \cap, \cup, \to, \mathbf{0}, \opBullet, \wandBullet)}$ with the following operators:
\begin{align*}
  \mathbf{0} & \eqdef \{ e \} \\
  X \opBullet Y & \eqdef \{ x \mult y \mid x \in X, y \in Y, x \mult y \neq \bot \}\\
  X \wandBullet Y & \eqdef \{ z \mid \All x \in X. z \mult x \neq \bot \implies z \mult x \in Y \}.
\end{align*}

\paragraph{BI algebra from the monoid of contexts.}
Let us write $\Bunch$ for the set of bunches, modulo the equivalence $\equiv$ (from \Cref{fig:bi_seqcalc}).
We can endow the set $\Bunch$ of bunches with the structure of a monoid.
Composition of two contexts $\Delta$ and $\Delta'$ is just putting them next to each other using $\ccomma$:
\[
  \Delta \cdot \Delta' \eqdef (\Delta\ccomma \Delta')
\]
then, up to equivalence of bunches, $\empM$ is the unit element.
Using the powerset construction we get a BI algebra $\pset{\Bunch}$.

This model is very much ``freely generated'' from syntax, but it is not very useful, as it does not involve any notion of provability (only equivalence of contexts).
In this next sections we are going to refine this model, in order to obtain a submodel which can be used to prove completeness and cut-elimination.


\section{Moore closures on BI algebras}
\label{sec:moore}
For cut elimination, we will be interested in subalgebras of the powerset algebra $\pset{M}$ for some partial commutative monoid $M$; specifically subalgebras arising from a particular closure operator.
For the rest of this section we fix a partial commutative monoid $M$.

\begin{definition}
  A Moore collection is a family of sets $\mathcal{C} \subseteq \pset{M}$ that is closed under arbitrary intersections:
  \[
    (\All i \in I. A_i \in \mathcal{C}) \implies \bigcap_{i \in I} A_i \in \mathcal{C}.
  \]

  If $X \in \mathcal{C}$ we say that $X$ is \emph{closed}.
\end{definition}

Alternatively, a Moore collection can be given in terms of a closure operator $\cl{-}$ satisfying the following conditions:
\begin{itemize}
\item $X \subseteq \cl{X}$;
\item monotonicity: $X \subseteq Y \implies \cl{X} \subseteq \cl{Y}$;
\item idempotence: $\cl{\cl{X}} = \cl{X}$.
\end{itemize}
Given a Moore collection $\mathcal{C}$ we define the associated closure operator as $\cl{X} = \bigcap \{ Y \in \mathcal{C} \mid X \subseteq Y \}$.
In the other direction, given a closure operator we define $\cl{-}$-closed sets as $\mathcal{C} = \{ X \mid \cl{X} = X \}$.

Some basic theory behind posets with such a closure operator is given in~\cite{everett:1944}.
Here, we recall only the results that we will be needing.
First of all, we are going to use the following rule often.
\begin{lemma}[\coqident{algebra.from_closure}{cl_adj}]
The closure operator satisfies the following adjunction rule:
\begin{mathpar}
\inferB
{X \subseteq Y \quad \mbox{ in $\pset{M}$}}
{\cl{X} \subseteq Y \quad \mbox{ in ${\cal C}$}}
\end{mathpar}
for a closed set $Y$.
\end{lemma}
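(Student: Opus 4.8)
The statement is the standard Galois adjunction (a ``unit--counit'' law) between the closure operator $\cl{-}$ and the inclusion of the Moore collection $\mathcal{C}$ into the full powerset $\pset{M}$. Since the rule \inferB is read as a biconditional, the plan is to prove the two implications separately, using only the three defining properties of a closure operator (extensivity, monotonicity, idempotence) together with the hypothesis that $Y$ is closed, i.e. $\cl{Y} = Y$.

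For the downward direction I would assume $\cl{X} \subseteq Y$ and combine it with extensivity $X \subseteq \cl{X}$, so that transitivity of $\subseteq$ yields $X \subseteq Y$ in $\pset{M}$. This half uses only that $\cl{-}$ is inflationary and requires no hypothesis on $Y$. For the upward direction I would assume $X \subseteq Y$ in $\pset{M}$ and apply monotonicity of $\cl{-}$ to obtain $\cl{X} \subseteq \cl{Y}$; rewriting $\cl{Y} = Y$ using the closedness of $Y$ then gives $\cl{X} \subseteq Y$. Here the closedness of $Y$ is essential: without it the conclusion would only be $\cl{X} \subseteq \cl{Y}$.

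There is no genuine obstacle in this lemma; each direction is a one-line consequence of extensivity and of monotonicity-plus-closedness respectively. Idempotence is not even needed for the inclusion itself — it enters only to justify that $\cl{X}$ lives in $\mathcal{C}$ (via $\cl{\cl{X}} = \cl{X}$), which is implicit in the typing of the right-hand conclusion ``in $\mathcal{C}$''. The only point demanding a little care in the formalization is tracking which ambient order each inclusion inhabits, $\pset{M}$ versus the subposet $\mathcal{C}$; but because the order on $\mathcal{C}$ is simply the restriction of set inclusion, the two readings coincide and this bookkeeping is routine.
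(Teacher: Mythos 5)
Your proof is correct: extensivity gives the downward direction, and monotonicity together with $\cl{Y} = Y$ gives the upward direction, which is exactly the standard Galois-connection argument. The paper states this lemma without an in-text proof (deferring to the Coq development), and your argument — including the observation that idempotence is needed only to place $\cl{X}$ in $\mathcal{C}$ and that the order on $\mathcal{C}$ is the restriction of inclusion on $\pset{M}$ — is precisely the intended one.
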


Since $\mathcal{C}$ is closed under intersections, $X \cap Y$ is a meet of two closed sets $X$ and $Y$.
However, given two closed sets, their union $X \cup Y$ is not always closed.
Instead, we interpret join as $\cl{X \cup Y}$.
\begin{proposition}
The collection $\mathcal{C}$ is a complete bounded lattice: the least upper bound is given by $\bigvee_{i \in I} X_i = \cl{\bigcup_{i \in I} X_i}$.
In particular, the bottom element of $\mathcal{C}$ is $\cl{\emptyset}$.
\end{proposition}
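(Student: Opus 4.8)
The plan is to verify directly that the partial order on $\mathcal{C}$ given by set inclusion (inherited from $\pset{M}$) admits all meets and all joins, and then to identify the extremal elements. Since the order is inherited, the only thing genuinely at stake is whether the suprema and infima computed in $\pset{M}$ either already land in $\mathcal{C}$ or must be corrected by the closure operator. I would organise the argument as: meets are intersections, joins are closures of unions, and the two bounds fall out as the empty meet and empty join.

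First I would handle meets. Given any family $\{X_i\}_{i \in I}$ of closed sets, its intersection $\bigcap_{i \in I} X_i$ is again closed, directly by the defining property of a Moore collection. Since the intersection is the greatest lower bound in $\pset{M}$ and it belongs to $\mathcal{C}$, it is a fortiori the greatest lower bound in $\mathcal{C}$. Thus every family has a meet, computed as intersection; in particular the empty intersection is $M \in \mathcal{C}$, which serves as the top element.

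Next, the joins, which is where the closure operator does its work, since the union $\bigcup_i X_i$ of closed sets need not itself be closed. I would show that $\cl{\bigcup_{i \in I} X_i}$ is the least upper bound. This set is closed by idempotence of $\cl{-}$. It is an upper bound because for each $j$ we have $X_j \subseteq \bigcup_i X_i \subseteq \cl{\bigcup_i X_i}$, using extensivity. For leastness, suppose $Y \in \mathcal{C}$ satisfies $X_i \subseteq Y$ for all $i$; then $\bigcup_i X_i \subseteq Y$, and since $Y$ is closed the adjunction rule yields $\cl{\bigcup_i X_i} \subseteq Y$. Hence $\cl{\bigcup_i X_i}$ is the least upper bound, establishing the stated formula.

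Finally, the bottom element is the join of the empty family, which by the formula just proved equals $\cl{\bigcup_{i \in \emptyset} X_i} = \cl{\emptyset}$, completing the bounded-lattice structure. I do not expect a genuine obstacle here; the one point that cannot be skipped is the appeal to the adjunction rule in proving leastness of the join -- without it one would only know that $\cl{\bigcup_i X_i}$ is \emph{some} upper bound, not the least one -- so that is the step I would treat with the most care.
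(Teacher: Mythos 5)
Your proof is correct: meets are plain intersections (closed by the defining property of a Moore collection), joins are $\cl{\bigcup_{i \in I} X_i}$ with leastness following from monotonicity of the closure together with $Y$ being closed (equivalently, the adjunction rule), and the two bounds arise as the empty meet $M$ and the empty join $\cl{\emptyset}$. The paper states this proposition without proof, recalling it as standard theory of closure operators on posets, and your argument is precisely the canonical one it implicitly relies on, so you have simply (and correctly) filled in the omitted details.
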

It is not necessarily the case that $\mathcal{C}$ has Heyting implication, but if it does, then we can describe it in terms of the implication on $\pset{M}$ and a dual of the closure operator.
\begin{proposition}[\coqident{algebra.from_closure}{impl_from_int}]
  For a set $X \in \pset{M}$, we write $\mathsf{int}(X)$ for the largest closed set contained in $X$:
  \[
    \mathsf{int}(X) = \bigvee \{ {Y \in \mathcal{C}} | Y \subseteq X \}.
  \]
  Then for closed sets $X$ and $Y$,
  \[
    X \to Y = \mathsf{int}(X \supset Y)
  \]
  where $X \supset Y$ denotes implication in $\pset{M}$.
\end{proposition}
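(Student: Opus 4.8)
The plan is to characterize the Heyting implication $X \to Y$ in $\mathcal{C}$ purely in terms of the order $\subseteq$ and the meet of $\mathcal{C}$, and then to observe that this meet coincides with ordinary intersection. First I would record that for closed sets the meet in $\mathcal{C}$ is set-theoretic intersection: since $\mathcal{C}$ is closed under arbitrary intersections, $X \cap Y$ is again closed, it is a lower bound of $X$ and $Y$, and it contains every closed lower bound, so it is their greatest lower bound. Consequently the defining adjunction of the Heyting implication in $\mathcal{C}$ reads, for every closed set $Z$,
\[
Z \cap X \subseteq Y \iff Z \subseteq (X \to Y).
\]

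Next I would invoke the corresponding adjunction in the ambient Heyting algebra $\pset{M}$, which holds for \emph{arbitrary} subsets $Z \subseteq M$:
\[
Z \cap X \subseteq Y \iff Z \subseteq (X \supset Y).
\]
Restricting to closed $Z$ and combining the two equivalences shows that a closed set $Z$ lies below $X \to Y$ if and only if it is contained in $X \supset Y$; that is, the closed subsets of $X \to Y$ are exactly the closed subsets of $X \supset Y$.

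From here the result is immediate in both directions. Taking $Z = X \to Y$, which is closed, shows $X \to Y \subseteq (X \supset Y)$, so $X \to Y$ is itself a closed set contained in $X \supset Y$; and the equivalence above shows it contains every such closed set. Hence $X \to Y$ is precisely the largest closed set contained in $X \supset Y$, which is by definition $\mathsf{int}(X \supset Y)$. This simultaneously yields the identity $X \to Y = \mathsf{int}(X \supset Y)$ and the well-definedness of the right-hand side as a genuine largest element.

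I expect the only real subtlety to be exactly this well-definedness of $\mathsf{int}$: for a general closure operator the union of all closed sets contained in a given set need not be closed, so the ``largest closed subset'' need not exist a priori. The hypothesis that $\mathcal{C}$ actually possesses a Heyting implication is what rescues the situation, since the argument above exhibits $X \to Y$ as that largest closed subset. I would therefore take care to \emph{derive} the existence of $\mathsf{int}(X \supset Y)$ from the assumption rather than presuppose it, and to keep the two adjunctions — one in $\mathcal{C}$, one in $\pset{M}$ — clearly separated, since the whole proof is essentially just their composition.
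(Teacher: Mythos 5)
Your proof is correct and takes essentially the same route as the paper's: both compose the Heyting adjunction in $\pset{M}$ (valid for arbitrary subsets) with the adjunction in $\mathcal{C}$ (where meet is intersection) to identify the closed sets contained in $X \supset Y$ with the closed sets contained in $X \to Y$, and then use the closedness of $X \to Y$ to conclude it is the largest such set, i.e.\ $\mathsf{int}(X \supset Y)$. Your added remark about deriving, rather than presupposing, the existence of the largest closed subset is a sensible clarification of the same argument, not a departure from it.
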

\begin{proof}
We reason as follows:
\begin{align*}
\mathsf{int}(X \supset Y) &{}=  \bigvee \{ {Z \in \mathcal{C}} | Z \subseteq X \supset Y \} \\
&{} = \bigvee \{ {Z \in \mathcal{C}} | Z \cap X \subseteq Y \} \\
&{} = \bigvee \{ {Z \in \mathcal{C}} | Z \subseteq X \to Y \} = X \to Y
\end{align*}
\end{proof}
In light of the previous propositions, we can see that some Heyting algebra structure on $\mathcal{C}$ arises from the same operations on $\pset{M}$.
Can we similarly lift the BI operations?
Let us denote the residuated monoidal structure (defined as in \Cref{{sec:bi_alg_from_pcm}}) on $\pset{M}$ as $(\mathbf{0}, \opBullet, \wandBullet)$.
In the rest of this section we describe how to lift this structure to $\mathcal{C}$.

\subsection{BI algebra structure on closed sets}
\label{sec:bi_algebra_moore}
A sufficient condition for $\mathcal{C}$ to be a BI algebra is the following.%
\begin{definition}
  We say that the closure operator is \emph{strong} if for any $X$ and $Y$
  \[
    \cl{X} \opBullet Y \subseteq \cl{X \bullet Y}
  \]
\end{definition}
If $\cl{-}$ is strong, then we define the BI operators on $\mathcal{C}$ as follows:
\begin{align*}
\EMP &{}= \cl{\mathbf{0}} \\
X \ast Y &{} = \cl{X \opBullet Y} \\
X \wand Y &{} = \cl{X \wandBullet Y}
\end{align*}
We shall verify that with these connectives $\mathcal{C}$ is a BI algebra.

\begin{proposition}[\coqident{algebra.from_closure}{wand_intro_r}, \coqident{algebra.from_closure}{wand_elim_l'}]
  \label{prop:wand_adjunction}
  There is an adjunction between $\ast$ and $\wand$:
  $$X \ast Y \subseteq Z \quad \iff \quad X \subseteq Y \wand Z.$$
\end{proposition}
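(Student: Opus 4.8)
The plan is to prove the two inclusions of the biconditional separately, treating $X, Y, Z$ as arbitrary closed sets and unfolding the definitions $X \ast Y = \cl{X \opBullet Y}$ and $Y \wand Z = \cl{Y \wandBullet Z}$. The tools I would rely on are: the closure adjunction rule ($A \subseteq B$ iff $\cl{A} \subseteq B$, for $B$ closed), the inflationarity, monotonicity and idempotence of $\cl{-}$, the strength of the closure operator ($\cl{A} \opBullet B \subseteq \cl{A \opBullet B}$), and the residuation $A \opBullet B \subseteq C \iff A \subseteq B \wandBullet C$ that already holds in $\pset{M}$.

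For the forward direction, suppose $\cl{X \opBullet Y} \subseteq Z$. Since $X \opBullet Y \subseteq \cl{X \opBullet Y}$, we get $X \opBullet Y \subseteq Z$ in $\pset{M}$; applying the $\pset{M}$-residuation yields $X \subseteq Y \wandBullet Z$, and composing with $Y \wandBullet Z \subseteq \cl{Y \wandBullet Z} = Y \wand Z$ gives $X \subseteq Y \wand Z$. Note that this direction uses only that $\cl{-}$ is inflationary, not its strength.

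For the backward direction --- the interesting one --- suppose $X \subseteq \cl{Y \wandBullet Z} = Y \wand Z$. Since $Z$ is closed, the closure adjunction rule reduces the goal $\cl{X \opBullet Y} \subseteq Z$ to showing merely $X \opBullet Y \subseteq Z$ in $\pset{M}$. I would establish this by the chain
\[
X \opBullet Y \subseteq \cl{Y \wandBullet Z} \opBullet Y \subseteq \cl{(Y \wandBullet Z) \opBullet Y} \subseteq \cl{Z} = Z,
\]
where the first inclusion is monotonicity of $\opBullet$ applied to the hypothesis, the second is precisely the strength of $\cl{-}$, the third uses monotonicity of $\cl{-}$ together with $(Y \wandBullet Z) \opBullet Y \subseteq Z$ (an instance of the $\pset{M}$-residuation, since $Y \wandBullet Z \subseteq Y \wandBullet Z$), and the last equality holds because $Z$ is closed.

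The main obstacle is the second inclusion in that chain: it is the only point where the lifted multiplicative structure fails to reduce to pointwise reasoning in $\pset{M}$, and it is exactly the hypothesis of strength that bridges the gap. Everything else is bookkeeping with the residuation on $\pset{M}$ and the defining properties of the closure operator. In particular, since the forward direction goes through for any closure operator whatsoever, strength is precisely what makes $\wand$ a genuine right adjoint to $\ast$ on $\mathcal{C}$.
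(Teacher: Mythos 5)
Your proof is correct and follows essentially the same route as the paper: the forward direction via inflationarity of $\cl{-}$ and the residuation already present in $\pset{M}$, and the backward direction via the exact same chain $X \opBullet Y \subseteq \cl{Y \wandBullet Z} \opBullet Y \subseteq \cl{(Y \wandBullet Z) \opBullet Y} \subseteq \cl{Z} = Z$ using monotonicity of $\opBullet$, strength of $\cl{-}$, and closedness of $Z$. Your closing observation that strength is needed only for the backward direction is a nice touch, but the argument itself coincides with the paper's.
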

\begin{proof}
We reason as follows.
\begin{align*}
&X \ast Y \subseteq Z {} & \text{(def. of $\ast$)}\\
&\iff \cl{X \opBullet Y} \subseteq Z & \text{($Z$ is closed)}\\
&\iff X \opBullet Y \subseteq Z & \text{(adjunction)}\\
&\iff X \subseteq Y \wandBullet Z  \\
&\,\;\implies X \subseteq \cl{Y \wandBullet Z} & \text{(def. of $\wand$)}\\
&\iff X \subseteq Y \wand Z.
\end{align*}
On the other hand,
\begin{align*}
&X \subseteq \cl{Y \wandBullet Z} & \text{(monotonicity of $\opBullet$)}\\
&\,\;\implies X \opBullet Y \subseteq \cl{Y \wandBullet Z} \opBullet Y & \text{(strength of $\cl{-}$)}\\
&\,\;\implies X \opBullet Y \subseteq \cl{(Y \wandBullet Z) \opBullet Y} \\
&\,\;\implies X \opBullet Y \subseteq \cl{Z} = Z \\
&\iff \cl{X \opBullet Y} \subseteq Z.
\end{align*}
\end{proof}

\begin{proposition}[\coqident{algebra.from_closure}{sep_comm'}, \coqident{algebra.from_closure}{sep_assoc'}, and \coqident{algebra.from_closure}{emp_sep_1}, \coqident{algebra.from_closure}{emp_sep_2}]
  $(\mathcal{C}, \ast, \EMP)$ is a commutative monoid.
\end{proposition}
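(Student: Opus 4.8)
The plan is to reduce every equation to the corresponding fact about the residuated monoid $(\mathbf{0}, \opBullet)$ on $\pset{M}$, which is already commutative, associative, and unital, and to transport each such fact along the closure. The single ingredient that makes this transport work is the \emph{strength} of $\cl{-}$, which lets us absorb a closure that appears inside a product.

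First I would establish the key stability identity: for all $X$ and $Y$,
\[
  \cl{\cl{X} \opBullet Y} = \cl{X \opBullet Y}.
\]
The inclusion $\subseteq$ follows from strength, $\cl{X} \opBullet Y \subseteq \cl{X \opBullet Y}$, by applying $\cl{-}$ to both sides, using monotonicity, and then collapsing with idempotence $\cl{\cl{X \opBullet Y}} = \cl{X \opBullet Y}$. The reverse inclusion $\supseteq$ follows from $X \subseteq \cl{X}$ together with monotonicity of $\opBullet$ and of $\cl{-}$, giving $\cl{X \opBullet Y} \subseteq \cl{\cl{X} \opBullet Y}$. Since $\opBullet$ is commutative, the symmetric identity $\cl{X \opBullet \cl{Y}} = \cl{X \opBullet Y}$ holds as well.

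Commutativity is then immediate: $X \ast Y = \cl{X \opBullet Y} = \cl{Y \opBullet X} = Y \ast X$, using only commutativity of $\opBullet$. For associativity I would rewrite both sides to a common form. Unfolding the definition gives $(X \ast Y) \ast Z = \cl{\cl{X \opBullet Y} \opBullet Z}$, and the stability identity collapses the inner closure to yield $\cl{(X \opBullet Y) \opBullet Z}$; symmetrically, using the right-hand form of the identity, $X \ast (Y \ast Z) = \cl{X \opBullet (Y \opBullet Z)}$. The two agree by associativity of $\opBullet$. The unit laws are handled the same way: $\EMP \ast X = \cl{\cl{\mathbf{0}} \opBullet X} = \cl{\mathbf{0} \opBullet X} = \cl{X} = X$, where the second equality is the stability identity, the third uses that $\mathbf{0}$ is a unit for $\opBullet$, and the last uses that $X$ is closed; the other unit law follows by commutativity.

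The main obstacle is isolating and proving the stability identity: recognizing that strength is exactly the hypothesis needed to push a product inside a closure is the crux of the argument. Once it is in hand, commutativity, associativity, and the unit laws are routine rewriting that makes no further appeal to the internal structure of $M$.
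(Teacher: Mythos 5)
Your proof is correct and follows essentially the same route as the paper: both arguments hinge on the strength of $\cl{-}$ together with monotonicity and idempotence to absorb inner closures, and on transporting commutativity, associativity, and unitality from $(\mathbf{0},\opBullet)$ on $\pset{M}$. The only difference is presentational—you factor the key step into the reusable identity $\cl{\cl{X}\opBullet Y}=\cl{X\opBullet Y}$ and then rewrite equationally, whereas the paper proves the corresponding inclusions inline for the unit law and leaves associativity as ``similar.''
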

\begin{proof}
  The commutativity of $\ast$ is evident from its definition.
  Let us verify the unit laws:
\begin{align*}
&\EMP \ast X {} = \cl{\cl{\mathbf{0}} \opBullet X} \subseteq
\cl{\cl{\mathbf{0} \opBullet X}} {}= X \\
&X {}= \mathbf{0} \opBullet X \subseteq \cl{\mathbf{0}} \opBullet X \subseteq \cl{\cl{\mathbf{0}} \opBullet X} {} = \EMP \ast X.
\end{align*}
We reason similarly for associativity of $\ast$.
\end{proof}

We can summaries these results in the following theorem.
\begin{theorem}
\label{thm:bi_alg_lifting_moore}
Let $M$ be a PCM, and let  $\cl{-}$  be a strong closure operator on $\pset{M}$, such that $\mathcal{C}$ has Heyting implication.
Then the set $\mathcal{C}$ of closed elements is a BI algebra.
\end{theorem}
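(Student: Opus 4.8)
The statement is a summary theorem, so its proof is an assembly: I would walk through the definition of a BI algebra one clause at a time, discharging each clause by citing an operation or proposition already established in this section. Concretely I would split the verification into the bounded-Heyting-algebra part and the residuated-monoidal part.

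For the bounded Heyting algebra I would start from the complete bounded lattice structure already recorded in the preceding proposition: meets are intersections, joins are $\cl{\bigcup_i X_i}$, the bottom is $\cl{\emptyset}$, and the top is $M$ itself (which is closed, being the intersection of the empty family). The hypothesis that $\mathcal{C}$ has Heyting implication, together with the characterization $X \to Y = \mathsf{int}(X \supset Y)$ and the computation in its proof, supplies the adjunction $Z \cap X \subseteq Y \iff Z \subseteq X \to Y$, which is exactly the defining law $a \wedge b \le c \iff a \le b \to c$. The distributivity demanded by the definition then comes for free, since any lattice carrying a Heyting implication is automatically distributive.

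For the residuated monoidal structure I would assemble three facts. Monotonicity of $\ast$ follows at once from monotonicity of $\opBullet$ on $\pset{M}$ and monotonicity of $\cl{-}$: from $X \subseteq X'$ and $Y \subseteq Y'$ we get $\cl{X \opBullet Y} \subseteq \cl{X' \opBullet Y'}$. Commutativity, associativity, and the fact that $\EMP = \cl{\mathbf{0}}$ is a two-sided unit are precisely the content of the commutative-monoid proposition. The adjunction $X \ast Y \subseteq Z \iff X \subseteq Y \wand Z$ is \Cref{prop:wand_adjunction}. With these four items in hand — bounded Heyting algebra, monotone $\ast$, unit $\EMP$, and the $\ast$/$\wand$ adjunction — every clause of the definition is satisfied.

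The genuine content lies not in the assembly but in the hidden role of strength of $\cl{-}$, which is what powers the harder direction of \Cref{prop:wand_adjunction} (pushing a closure through $\opBullet$ via $\cl{X} \opBullet Y \subseteq \cl{X \opBullet Y}$) as well as the unit and associativity laws; so the point I would emphasize is that the theorem is exactly as strong as those lemmas. The only thing left to check directly is that each defined operation really lands back in $\mathcal{C}$, which it does: $\EMP$, $\ast$, $\wand$, and the join are all defined through $\cl{-}$ and hence closed, while intersection and $\to$ preserve closedness by the earlier propositions. I expect this closure-preservation bookkeeping, rather than any inequality, to be the only mildly fiddly part.
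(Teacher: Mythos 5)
Your proposal is correct and takes essentially the same route as the paper: there the theorem carries no separate proof body at all (it is introduced with ``we can summarize these results in the following theorem''), its content being exactly the assembly you perform of the complete-bounded-lattice proposition, the Heyting-implication hypothesis, the commutative-monoid proposition, and the adjunction of \Cref{prop:wand_adjunction}. Your added remarks --- that distributivity is automatic in any lattice with Heyting implication, and that the strength of $\cl{-}$ is the real engine behind the adjunction and the unit laws --- are accurate and consistent with the paper's development.
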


Finally, some times it is more convenient to use an alternative condition in place of closure strength:
\begin{proposition}
  \label{prop:bi_lift_strong_closure}
  The closure operator is strong iff $X \wandBullet Y$ is closed whenever $Y$ is closed, i.e.~ $\mathcal{C}$ forms an exponential ideal.
\end{proposition}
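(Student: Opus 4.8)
The plan is to reduce both implications to two adjunctions that are already available, so that no pointwise unfolding of $\opBullet$ or $\wandBullet$ is needed. The first is the residuation adjunction $A \opBullet B \subseteq C \iff A \subseteq B \wandBullet C$ on $\pset{M}$, immediate from the definitions in \Cref{sec:bi_alg_from_pcm}. The second is the closure adjunction \coqident{algebra.from_closure}{cl_adj}, which states that $X \subseteq Z$ implies $\cl{X} \subseteq Z$ whenever $Z$ is closed. Everything then follows by combining these inclusions with monotonicity, idempotence, and extensiveness of $\cl{-}$, together with commutativity of $\opBullet$.

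For the forward direction I would assume $\cl{-}$ is strong, fix a closed $Y$, and show $\cl{X \wandBullet Y} \subseteq X \wandBullet Y$ (the reverse inclusion being extensiveness). By residuation this is equivalent to $\cl{X \wandBullet Y} \opBullet X \subseteq Y$, which I would establish by the chain
\[
  \cl{X \wandBullet Y} \opBullet X \subseteq \cl{(X \wandBullet Y) \opBullet X} \subseteq \cl{Y} = Y,
\]
where the first inclusion is strength, the second is monotonicity of $\cl{-}$ applied to the counit $(X \wandBullet Y) \opBullet X \subseteq Y$ (itself residuation), and the equality holds since $Y$ is closed.

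For the backward direction I would assume $X \wandBullet Y$ is closed whenever $Y$ is, and fix arbitrary $X, Y$. Strength $\cl{X} \opBullet Y \subseteq \cl{X \opBullet Y}$ is, by residuation, equivalent to $\cl{X} \subseteq Y \wandBullet \cl{X \opBullet Y}$. Here the set $Y \wandBullet \cl{X \opBullet Y}$ is closed by the hypothesis, since $\cl{X \opBullet Y}$ is closed. Now extensiveness gives $X \opBullet Y \subseteq \cl{X \opBullet Y}$, hence $X \subseteq Y \wandBullet \cl{X \opBullet Y}$ by residuation; applying \coqident{algebra.from_closure}{cl_adj} to this inclusion into a closed set yields $\cl{X} \subseteq Y \wandBullet \cl{X \opBullet Y}$, and a final use of residuation delivers strength.

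I do not expect a genuine obstacle: this is the standard equivalence between a nucleus being strong and its closed sets forming an exponential ideal, and both directions are short chases through the two adjunctions. The only points needing care are tracking commutativity of $\opBullet$ when passing between the two forms of residuation, and recognising that the hypothesis is exactly what licenses the use of \coqident{algebra.from_closure}{cl_adj} on the closed set $Y \wandBullet \cl{X \opBullet Y}$ in the backward direction.
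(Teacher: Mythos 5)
Your proposal is correct and matches the paper's own proof essentially step for step: your backward direction is exactly the paper's chain of reductions through residuation, the closedness of $Y \wandBullet \cl{X \opBullet Y}$, and \coqident{algebra.from_closure}{cl_adj}, while your forward direction reproduces the paper's chain establishing $\cl{X \wandBullet Y} \opBullet X \subseteq Y$. The only cosmetic difference is that where the paper invokes the closure adjunction on the closed set $Y$, you use monotonicity of $\cl{-}$ followed by $\cl{Y} = Y$, which is the same fact.
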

\begin{proof}
  Suppose that $\mathcal{C}$ is an exponential ideal w.r.t~ $\wandBullet$.
  Then we reason as follows:
  \begin{mathpar}
  \infer*
  {\cl{X} \opBullet Y \subseteq \cl{X \opBullet Y}}
  {\infer*
    {\cl{X} \subseteq Y \wandBullet \cl{X \opBullet Y}}
    {\infer*
      {X  \subseteq Y \wandBullet \cl{X \opBullet Y} \and \mbox{\emph{(the r.h.s.~is closed)}}}
      {X \opBullet Y \subseteq \cl{X \opBullet Y}}}}
  \end{mathpar}
Hence, $\cl{-}$ is strong.

For the other direction, if $Y$ is closed, then
\begin{align*}
\cl{X \wandBullet Y} \subseteq X \wandBullet Y 
  \iff{}&  \cl{X \wandBullet Y} \opBullet X \subseteq Y\\
  \Longleftarrow{}\,\;& {}\cl{(X \wandBullet Y) \opBullet X} \subseteq Y \\
  \iff{}& (X \wandBullet Y) \opBullet X \subseteq Y
\end{align*}
\end{proof}

\paragraph{A remark on (im)predicativity.}
In practice, we want to start with some collection $\mathcal{B} \subseteq \pset{M}$ of sets, and generate $\mathcal{C}$ freely from arbitrary intersections of elements of $\mathcal{B}$ (think of generating a topology from a closed basis).
Then $\mathcal{C}$ is a Moore collection and the associated closure operator can be given as $\cl{X} = \bigcap \{ Y \in \mathcal{C} \mid X \subseteq Y \}$.
Unfortunately, this definition is impredicative (we define an element of $\mathcal{C}$ by quantifying over elements of $\mathcal{C}$), which, when formalized in type theory, increases the universe level.

That means that we cannot use the closure operator to \emph{define} the set $\mathcal{C}$, i.e. the set $\{ X \mid X = \cl{X} \}$ will have a higher universe level than $\mathcal{C}$.
To circumvent this, we can instead define the closure operator equivalently by quantifying not over all the closed sets, but only over the basic closed sets: $\cl{X} = \bigcap \{ Y \in \mathcal{B} \mid X \subseteq Y\}$.
Then we can define $\mathcal{C}$ to be the set of elements satisfying $X = \cl{X}$.


\section{Cut-elimination via a syntactic model}
\label{sec:cutelim_thm}
In this section we construct a special BI algebra ${\cal C} \subseteq \pset{\Bunch}$ that has the following property: if $\Sem{\varphi} \leq \Sem{\psi}$ holds in ${\cal C}$, then $\varphi \provesCF \psi$.
By composing this with the soundness theorem, we will obtain the cut-elimination result.

\subsection{Principal closed sets}
We are going to construct $\mathcal{C}$ as a particular Moore collection on $\pset{\Bunch}$.
To define when a predicate $X$ is closed (e.g. when $X \in \mathcal{C}$), we start with \emph{principal closed elements}, and generate $\mathcal{C}$ as families of intersections of principal closed sets.

\begin{definition}
  A \emph{principal} closed set is a set of the form:
\[
  \outI{\varphi} = \{ \Delta \mid \Delta \provesCF \varphi \}
\]
for a formula $\varphi$.
\end{definition}

We can then generate closed sets by closing the collection of principal closed sets under arbitrary intersections:
\[
  \cl{X} \eqdef \bigcap\{ \outI{\varphi} \mid X \subseteq \outI{\varphi}\}
  = \bigcap\{ \outI{\varphi} \mid \All \Delta \in X. \Delta \provesCF \varphi \}.
\]
We then define the collection $\mathcal{C}$ of closed sets as
\[
  \mathcal{C} \eqdef \{X \mid X = \cl{X} \}.
\]
Then every element of $\mathcal{C}$ can be written as some intersection $\bigcap_{i \in I}\outI{\varphi_i}$.

Let us briefly describe some useful properties of closed sets:
\begin{proposition}[\coqident{cutelim}{C_inhabited}, \coqident{cutelim}{C_weaken}, \coqident{cutelim}{C_contract}, and \coqident{cutelim}{C_collapse}]
  \label{lem:closed_set_laws}
  Let $X$ be a closed set.
  Then the following holds.
  \begin{enumerate}
  \item $\FALSE \in X$;
  \item $\Delta \in X \implies (\Delta\csemic\Delta') \in X$;
  \item $(\Delta\csemic\Delta) \in X \implies \Delta \in X$;
  \item $\Delta \in X \iff \frmlI{\Delta} \in X$.
  \end{enumerate}
\end{proposition}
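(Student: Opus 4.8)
The plan is to reduce each of the four clauses to a statement about cut-free provability by exploiting the representation of a closed set as an intersection of principal closed sets. Since $X \in \mathcal{C}$, we may write $X = \bigcap_{i \in I} \outI{\varphi_i}$, so that membership unfolds to
\[
  \Delta \in X \iff \All i \in I.\; \Delta \provesCF \varphi_i .
\]
With this equivalence fixed, placing a bunch into $X$ amounts to deriving it (cut-free) against every $\varphi_i$, and a membership hypothesis hands us derivability against every $\varphi_i$. Each clause then follows by applying a single rule of the sequent calculus pointwise in $i$.

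First I would dispatch the three structural clauses. For (1), the axiom \ruleref{falseL} taken with the hole context $[-]$ gives $\FALSE \provesCF \varphi_i$ for every $i$, so $\FALSE \in X$. For (2), a hypothesis $\Delta \in X$ yields $\Delta \provesCF \varphi_i$ for each $i$; applying the weakening rule \ruleref{W;} at the top level gives $\Delta \csemic \Delta' \provesCF \varphi_i$, whence $\Delta \csemic \Delta' \in X$. For (3), from $\Delta \csemic \Delta \in X$ we get $\Delta \csemic \Delta \provesCF \varphi_i$ for each $i$, and the contraction rule \ruleref{C;} at the top level delivers $\Delta \provesCF \varphi_i$, so $\Delta \in X$.

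For the biconditional (4) I would treat the two directions separately, again pointwise in $i$. In the forward direction, $\Delta \in X$ gives $\Delta \provesCF \varphi_i$, and repeatedly applying the left rules \ruleref{sepL} and \ruleref{andL} (together with \ruleref{empL} and \ruleref{trueL} for the units) collapses the bunch $\Delta$ into the single formula $\frmlI{\Delta}$, yielding $\frmlI{\Delta} \provesCF \varphi_i$ and hence $\frmlI{\Delta} \in X$. In the backward direction, $\frmlI{\Delta} \in X$ gives $\frmlI{\Delta} \provesCF \varphi_i$, and \Cref{lem:comprehension_adm} applied with the empty outer context recovers $\Delta \provesCF \varphi_i$, so $\Delta \in X$.

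Once the intersection representation is in place, the proof is essentially bookkeeping: each clause is the pointwise image of an already-available rule --- \ruleref{falseL}, weakening, contraction, the collapse by the multiplicative and additive left rules, and its converse \Cref{lem:comprehension_adm}. I therefore expect no genuine obstacle; the only points needing mild care are checking that the structural rules are applied at the trivial (hole) context and that the derivability claims are quantified uniformly over $i \in I$. The mildly delicate clause is (4), as it is the sole equivalence and must invoke both the collapse and, in the other direction, \Cref{lem:comprehension_adm}.
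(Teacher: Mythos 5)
Your proposal is correct and follows essentially the same route as the paper: represent the closed set $X$ as an intersection $\bigcap_{i\in I}\outI{\varphi_i}$ of principal closed sets, unfold membership pointwise, and discharge each clause by \ruleref{falseL}, \ruleref{W;}, \ruleref{C;}, and (for clause 4) the collapse via the left rules in one direction and \Cref{lem:comprehension_adm} in the other. The paper's proof is just a terser version of the same argument, spelling out only the first two clauses and noting the remaining ones are "similar".
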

\begin{proof}
  For the first point, observe that $\FALSE \provesCF \varphi$, so $\FALSE \in \outI{\varphi}$ for any formula $\varphi$.

  For the second point, let $X$ be $\bigcap_{i \in I}\outI{\varphi_i}$.
  Then, $\Delta \in X \iff \All i \in I. \Delta \provesCF \varphi_i$.
  If $\Delta \in X$, then, using weakening:
  \begin{mathpar}
    \infer
    {\Delta \provesCF \varphi_i}
    {\Delta\csemic \Delta' \provesCF \varphi_i}
  \end{mathpar}
  for any $i \in I$.
  Hence, $\Delta; \Delta' \in X$.

  Similarly for the other two cases, using contraction, and the left rules, and \Cref{lem:comprehension_adm}.
\end{proof}

As an example of a calculation in $\mathcal{C}$, we show the following characterization of meets.
\begin{proposition}[\coqident{cutelim}{C_and_eq}]
\label{prop:meet_in_C}
The following holds in $\mathcal{C}$:
\[
  X \wedge Y = \cl{\{\Delta\csemic \Delta' \mid \Delta \in X, \Delta' \in Y \}}
\]
\end{proposition}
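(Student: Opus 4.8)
The plan is to prove the claimed equality by two inclusions, after first observing that the left-hand side is nothing but set-theoretic intersection: since $\mathcal{C}$ is closed under arbitrary intersections, the meet of two closed sets $X$ and $Y$ coincides with $X \cap Y$ (this is exactly the lattice structure on $\mathcal{C}$ recorded earlier). Writing $S \eqdef \{\Delta\csemic \Delta' \mid \Delta \in X,\ \Delta' \in Y\}$ for the generating set on the right, the goal becomes $X \cap Y = \cl{S}$, and the two structural properties of closed sets from \Cref{lem:closed_set_laws} — weakening (item 2) and contraction (item 3) — will each handle one inclusion.

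For $\cl{S} \subseteq X \cap Y$, I would note that $X \cap Y$ is itself closed, so by the closure adjunction rule it suffices to prove the plain set inclusion $S \subseteq X \cap Y$ before taking closure. So take a generator $\Delta \csemic \Delta' \in S$, with $\Delta \in X$ and $\Delta' \in Y$. From $\Delta \in X$, weakening (\Cref{lem:closed_set_laws}, item 2) gives $\Delta \csemic \Delta' \in X$. From $\Delta' \in Y$, weakening gives $\Delta' \csemic \Delta \in Y$, and since bunches are taken modulo $\equiv$ and $\csemic$ is commutative up to $\equiv$, we have $\Delta' \csemic \Delta = \Delta \csemic \Delta'$, whence $\Delta \csemic \Delta' \in Y$. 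Thus $\Delta \csemic \Delta' \in X \cap Y$, finishing this direction.

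For $X \cap Y \subseteq \cl{S}$, the key idea is to feed a single element into both coordinates of $S$ and then contract. Given $\Delta \in X \cap Y$, the diagonal $\Delta \csemic \Delta$ lies in $S$ (taking both copies to be $\Delta$), hence $\Delta \csemic \Delta \in \cl{S}$. Since $\cl{S}$ is closed, contraction (\Cref{lem:closed_set_laws}, item 3) yields $\Delta \in \cl{S}$, as required.

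No single step here is a genuine obstacle; the whole difficulty is bookkeeping — recognizing that the meet is ordinary intersection, reducing the harder-looking inclusion to a pre-closure containment via the adjunction rule, and correctly pairing the two directions with the two structural laws (weakening with the ``enlarge'' direction, contraction with the diagonal). The one subtle point worth flagging is the appeal to commutativity of $\csemic$ up to $\equiv$ in the first inclusion, which is where working modulo bunch-equivalence is quietly essential.
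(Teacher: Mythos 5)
Your proposal is correct and takes essentially the same route as the paper: both directions are handled identically, with the diagonal $\Delta \csemic \Delta$ plus contraction for $X \cap Y \subseteq \cl{\{\Delta\csemic \Delta' \mid \Delta \in X, \Delta' \in Y \}}$, and reduction (via closedness of $X \cap Y$) to the pre-closure inclusion plus weakening for the reverse. The only difference is presentational: you make explicit the appeal to commutativity of $\csemic$ up to $\equiv$ when deriving $\Delta \csemic \Delta' \in Y$ from $\Delta' \in Y$, a detail the paper's proof leaves implicit.
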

\begin{proof}
  For the inclusion from left to right: suppose that $\Delta \in X \cap Y$.
  Then,
  \begin{align*}
    (\Delta\csemic \Delta) \in{} & \{\Delta\csemic \Delta' \mid \Delta \in X, \Delta' \in Y \} \\
    \subseteq{} & \cl{\{\Delta\csemic \Delta' \mid \Delta \in X, \Delta' \in Y \}}.
  \end{align*}
  From \Cref{lem:closed_set_laws} we get
  \[
    \Delta \in \cl{\{\Delta\csemic \Delta' \mid \Delta \in X, \Delta' \in Y \}}.
  \]

  For the inclusion from right to left: it suffices to show:
  \[
    \{\Delta\csemic \Delta' \mid \Delta \in X, \Delta' \in Y \} \subseteq X \cap Y.
  \]
  If $\Delta \in X$ and $\Delta' \in Y$, then $\Delta\csemic \Delta' \in X \cap Y$ by \Cref{lem:closed_set_laws}.
\end{proof}

\subsection{BI structure.}
In order to apply \Cref{thm:bi_alg_lifting_moore} and obtain a BI algebra structure on $\mathcal{C}$, we have to ensure that
the Heyting implication of closed sets is closed, and that $X \wandBullet Y \in \mathcal{C}$ whenever $Y \in \mathcal{C}$.

For the following lemma we will use the fact that the \ruleref{wandR} is invertible and \Cref{{lem:comprehension_adm}}.
\begin{lemma}[\coqident{cutelim}{wand_is_closed}]
  \label{lem:wand_eq}
  \begin{sloppypar}
    If $Y$ is closed, then so is ${X \wandBullet Y}$; furthermore, it can be described as:
  \end{sloppypar}
  \[
    X \wandBullet Y = \{ \Delta \mid \All \Delta' \in X. (\Delta\ccomma\Delta')\in Y \}.
  \]
\end{lemma}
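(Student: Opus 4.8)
The plan is to prove the two claims together: I will show that the set
$S \eqdef \{ \Delta \mid \All \Delta' \in X.\ (\Delta\ccomma\Delta') \in Y \}$
coincides with $X \wandBullet Y$, and that $S$ is closed. Since the right-hand side $S$ is defined by the unfolding of $\wandBullet$ on the powerset $\pset{\Bunch}$, closedness of $X \wandBullet Y$ will follow once I establish the set-equality, because it is more convenient to argue that $S$ itself is closed.

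First I would unfold the definition of $\wandBullet$ from \Cref{sec:bi_alg_from_pcm}. Recall that the monoid composition on $\Bunch$ is $\Delta \cdot \Delta' = (\Delta \ccomma \Delta')$ and that this composition is always defined (there is no $\bot$ for the bunch monoid), so the side condition $z \mult x \neq \bot$ is vacuously true. Hence
\[
  X \wandBullet Y = \{ \Delta \mid \All \Delta' \in X.\ (\Delta\ccomma\Delta') \in Y \} = S,
\]
which is exactly the claimed description. This disposes of the second half of the statement essentially by definition, modulo the trivial observation about totality of the monoid.

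It remains to show $S$ is closed, i.e. $\cl{S} \subseteq S$ (the reverse inclusion $S \subseteq \cl{S}$ is automatic). By the adjunction rule (\coqident{algebra.from_closure}{cl_adj}), it suffices to take $\Delta \in \cl{S}$ and a fixed $\Delta' \in X$, and show $(\Delta\ccomma\Delta') \in Y$. The key move is to reduce membership in the closed set $Y$ to a uniform provability statement: since $Y$ is closed, $Y = \bigcap_{i} \outI{\varphi_i}$ for some family of formulas, so it is enough to fix one $\varphi$ with $Y \subseteq \outI{\varphi}$ and prove $(\Delta\ccomma\Delta') \provesCF \varphi$. The plan is to transport the problem into a single principal closed set: for fixed $\Delta'$ and $\varphi$, I claim the set $T_{\Delta',\varphi} \eqdef \{ \Theta \mid (\Theta\ccomma\Delta') \provesCF \varphi \}$ is itself closed, which lets me apply the closure hypothesis on $\Delta \in \cl{S}$.

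The heart of the argument — and the step I expect to be the main obstacle — is establishing that $T_{\Delta',\varphi}$ is a principal (hence closed) set, and this is precisely where invertibility of \ruleref{wandR} enters. The idea is that $(\Theta\ccomma\Delta') \provesCF \varphi$ should be equivalent to $\Theta \provesCF (\frmlI{\Delta'} \wand \varphi)$: the forward direction uses \ruleref{wandR} together with \Cref{lem:comprehension_adm} to collapse $\Delta'$ into the formula $\frmlI{\Delta'}$, and the backward direction uses \ruleref{wandR-inv} from \Cref{lem:wand_inv_adm} followed by re-expanding $\frmlI{\Delta'}$ back into $\Delta'$ via repeated \ruleref{sepL} (the easy direction noted before \Cref{lem:comprehension_adm}). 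This shows $T_{\Delta',\varphi} = \outI{\frmlI{\Delta'} \wand \varphi}$, a principal closed set. Then, since every element of $S$ lies in $T_{\Delta',\varphi}$ (for every $\Delta' \in X$ and every $\varphi$ with $Y \subseteq \outI{\varphi}$) and $T_{\Delta',\varphi}$ is closed, \coqident{algebra.from_closure}{cl_adj} gives $\cl{S} \subseteq T_{\Delta',\varphi}$, so $\Delta \in \cl{S}$ yields $(\Delta\ccomma\Delta') \provesCF \varphi$. As this holds for all relevant $\varphi$, we get $(\Delta\ccomma\Delta') \in Y$, hence $\Delta \in S$, completing the proof.
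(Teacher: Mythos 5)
Your proposal follows essentially the same route as the paper's proof: the claimed description of $X \wandBullet Y$ is immediate because composition in the bunch monoid is total, and closedness is obtained by identifying, for each $\Delta' \in X$ and each principal set $\outI{\varphi} \supseteq Y$, the set $\{\Theta \mid (\Theta\ccomma\Delta') \provesCF \varphi\}$ with the principal closed set $\outI{\frmlI{\Delta'} \wand \varphi}$. The paper does exactly this, only packaged as the single identity $X \wandBullet Y = \bigcap_{(\Delta',j)} \outI{\frmlI{\Delta'} \wand \varphi_j}$ where $Y = \bigcap_j \outI{\varphi_j}$; your version via $\cl{S} \subseteq T_{\Delta',\varphi}$ and the adjunction is the same computation.

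One slip you should fix: you have attached the two bunch/formula conversion steps to the wrong directions of the key equivalence. To go from $(\Theta\ccomma\Delta') \provesCF \varphi$ to $\Theta \provesCF \frmlI{\Delta'} \wand \varphi$ one first needs $(\Theta\ccomma\frmlI{\Delta'}) \provesCF \varphi$, i.e.\ to replace the \emph{bunch} $\Delta'$ by the \emph{formula} $\frmlI{\Delta'}$; that is the ``easy'' direction done by repeated \ruleref{sepL}/\ruleref{andL}, not by \Cref{lem:comprehension_adm}. Conversely, in the backward direction, after \ruleref{wandR-inv} yields $(\Theta\ccomma\frmlI{\Delta'}) \provesCF \varphi$, one must replace the formula $\frmlI{\Delta'}$ by the bunch $\Delta'$, and that is precisely \Cref{lem:comprehension_adm} — repeated \ruleref{sepL} cannot do it, since that rule derives the formula form \emph{from} the bunch form, the opposite of what is needed there. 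With these two citations swapped, your argument coincides with the paper's proof.
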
 
\begin{proof}
  It is straightforward to check that $X \wandBullet Y$ defined as above is indeed a right adjoint to the $\opBullet$ operation.
  Thus, it remains to show that $X \wandBullet Y$ is closed.

  Since $Y$ is closed, it can be written as an intersection of some family of principal closed sets:
   $Y = \bigcap_{j \in J} \outI{\varphi_j}$.
  Then, we claim,
  \[X \wandBullet Y = \bigcap_{(\Delta', j) \in X \times J} \outI{\frmlI{\Delta'} \wand \varphi_j}.\]
  
  \begin{sloppypar}
    \emph{Direction from left to right:} let $\Delta \in X \wandBullet Y$, and let ${(\Delta', j) \in X \times J}$.
    We are to show: ${\Delta \provesCF \frmlI{\Delta'} \wand \varphi_j}$.
    We argue as follows:
    \begin{mathpar}
      \infer*
      {\infer*
        {\Delta\ccomma \Delta' \provesCF \varphi_j}
        {\Delta\ccomma \frmlI{\Delta'} \provesCF \varphi_j}}
      {\Delta \provesCF \frmlI{\Delta'} \wand \varphi_j}
    \end{mathpar}
  \end{sloppypar}

  \emph{Direction from right to left:} suppose that
  $$\Delta \in \bigcap_{(\Delta', j) \in X \times J} \outI{\frmlI{\Delta'} \wand \varphi_j},$$
  and let $\Delta' \in X$.
  We are to show $\Delta\ccomma \Delta' \provesCF \varphi_j$ for any $j \in J$.
  By the assumption we have
  \[
    \Delta \provesCF \frmlI{\Delta'} \wand \varphi_j.
  \]
  We then reason similarly as in the previous direction, but using inversions \Cref{lem:wand_inv_adm,lem:comprehension_adm}:
  \begin{mathpar}
  \infer*
  {\infer*
    {\Delta \provesCF \frmlI{\Delta'} \wand \varphi_j}
    {\Delta\ccomma \frmlI{\Delta'} \provesCF \varphi_j}}
  {\Delta\ccomma \Delta' \provesCF \varphi_j}
\end{mathpar}
\end{proof}

We can give a similar characterization of the Heyting implication in $\mathcal{C}$:
\begin{proposition}[\coqident{cutelim}{has_heyting_impl}]
  For every closed $X, Y$, the Heyting implication is closed and
  can be described as:
  \[X \to Y = \{ \Delta \mid \All \Delta' \in X,\, (\Delta\csemic \Delta') \in Y\}.\]
\end{proposition}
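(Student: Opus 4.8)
The plan is to follow the template of \Cref{lem:wand_eq} verbatim, replacing the multiplicative apparatus ($\ccomma$, $\opBullet$, $\wand$, \ruleref{wandR}) by the additive one ($\csemic$, $\wedge$, $\to$, \ruleref{implR}). Write $W \eqdef \{ \Delta \mid \All \Delta' \in X.\, (\Delta\csemic \Delta') \in Y\}$ for the candidate set. There are two obligations: (i) $W$ is the right adjoint to ``meet with $X$'', i.e. $Z \subseteq W \iff Z \wedge X \subseteq Y$ for closed $Z$; and (ii) $W$ is closed. Given both, $W$ satisfies the defining universal property of the Heyting implication in $\mathcal{C}$ and lies in $\mathcal{C}$, so the implication exists, is closed, and equals $W$. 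As in \Cref{lem:wand_eq}, obligation (i) is quick and obligation (ii) is where the work lies.

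For the adjunction, I would compute using \Cref{prop:meet_in_C}: the meet is $Z \wedge X = \cl{\{\Delta\csemic\Delta' \mid \Delta \in Z,\, \Delta' \in X\}}$. Since $Y$ is closed, the closure-adjunction rule (\coqident{algebra.from_closure}{cl_adj}) lets me strip the outer closure, so that $Z \wedge X \subseteq Y$ is equivalent to $\{\Delta\csemic\Delta' \mid \Delta \in Z,\, \Delta' \in X\} \subseteq Y$. Unfolding, this says precisely that every $\Delta \in Z$ satisfies $\All \Delta' \in X.\,(\Delta\csemic\Delta')\in Y$, which is exactly $Z \subseteq W$. This establishes the adjunction.

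To prove $W$ closed, I would reproduce the computation of \Cref{lem:wand_eq}. Writing $Y$ as an intersection of principal closed sets $Y = \bigcap_{j\in J}\outI{\varphi_j}$, I claim
\[
  W = \bigcap_{(\Delta',j)\in X\times J} \outI{\frmlI{\Delta'} \to \varphi_j},
\]
which exhibits $W$ as an intersection of principal closed sets, hence closed. For the inclusion $\subseteq$, given $\Delta\in W$ and $(\Delta',j)$, from $\Delta\csemic\Delta'\in Y\subseteq\outI{\varphi_j}$ the left rules (\ruleref{andL}, \ruleref{sepL}, etc., collapsing the bunch $\Delta'$ into the formula $\frmlI{\Delta'}$) give $\Delta\csemic\frmlI{\Delta'}\provesCF\varphi_j$, and a single application of \ruleref{implR} yields $\Delta\provesCF\frmlI{\Delta'}\to\varphi_j$. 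For $\supseteq$, from $\Delta\provesCF\frmlI{\Delta'}\to\varphi_j$ I invert with \ruleref{implR-inv} (\Cref{lem:wand_inv_adm}) to obtain $\Delta\csemic\frmlI{\Delta'}\provesCF\varphi_j$, and then \Cref{lem:comprehension_adm} recovers $\Delta\csemic\Delta'\provesCF\varphi_j$ for every $j\in J$, i.e. $\Delta\csemic\Delta'\in Y$; as $\Delta'\in X$ was arbitrary, $\Delta\in W$.

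The step I expect to demand the most care is the closedness argument, and specifically the inclusion $\supseteq$: it hinges on \ruleref{implR} being invertible in the cut-free system, so that the passage $\Delta\provesCF\frmlI{\Delta'}\to\varphi_j \rightsquigarrow \Delta\csemic\frmlI{\Delta'}\provesCF\varphi_j \rightsquigarrow \Delta\csemic\Delta'\provesCF\varphi_j$ composes without ever reintroducing a cut. All the real difficulty has, however, already been discharged in \Cref{lem:wand_inv_adm} and \Cref{lem:comprehension_adm}; by contrast the adjunction part is entirely routine once \Cref{prop:meet_in_C} and the closure-adjunction rule are in hand.
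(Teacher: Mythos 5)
Your proposal is correct and takes essentially the same route as the paper: the paper likewise verifies the adjunction via \Cref{prop:meet_in_C} and establishes closedness by repeating the argument of \Cref{lem:wand_eq} additively (writing $Y$ as an intersection of principal closed sets, and exhibiting the candidate set as $\bigcap_{(\Delta',j)} \outI{\frmlI{\Delta'} \to \varphi_j}$ using the invertibility of \ruleref{implR} together with \Cref{lem:comprehension_adm}). Your write-up simply spells out the details that the paper compresses into the phrase ``the proof of closedness follows the proof of \Cref{lem:wand_eq}.''
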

\begin{proof}
  Using \Cref{{prop:meet_in_C}}, it is straightforward to check that $X \to Y$ as defined above is a right adjoint to the meet operation $\cap$.
  The proof of closedness follows the proof of \Cref{lem:wand_eq}.
\end{proof}

To sum up, by \Cref{thm:bi_alg_lifting_moore} we have a BI algebra $\mathcal{C}$ in which operations are defined as follows:

\begin{mathparpagebreakable}
  \EMP = \cl{\{\empM\}} \and
  \top = \Bunch \and\\
  \bot = \cl{\emptyset} \and
  X \vee Y = \cl{X \cup Y}
  \and
  X \ast Y = \cl{{\{\Delta\ccomma \Delta' \mid \Delta \in X, \Delta' \in Y \}}}
  \and
  X \wedge Y = \cl{\{\Delta\csemic \Delta' \mid \Delta \in X, \Delta' \in Y \}}
  \and
  X \wand Y = \{ \Delta \mid \All \Delta' \in X. (\Delta\ccomma\Delta')\in Y \}
  \and
  X \to Y  = \{ \Delta \mid \All \Delta' \in X. (\Delta\csemic\Delta')\in Y \}  
\end{mathparpagebreakable}

\subsection{Fundamental property of $\mathcal{C}$}
We can interpret formulas in the model ${\cal C}$ by picking the interpretation of atomic propositions to be $\Sem{a} = \outI{a}$.
Now we are ready to prove the main theorem: if $\Sem{\varphi} \subseteq \Sem{\psi}$, then $\varphi \provesCF \psi$.
To obtain this, we prove the following property, due to Okada~\cite{okada:99}.
\begin{lemma}[\coqident{cutelim}{okada_property}]
  \label{lem:fundamental}
  For any formula $\varphi$,
  \[
    \varphi \in \Sem{\varphi} \subseteq \outI{\varphi}
  \]
  (where the leftmost instance of $\varphi$ is a bunch consisting of a single leaf with the formula $\varphi$).
\end{lemma}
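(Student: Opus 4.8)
The plan is to establish both halves, $\varphi \in \Sem{\varphi}$ and $\Sem{\varphi} \subseteq \outI{\varphi}$, simultaneously by induction on the structure of $\varphi$. The single fact that drives every case is that each closed set is an intersection $\bigcap_{i} \outI{\chi_i}$ of principal closed sets, so that $\Delta \in X$ unfolds to the family of cut-free provability statements $\Delta \provesCF \chi_i$. With this in hand, I expect each connective to split cleanly: the inclusion $\Sem{\varphi} \subseteq \outI{\varphi}$ is discharged by the corresponding right-rule, and the membership $\varphi \in \Sem{\varphi}$ by the corresponding left-rule, in both cases together with the structural rules and the closed-set laws of \Cref{lem:closed_set_laws}.

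For the base cases this is immediate. For an atom $a$ we have $\Sem{a} = \outI{a}$ by definition, and $a \in \outI{a}$ since $a \provesCF a$ by \ruleref{ax}. For $\TRUE$, \ruleref{trueR} together with weakening gives $\Bunch \subseteq \outI{\TRUE}$; for $\FALSE$, \ruleref{falseL} shows $\FALSE \provesCF \chi$ for every $\chi$, hence $\FALSE \in \cl{\emptyset}$, while $\cl{\emptyset} \subseteq \outI{\FALSE}$ is immediate from the definition of $\cl{-}$; and for $\EMP$ I would use \ruleref{empR} and \ruleref{empL}. For $\varphi \ast \psi$ and $\varphi \vee \psi$, whose interpretations are closures of a generating set, the inclusion into $\outI{\cdot}$ reduces via the closure-adjunction rule to an inclusion of the underlying generators, which \ruleref{sepR} (resp.\ \ruleref{disjR1}/\ruleref{disjR2}) discharges using $\Sem{\varphi} \subseteq \outI{\varphi}$ and $\Sem{\psi} \subseteq \outI{\psi}$; conversely $\varphi \ast \psi \in \Sem{\varphi \ast \psi}$ holds because $\varphi \ccomma \psi$ lies in the generating set and \ruleref{sepL} promotes it to $\varphi \ast \psi$ inside each principal generator (and dually \ruleref{disjL} for $\vee$). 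For $\varphi \wedge \psi$ the meet is simply $\Sem{\varphi} \cap \Sem{\psi}$: \ruleref{andR} followed by contraction gives the inclusion, and weakening followed by \ruleref{andL} gives membership on each side.

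The delicate cases are $\to$ and $\wand$, where I would rely on the explicit pointwise descriptions of the closed wand and implication from \Cref{lem:wand_eq} and its Heyting counterpart --- these being exactly where the invertibility results \Cref{lem:wand_inv_adm,lem:inv_rules} and the collapse lemma \Cref{lem:comprehension_adm} have already been spent. Writing $\Sem{\varphi} \wand \Sem{\psi} = \{\Delta \mid \All \Delta' \in \Sem{\varphi}.\ (\Delta \ccomma \Delta') \in \Sem{\psi}\}$, the inclusion $\Sem{\varphi} \wand \Sem{\psi} \subseteq \outI{\varphi \wand \psi}$ is obtained by instantiating the quantified $\Delta'$ at the single-leaf bunch $\varphi$ --- legitimate precisely because the induction hypothesis supplies $\varphi \in \Sem{\varphi}$ --- which yields $\Delta \ccomma \varphi \in \Sem{\psi} \subseteq \outI{\psi}$ and hence $\Delta \provesCF \varphi \wand \psi$ by \ruleref{wandR}. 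In the other direction, $\varphi \wand \psi \in \Sem{\varphi} \wand \Sem{\psi}$ follows from \ruleref{wandL}, fed by $\Delta' \provesCF \varphi$ and $\psi \provesCF \chi_j$ drawn from the two halves of the induction hypothesis. The implication case is identical after replacing $\ccomma, \wand$ by $\csemic, \to$ and \ruleref{wandR}/\ruleref{wandL} by \ruleref{implR}/\ruleref{implL}.

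The main obstacle is organizational rather than deep: by design, all the genuine difficulty of cut elimination --- in particular the troublesome interaction of contraction with cut --- has been pushed into the construction of $\mathcal{C}$ and into the invertibility and closedness lemmas, so it does not resurface here. What still requires care is the coordinated use of both halves of the induction hypothesis within a single case (the $\to$ and $\wand$ cases need $\varphi \in \Sem{\varphi}$ to instantiate the universal quantifier and $\Sem{\psi} \subseteq \outI{\psi}$ to extract provability), and the constant translation between membership in a closed set and cut-free provability against every generator, which is what makes the left-rules applicable underneath the closure operator.
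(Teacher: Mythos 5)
Your proposal is correct and follows essentially the same route as the paper's proof: a simultaneous structural induction on $\varphi$, using the closure-adjunction to reduce inclusions $\cl{X} \subseteq \outI{\varphi}$ to inclusions of generators, the closed-set laws of \Cref{lem:closed_set_laws} to apply left and structural rules underneath the closure, and the pointwise characterizations of $\wand$ and $\to$ (with the quantifier instantiated at the single-leaf bunch $\varphi$, justified by the membership half of the induction hypothesis) for the two implication cases. The paper's proof matches yours case by case, including the \ruleref{wandL} step fed by $\Delta' \provesCF \varphi$ and $\psi \provesCF \chi_j$, so there is nothing to correct.
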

\begin{proof}
  By induction on $\varphi$.

  \emph{Case } $\varphi = \FALSE$.
  We have $\Sem{\FALSE} = \cl{\emptyset}$.
  Clearly, $\cl{\emptyset} \subseteq \outI{\varphi}$, because $\outI{\varphi}$ is closed and $\emptyset \subseteq \outI{\varphi}$.
  By \Cref{{lem:closed_set_laws}} we have $\FALSE \in \Sem{\FALSE}$.
  \smallskip
  
  \emph{Case } $\varphi = \TRUE$.
  We have $\Sem{\TRUE} = \Bunch = \outI{\TRUE}$.
  \smallskip

  \emph{Case } $\varphi = \EMP$.
  In order to show $\Sem{\EMP} = \cl{\{\empM\}} \subseteq \outI{\EMP}$, it suffices to show $\{\empM\} \subseteq \outI{\EMP}$, by the characterization of the closure operator.
  That inclusion holds because $\empM \provesCF \EMP$.
  In order to show $\EMP \in \cl{\{\empM\}}$, it suffices to show $\empM \in \cl{\{\empM\}}$ by \Cref{lem:closed_set_laws}, which holds trivially.
  \smallskip

  \emph{Case } $\varphi = \psi_1 \ast \psi_2$.
  In order to show the set inclusion $\Sem{\psi_1 \ast \psi_2} = \cl{\Sem{\psi_1} \opBullet \Sem{\psi_2}} \subseteq \outI{\psi_1 \ast \psi_2}$,
  it suffices to show $\Sem{\psi_1} \opBullet \Sem{\psi_2} \subseteq \outI{\psi_1 \ast \psi_2}$, by the characterization of the closure operator.
  If $(\Delta_1, \Delta_2) \in \Sem{\psi_1} \opBullet \Sem{\psi_2}$, then, by the induction hypothesis $\Delta_i \provesCF \psi_i$, and we can reason as follows:
  \begin{mathpar}
  {\infer
    {\Delta_1 \provesCF \psi_1 \and \Delta_2 \provesCF \psi_2}
    {\Delta_1, \Delta_2  \provesCF \psi_1 \ast \psi_2}}
  \end{mathpar}
  Hence, $(\Delta_1, \Delta_2) \in \outI{\psi_1 \ast \psi_2}$.

  As for the element inclusion $\psi_1 \ast \psi_2 \in \cl{\Sem{\psi_1} \opBullet \Sem{\psi_2}}$, note that by \Cref{lem:closed_set_laws} it suffices to show $(\psi_1, \psi_2) \in \cl{\Sem{\psi_1} \opBullet \Sem{\psi_2}}$, which is evident from the induction hypotheses.
  \smallskip

  \emph{Case } $\varphi = \psi_1 \wedge \psi_2$.
  In order to show the set inclusion, suppose that $\Delta \in \Sem{\psi_1 \wedge \psi_2} = \Sem{\psi_1} \cap \Sem{\psi_2}$.
  Then, by the induction hypothesis, $\Delta \in \outI{\psi_1} \cap \outI{\psi_2}$, and we can reason as follows:
  \begin{mathpar}
  \infer*
  {\infer
    {\Delta \provesCF \psi_1 \and \Delta \provesCF \psi_2}
    {\Delta\csemic \Delta \provesCF \psi_1 \wedge \psi_2}}
  {\Delta \provesCF \psi_1 \wedge \psi_2}
  \end{mathpar}

  As for the element inclusion $\psi_1 \wedge \psi_2 \in \Sem{\psi_1} \cap \Sem{\psi_2}$, we argue as follows.
  By the induction hypothesis, $\psi_1 \in \Sem{\psi_1}$.\
  By \Cref{lem:closed_set_laws} (item 1), $(\psi_1 ; \psi_2) \in \Sem{\psi_1}$, and by \Cref{lem:closed_set_laws} (item 3),
  $\psi_1 \wedge \psi_2 \in \Sem{\psi_1}$.
  Similarly we can show $\psi_1 \wedge \psi_2 \in \Sem{\psi_2}$.
  \smallskip

  \emph{Case } $\varphi = \psi_1 \wand \psi_2$.
  In order to show $\Sem{\psi_1 \wand \psi_2} = \Sem{\psi_1} \wand \Sem{\psi_2} \subseteq \outI{\psi_1 \wand \psi_2}$, suppose that
  $\Delta \in \Sem{\psi_1} \wand \Sem{\psi_2}$.
  We are to show $\Delta \provesCF \psi_1 \wand \psi_2$.
  By the induction hypothesis, $\psi_1 \in \Sem{\psi_1}$; hence
  \[
    (\Delta\ccomma \psi_1) \in \Sem{\psi_2} \subseteq \outI{\psi_2}.
  \]
  We can then reason using the right rule for $\wand$:
  \begin{mathpar}
  \infer
  {\Delta\ccomma \psi_1 \provesCF \psi_2}
  {\Delta \provesCF \psi_1 \wand \psi_2}
  \end{mathpar}

  In order to show $\psi_1 \wand \psi_2 \in \Sem{\psi_1} \wand \Sem{\psi_2}$, suppose that $\Delta \in \Sem{\psi_1}$.
  We are to show $(\Delta\ccomma \psi_1 \wand \psi_2) \in \Sem{\psi_2}$.
  Let us write $\Sem{\psi_2}$ as $\bigcap_{i \in I}\outI{\varphi_i}$.
  Then our goal can be reduced to showing
  \[
    \Delta\ccomma \psi_1 \wand \psi_2 \provesCF \varphi_i
  \]
  for any $i \in I$.
  We argue as follows, using the left rule for $\wand$:
  \begin{mathpar}
  \infer
  {\Delta \provesCF \psi_1 \and
    \psi_2 \provesCF \varphi_i
  }
  {\Delta\ccomma \psi_1 \wand \psi_2 \provesCF \varphi_i}
  \end{mathpar}
  where the first assumption holds because $\Delta \in \Sem{\psi_1} \subseteq \outI{\psi_1}$ and the second assumption holds because $\psi_2 \in \outI{\psi_2}$.

  \smallskip
  \emph{Case } $\varphi = \psi_1 \to \psi_2$. Similarly to the case $\varphi = \psi_1 \wand \psi_2$, using the characterization of the Heyting implication in $\mathcal{C}$.
  \smallskip

  \emph{Case } $\varphi = \psi_1 \vee \psi_2$.
  In order to show $\Sem{\psi_1 \vee \psi_2} = \Sem{\psi_1}\vee \Sem{\psi_2} \subseteq \outI{\psi_1 \vee \psi_2}$, it suffices to show $\Sem{\psi_1} \subseteq \outI{\psi_1 \vee \psi_2}$ and $\Sem{\psi_2} \subseteq \outI{\psi_1 \vee \psi_2}$.
  To show that $\Sem{\psi_i} \subseteq \outI{\psi_1 \vee \psi_2}$, for $i = 1, 2$, it suffices to show $\outI{\psi_i} \subseteq \outI{\psi_1 \vee \psi_2}$.
  We show that using the right rules for disjunction.

  To show $\psi_1 \vee \psi_2 \in \Sem{\psi_1 \vee \psi_2} = \cl{\Sem{\psi_1} \cup \Sem{\psi_2}}$, we appeal to the definition of $\cl{-}$:
  Let $\varphi$ be a formula such that $\Sem{\psi_1} \cup \Sem{\psi_2} \subseteq \outI{\varphi}$.
  We are to show $\psi_1 \vee \psi_2 \in \outI{\varphi}$, i.e.~$\psi_1 \vee \psi_2 \provesCF \varphi$.
  By assumption we have $\psi_i \in \Sem{\psi_i}$, for $i = 1,2$, and, hence
  $\psi_i \in \outI{\varphi}$, or, equivalently, $\psi_i \provesCF \varphi$.
  We obtain the desired result using \ruleref{disjL}.
\end{proof}

\begin{theorem}[\coqident{cutelim}{C_interp_cf}]
  \label{thm:c_interp_cf}
  If $\Sem{\frmlI{\Delta}} \leq \Sem{\varphi}$ holds in $\mathcal{C}$, then $\Delta \provesCF \varphi$.
\end{theorem}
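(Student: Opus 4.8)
The plan is to exploit the fundamental property of $\mathcal{C}$ (\Cref{lem:fundamental}) as a ``sandwich'' around the interpretation, exactly as foreshadowed in \Cref{sec:main_idea}. Recall that in $\mathcal{C}$ the order $\leq$ is set inclusion $\subseteq$, so the hypothesis reads $\Sem{\frmlI{\Delta}} \subseteq \Sem{\varphi}$.

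First I would instantiate the fundamental property at the formula $\frmlI{\Delta}$, which yields $\frmlI{\Delta} \in \Sem{\frmlI{\Delta}}$, where the left-hand occurrence is the single-leaf bunch carrying the formula $\frmlI{\Delta}$. Feeding this membership through the hypothesis gives $\frmlI{\Delta} \in \Sem{\varphi}$. Then I would invoke the other half of the fundamental property, $\Sem{\varphi} \subseteq \outI{\varphi}$, to conclude $\frmlI{\Delta} \in \outI{\varphi}$. Unfolding the definition of the principal closed set $\outI{\varphi} = \{\Delta \mid \Delta \provesCF \varphi\}$, this is precisely a cut-free derivation $\frmlI{\Delta} \provesCF \varphi$ of the single-leaf bunch $\frmlI{\Delta}$. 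In symbols, the entire argument is the chain of inclusions
\[ \frmlI{\Delta} \in \Sem{\frmlI{\Delta}} \subseteq \Sem{\varphi} \subseteq \outI{\varphi}. \]

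Finally, it remains to transport the derivation from the collapsed single-formula bunch $\frmlI{\Delta}$ back to the original bunch $\Delta$. This is exactly what \Cref{lem:comprehension_adm} provides: applying it with the trivial outer context $\Delta'(-) = [-]$ turns $\frmlI{\Delta} \provesCF \varphi$ into $\Delta \provesCF \varphi$, which is the goal. Composed with the soundness theorem (\Cref{thm:soundness}), this also yields the advertised cut-elimination result.

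I do not expect a genuine obstacle in this theorem itself: all the substantial work has already been discharged in establishing the fundamental property (\Cref{lem:fundamental}) and the collapse admissibility (\Cref{lem:comprehension_adm}). The only point demanding a little care is the implicit coercion between a formula and the single-leaf bunch carrying it — the fundamental property asserts membership of $\frmlI{\Delta}$ \emph{qua} bunch, while the collapse lemma consumes that same bunch — so I would check that these two identifications line up and that the instantiation of \Cref{lem:comprehension_adm} with the empty context is legitimate.
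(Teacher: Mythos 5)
Your proposal is correct and takes essentially the same route as the paper: both instantiate the fundamental property (\Cref{lem:fundamental}) at $\frmlI{\Delta}$, push the membership $\frmlI{\Delta} \in \Sem{\frmlI{\Delta}}$ through the hypothesis to get $\frmlI{\Delta} \in \Sem{\varphi}$, and then transfer from the collapsed single-leaf bunch back to $\Delta$. The only cosmetic difference is the packaging of that last step: the paper applies \Cref{lem:closed_set_laws} (item 4, $\Delta \in X \iff \frmlI{\Delta} \in X$ for closed $X$) to the closed set $\Sem{\varphi}$ before using $\Sem{\varphi} \subseteq \outI{\varphi}$, whereas you first pass to $\outI{\varphi}$ and then invoke \Cref{lem:comprehension_adm} with the empty context --- but that closed-set law is itself proved via \Cref{lem:comprehension_adm}, so the two arguments coincide.
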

\begin{proof}
  By \Cref{lem:fundamental}, we have $\frmlI{\Delta} \in \Sem{\frmlI{\Delta}}$, and hence ${\frmlI{\Delta} \in \Sem{\varphi}}$.
  By \Cref{lem:closed_set_laws} we have furthermore have $\Delta \in \Sem{\varphi}$ which is equivalent to  $\Delta \provesCF \varphi$.
\end{proof}

As a consequence, we get the cut admissibility:
\begin{theorem}[\coqident{cutelim}{cut}]
  The \ruleref{cut} rule is admissible in the cut-free fragment $\provesCF$ of BI.
\end{theorem}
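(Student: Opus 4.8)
The plan is to obtain cut admissibility as a corollary of the round trip between syntax and the universal model $\mathcal{C}$ that we have just constructed, without ever manipulating derivation trees directly. Concretely, suppose we are given cut-free derivations of the two premises of \ruleref{cut}, namely $\Delta' \provesCF \psi$ and $\Delta(\psi) \provesCF \varphi$; the goal is to produce a cut-free derivation of $\Delta(\Delta') \provesCF \varphi$.

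Since every cut-free derivation is in particular a derivation, I would first feed both premises through soundness (\Cref{thm:soundness}), instantiated at the BI algebra $\mathcal{C}$ with the atomic interpretation $\Sem{a} = \outI{a}$. This yields the two inequalities $\Sem{\frmlI{\Delta'}} \leq \Sem{\psi}$ and $\Sem{\frmlI{\Delta(\psi)}} \leq \Sem{\varphi}$ in $\mathcal{C}$.

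The crucial intermediate step is a monotonicity lemma for bunched contexts: for a fixed context $\Delta(-)$, there is a monotone map $f$ on $\mathcal{C}$ such that $f(\Sem{\frmlI{\Xi}}) = \Sem{\frmlI{\Delta(\Xi)}}$ for every bunch $\Xi$. This follows by a straightforward induction on the structure of $\Delta(-)$: the base case (the hole itself) gives the identity, and the inductive cases replace $\ccomma$ and $\csemic$ by the operations $\ast$ and $\wedge$, both of which are monotone in any BI algebra. Applying $f$ to the first inequality $\Sem{\frmlI{\Delta'}} \leq \Sem{\psi}$ — and using that a single-formula leaf $\psi$ interprets as $\Sem{\frmlI{\psi}} = \Sem{\psi}$ — yields $\Sem{\frmlI{\Delta(\Delta')}} \leq \Sem{\frmlI{\Delta(\psi)}}$.

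Chaining these gives $\Sem{\frmlI{\Delta(\Delta')}} \leq \Sem{\frmlI{\Delta(\psi)}} \leq \Sem{\varphi}$ in $\mathcal{C}$, and then I invoke the universal property of the model, \Cref{thm:c_interp_cf}, which turns this semantic inequality back into the desired cut-free derivation $\Delta(\Delta') \provesCF \varphi$. The only piece of genuine work is the context monotonicity lemma; everything else is plumbing between the two directions of the semantic correspondence. I expect the main subtlety to be the bookkeeping around the two readings of a filled hole — the formula translation $\frmlI{\Delta(-)}$ must behave uniformly whether the hole is filled by the single-formula leaf $\psi$ or by the arbitrary bunch $\Delta'$ — but this uniformity is precisely what the monotonicity lemma encapsulates once $\frmlI{-}$ is pushed through the context.
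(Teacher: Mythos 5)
Your proposal is correct and follows essentially the same route as the paper: apply soundness at the model $\mathcal{C}$ to both premises, prove $\Sem{\frmlI{\Delta(\Delta')}} \leq \Sem{\frmlI{\Delta(\psi)}}$ by induction on the context (the paper states this induction directly where you package it as a monotonicity lemma for the map induced by $\Delta(-)$), and close the loop with \Cref{thm:c_interp_cf}. The extra care you take about $\Sem{\frmlI{\psi}} = \Sem{\psi}$ for a single-formula leaf is a detail the paper leaves implicit, but there is no substantive difference.
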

\begin{proof}
  Suppose $\Delta \provesCF \psi$ and $\Gamma(\psi) \provesCF \varphi$.
  We are to show that $\Gamma(\Delta) \provesCF \varphi$.
  By \Cref{thm:c_interp_cf} it suffices to show that 
  $\Sem{\frmlI{\Gamma(\Delta)}} \leq \Sem{\varphi}$ holds in $\mathcal{C}$.

  From soundness we have that $\Sem{\frmlI{\Delta}} \leq \Sem{\psi}$.
  By induction on $\Gamma$ we can show that $\Sem{\frmlI{\Gamma(\Delta)}} \leq \Sem{\frmlI{\Gamma(\psi)}}$, from which we obtain
  \[
    \Sem{\frmlI{\Gamma(\Delta)}} \leq \Sem{\frmlI{\Gamma(\psi)}}
    \leq \Sem{\varphi}.
  \]
\end{proof}


\paragraph{Overview.}
In the next sections we will be looking at adjusting the construction of $\mathcal{C}$ to extensions of BI.
At this point we would like to give an overview of the argument, and see what kind of conditions we need.

\begin{itemize}
\item To show that the closure operator $\cl{-}$ is strong, we had to use invertibility of certain rules.
  Firstly, we used the fact that BI satisfies a strong form of the deduction theorem for both implications (the rules \ruleref{implR} and \ruleref{wandR} are invertible).
  Secondly, we used the fact that the left rules are invertible for connectives that form bunches (\ruleref{empL}, \ruleref{trueL}, \ruleref{andL}, \ruleref{sepL}).
\item Additionally, we need to verify that all the rules of sequent calculus are validated in $\mathcal{C}$.
\item
  \begin{sloppypar}
    Finally, we need to show that Okada's property (\Cref{lem:fundamental}) holds in $\mathcal{C}$.
  \end{sloppypar}
\end{itemize}

This list gives us a sort of roadmap for extending the cut elimination argument.
For every rule that we want to add to BI, we need to re-verify the invertibility of certain rules, and that the rule is validated in $\mathcal{C}$.
If we want to add a new connective to the system, we need to additionally come up with the interpretation of this connective on $\mathcal{C}$, and re-verify Okada's property.


\section{Extending the logic: simple structural rules}
\label{sec:simple_ext}
An important extension of BI is \emph{affine BI}, which extends the sequent calculus of \Cref{fig:bi_seqcalc} with the weakening rule for $\ccomma$:
\begin{mathpar}
\inferhref{W$\ccomma$}{W,}
{\Delta(\Delta_1) \proves \varphi}
{\Delta(\Delta_1 \ccomma \Delta_2) \proves \varphi}
\end{mathpar}
An algebraic structure for interpreting affine BI is a BI algebra in which the following inequality holds: $p \ast q \leq p$.
Can we extend the argument presented so far to cover affine BI?
As we discussed at the end of the previous section, because we are adding a new rule, we have to make sure that the analogues of \Cref{lem:wand_inv_adm} and \Cref{lem:inv_rules} still hold (the appropriate rules are invertible), and that $\mathcal{C}$ validates the inequality $X \ast Y \subseteq X$.

To verify that $X \ast Y \subseteq X$ it suffices to verify that $X \opBullet Y \subseteq X$, since $X$ is closed.
Let us write $X = \bigcap_{i \in I} \outI{\varphi_i}$.
Suppose that $\Delta_1 \in X, \Delta_2 \in Y$.
We are to show that $\Delta_1 \ccomma \Delta_2 \provesCF \varphi_i$ for any $i$;
however we know that $\Delta_1 \provesCF \varphi_i$ by the assumption, and the desired result follows by \ruleref{W,}.

This kind of argument for \ruleref{W,} can be generalized to infinitely many structural rules of a particular shape, which we call, following \cite{galatos.jipsen:2013}, \emph{simple structural rules}.
In the remainder of this section we show how to define such simple structural rules, and we prove cut elimination for BI extended with any combination of such rules.

\subsection{Simple structural rules and bunched terms}
Simple structural rules are rules of the shape
\begin{mathpar}
\infer
{\Pi(T_1[\Delta_1, \dots, \Delta_n]) \proves \varphi \and
\dots \and
\Pi(T_m[\Delta_1, \dots, \Delta_n]) \proves \varphi
}
{\Pi(T[\Delta_1, \dots, \Delta_n]) \proves \varphi}
\end{mathpar}
where $T_1, \dots, T_m, T$ are \emph{bunched terms} -- bunches built out of connectives $\ccomma, \csemic$, and variables $x_1, \dots, x_n$.
The notation $T_i[\vec{\Delta}]$ stands for the bunch obtained from $T_i$ by replacing all the variables $x_j$ with $\Delta_j$.
Furthermore, in the rule above we require that $T$ is a \emph{linear} bunched term -- a term in which every variable $x_j$ occurs at most once.

We identify a structural rule with a tuple $(\{T_1, \dots, T_m\}, T)$.
The rule \ruleref{W,} above is represented with a tuple $(\{ x_1 \}, x_1\ccomma x_2)$.
If $L$ is a set of such tuples, we write BI+$L$ for a sequent calculus of BI extended with the structural rules from $L$.

For the rest of this section, we fix a finite collection $L$ of simple structural rules and the extended system BI+$L$.
We write $\provesCF$ for cut-free provability in BI+$L$, and we denote by $\mathcal{C}$ the BI algebra
constructed in \Cref{{sec:cutelim_thm}}, but for BI+$L$-provability.

Firstly, we need to check that the construction of $\mathcal{C}$ works out.
We need to verify that the rules \ruleref{implL}, \ruleref{wandL}, \ruleref{andL}, \ruleref{sepL}, \ruleref{trueL}, \ruleref{empL} are still invertible, in presence of the additional rules from $L$.
For that, we just follow the proof of \Cref{lem:inv_rules}.

\subsection{Interpretation of simple structural rules in $\mathcal{C}$}
Additionally, we need to verify that $\mathcal{C}$ validates all the rules from $L$.

Each bunched term $T[x_1, \dots, x_n]$ can be interpreted as a function $\Sem{T} : A^n \to A$ on any BI algebra $A$.
For example, a (non-linear) bunched term $(x_1 \ccomma x_2) \csemic x_1$ gives rise to a mapping $(X_1, X_2) \mapsto (X_1 \ast X_2) \wedge X_1$.

In order to interpret a simple structural rule given by a tuple ${(\{ T_1, \dots, T_m \}, T)}$ in a BI algebra $A$, we require that the following inequality holds in $A$ for any $a_1, \dots, a_n \in A$:
\[
  \Sem{T}(a_1, \dots, a_n) \leq \Sem{T_1}(a_1, \dots, a_n) \vee \dots \vee \Sem{T_m}(a_1, \dots, a_n).
\]
In this case, we say that $A$ validates the simple structural rule.
For example, recall that the weakening rule \ruleref{W,} for $\ccomma$ is represented as $(\{x_1\}, (x_1\ccomma x_2))$.
Then the associated inequality is:
\[
  \Sem{x_1\ccomma x_2}(p, q) \leq \Sem{x_1}(p, q) \quad \iff \quad
p \ast q \leq p.
\]
\begin{lemma}[\coqmod{seqcalc}{Seqcalc}{seq_interp_sound}]
\label{lem:ext_soundness}
  If a BI algebra $A$ validates the rules in $L$, then $\Delta \proves \varphi$ implies $\Sem{\Delta} \leq \Sem{\varphi}$ in $A$.
\end{lemma}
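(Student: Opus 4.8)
The plan is to prove the statement by induction on the derivation of $\Delta \proves \varphi$ in BI+$L$, reusing the argument for \Cref{thm:soundness} and adding one case per rule of $L$. For every rule already present in the calculus of \Cref{fig:bi_seqcalc}, the inductive step is exactly the one establishing \Cref{thm:soundness}: each such rule is validated in an arbitrary BI algebra using only the lattice, Heyting, and residuated-monoid axioms, none of which are affected by the presence of the extra rules. (Here, as the notation suggests, $\Sem{\Delta}$ denotes the interpretation of the bunch $\Delta$ obtained by reading $\ccomma$ as $\ast$ and $\csemic$ as $\wedge$, which coincides with $\Sem{\frmlI{\Delta}}$.) Hence the only new work is to discharge the cases coming from $L$.

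So fix a rule $(\{T_1, \dots, T_m\}, T) \in L$ and an application of it whose conclusion is $\Pi(T[\vec{\Delta}]) \proves \varphi$ and whose premises are $\Pi(T_k[\vec{\Delta}]) \proves \varphi$ for $1 \leq k \leq m$. Put $a_j \eqdef \Sem{\Delta_j}$ and $\vec a \eqdef (a_1, \dots, a_n)$. I will use two auxiliary facts, each provable by a routine induction. First, interpretation commutes with substitution into a bunched term: $\Sem{T_k[\vec{\Delta}]} = \Sem{T_k}(\vec a)$, and likewise $\Sem{T[\vec{\Delta}]} = \Sem{T}(\vec a)$. Second, regarding the surrounding context $\Pi(-)$ as a linear bunched term in its single hole variable, so that $\Sem{\Pi} : A \to A$ and $\Sem{\Pi(\Gamma)} = \Sem{\Pi}(\Sem{\Gamma})$ for any filling $\Gamma$, the function $\Sem{\Pi}$ is monotone and preserves finite joins.

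Granting these, the step is immediate. By the induction hypothesis applied to each premise, $\Sem{\Pi}(\Sem{T_k}(\vec a)) \leq \Sem{\varphi}$ for every $k$. Since $A$ validates the rule, $\Sem{T}(\vec a) \leq \bigvee_{k=1}^{m} \Sem{T_k}(\vec a)$. Applying the monotone, join-preserving map $\Sem{\Pi}$ to both sides yields
\[
  \Sem{\Pi(T[\vec{\Delta}])} = \Sem{\Pi}(\Sem{T}(\vec a)) \leq \bigvee_{k=1}^{m} \Sem{\Pi}(\Sem{T_k}(\vec a)) \leq \Sem{\varphi},
\]
which is the desired inequality.

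The main obstacle is the second auxiliary fact, namely that $\Sem{\Pi}$ preserves finite joins. Plain monotonicity only gives $\bigvee_k \Sem{\Pi}(\Sem{T_k}(\vec a)) \leq \Sem{\Pi}(\bigvee_k \Sem{T_k}(\vec a))$, which is the wrong direction; the reverse inequality is what makes the combination of the $m$ premises go through. This is where the residuated structure of BI is essential: join-preservation of $\Sem{\Pi}$ reduces, by induction on the context, to the facts that $- \ast d$ preserves all joins (being a left adjoint to $d \wand -$) and that $- \wedge d$ preserves finite joins (by distributivity of the lattice). Note that, unlike the cut-elimination argument, the linearity of $T$ is not needed here; only the validity inequality and the algebraic axioms of $A$ are used.
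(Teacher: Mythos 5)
Your proof is correct and takes essentially the same route as the paper's: induction on the derivation, with the single new case closed by collecting the inductive hypotheses into a finite join and applying the validity inequality for the rule. You are in fact more careful than the paper's own proof, which states the inductive hypotheses as $\Sem{T_i}(\bar{a}) \leq \Sem{\varphi}$ and thus silently treats only a top-level application of the rule, ignoring the surrounding bunched context $\Pi(-)$; your observation that $\Sem{\Pi}$ is monotone and preserves finite joins (because $- \ast d$ is a left adjoint and $- \wedge d$ preserves finite joins in a distributive lattice) is exactly the detail needed for deep applications, and it is the same fact the base soundness proof (\Cref{thm:soundness}) already relies on to validate \ruleref{disjL}.
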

\begin{proof}
  For the case of a simple structural rule $(\{ T_1, \dots, T_m \}, T)$, we assume that
  $\Sem{T_i}(\bar{a}) \leq \Sem{\varphi}$ holds for any $1 \leq i \leq m$.
  Then, $\bigvee_{1 \leq i \leq m} \Sem{T_i}(\bar{a}) \leq \Sem{\varphi}$.
  Since the rule is validated in $A$ we have
  \[
    \Sem{T}(\bar{a}) \leq \bigvee_{1 \leq i \leq m} \Sem{T_i}(\bar{a}) \leq \Sem{\varphi}.
  \]
\end{proof}

In order to show that $\mathcal{C}$ validates all the rules from $L$, we need the following lemmas about $\Sem{T}$.
For the algebra $\mathcal{C}$ we have the following description:
\begin{lemma}[\coqident{cutelim}{bterm_C_refl}]
  \label{lem:bterm_c}
  Let $X_1, \dots, X_n \in \mathcal{C}$, and $\Delta_i \in X_i$ for $1\leq i \leq n$.
  Then for any bunched term $T$,
  \[
    T[\vec{\Delta}] \in \Sem{T}(\vec{X})
  \]
\end{lemma}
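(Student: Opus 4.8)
The plan is to prove this by induction on the structure of the bunched term $T$. A bunched term is built from variables $x_1, \dots, x_n$ using the two connectives $\ccomma$ and $\csemic$, so there are three cases to treat: $T = x_j$ a single variable, $T = T' \ccomma T''$, and $T = T' \csemic T''$. In each case the interpretation $\Sem{T}$ on $\mathcal{C}$ is defined compositionally from the BI operators $\ast$ and $\wedge$ (for $\ccomma$ and $\csemic$ respectively), so the induction hypotheses about the subterms $T'$ and $T''$ should combine smoothly with the characterizations of $\ast$ and $\wedge$ on $\mathcal{C}$ established in \Cref{prop:meet_in_C} and in the summary of BI operations.

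\textbf{The base case.} When $T = x_j$, we have $\Sem{T}(\vec{X}) = X_j$ and $T[\vec{\Delta}] = \Delta_j$. The claim $\Delta_j \in X_j$ is then exactly the hypothesis $\Delta_i \in X_i$ instantiated at $i = j$, so this case is immediate.

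\textbf{The inductive cases.} Suppose $T = T' \ccomma T''$, so that $T[\vec{\Delta}] = T'[\vec{\Delta}] \ccomma T''[\vec{\Delta}]$ and $\Sem{T}(\vec{X}) = \Sem{T'}(\vec{X}) \ast \Sem{T''}(\vec{X})$. By the induction hypothesis applied to $T'$ and to $T''$, we have $T'[\vec{\Delta}] \in \Sem{T'}(\vec{X})$ and $T''[\vec{\Delta}] \in \Sem{T''}(\vec{X})$. Recalling that $\Sem{T'}(\vec{X}) \ast \Sem{T''}(\vec{X}) = \cl{\{\Gamma \ccomma \Gamma' \mid \Gamma \in \Sem{T'}(\vec{X}),\, \Gamma' \in \Sem{T''}(\vec{X})\}}$, the bunch $T'[\vec{\Delta}] \ccomma T''[\vec{\Delta}]$ lies in the generating set, hence in its closure, which gives exactly $T[\vec{\Delta}] \in \Sem{T}(\vec{X})$. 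The case $T = T' \csemic T''$ is analogous, using the characterization of $\wedge$ on $\mathcal{C}$ from \Cref{prop:meet_in_C}: the bunch $T'[\vec{\Delta}] \csemic T''[\vec{\Delta}]$ belongs to the set $\{\Gamma \csemic \Gamma' \mid \Gamma \in \Sem{T'}(\vec{X}),\, \Gamma' \in \Sem{T''}(\vec{X})\}$ and therefore to its closure, which is $\Sem{T'}(\vec{X}) \wedge \Sem{T''}(\vec{X})$.

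\textbf{Main obstacle.} I expect the substance of the argument to be entirely routine once the right definitions are unfolded; the only point requiring a little care is that $T$ here need not be linear, so a variable $x_j$ may occur several times in $T$. This causes no trouble, because the induction treats each occurrence independently and the same hypothesis $\Delta_j \in X_j$ is available at every occurrence — the linearity restriction in the definition of simple structural rules constrains only the \emph{conclusion} term, not the terms appearing here. The genuinely delicate work was already discharged in establishing that $\mathcal{C}$ is a BI algebra with the stated concrete descriptions of $\ast$ and $\wedge$; given those, this lemma is a direct structural induction.
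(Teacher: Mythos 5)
Your proof is correct and follows exactly the paper's approach: the paper's own proof is just ``by induction on $T$,'' and your expansion of the base case and the two inductive cases (using the closure characterizations of $\ast$ and, via \Cref{prop:meet_in_C}, of $\wedge$ on $\mathcal{C}$) is precisely how that induction goes. Your remark that non-linearity of $T$ is harmless is also accurate, since the linearity condition only constrains the conclusion term of a simple structural rule.
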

\begin{proof}
  By induction on $T$.
\end{proof}
\begin{lemma}[\coqident{cutelim}{blinterm_C_desc'}]
  \label{lem:lin_bterm}
  For any $X_1, \dots, X_n \in \mathcal{C}$ and any linear bunched term $T$ we have
  \[
    \Sem{T}(X_1, \dots, X_n) = \cl{\{ T[\Delta_1, \dots, \Delta_n] \mid \Delta_i \in X_i, 1 \leq i \leq n\}}
  \]
\end{lemma}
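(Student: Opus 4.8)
My plan is to prove the two inclusions separately, making one essentially free and pushing all the work into the other by induction on $T$. Throughout I write $G_T$ for the generating set $\{T[\Delta_1,\dots,\Delta_n] \mid \Delta_i \in X_i\}$, so the claim is $\Sem{T}(\vec X) = \cl{G_T}$. For the inclusion $\cl{G_T} \subseteq \Sem{T}(\vec X)$, I note that $\Sem{T}(\vec X)$ is an element of $\mathcal{C}$, being built from the closed sets $\vec X$ by the operations $\ast$ and $\wedge$ of $\mathcal{C}$, each of which returns a closed set; hence it is itself closed. By the closure adjunction it then suffices to show $G_T \subseteq \Sem{T}(\vec X)$, and this is exactly \Cref{lem:bterm_c}.

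The converse inclusion $\Sem{T}(\vec X) \subseteq \cl{G_T}$ I would prove by induction on the linear bunched term $T$. The base case $T = x_j$ is immediate: both sides equal $X_j$, using that every $X_i$ is nonempty (each closed set contains $\FALSE$ by \Cref{lem:closed_set_laws}), so that $G_{x_j} = X_j$, and $X_j$ is closed. For the inductive step, write $T = T' \star T''$ with $\star \in \{\ccomma, \csemic\}$. Here \emph{linearity} is crucial: since no variable occurs in both $T'$ and $T''$, the substitutions into the two subterms are independent, so the generating set factors as $G_T = \{s \star t \mid s \in G_{T'},\ t \in G_{T''}\}$. Without linearity this would fail, as a shared variable would force both sides to use the same bunch and leave only a ``diagonal''. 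Since $T', T''$ are themselves linear, the induction hypothesis gives $\Sem{T'}(\vec X) = \cl{G_{T'}}$ and $\Sem{T''}(\vec X) = \cl{G_{T''}}$.

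In the multiplicative case $\star = \ccomma$ we have $\Sem{T}(\vec X) = \cl{G_{T'}} \ast \cl{G_{T''}} = \cl{\cl{G_{T'}} \opBullet \cl{G_{T''}}}$ and $G_T = G_{T'} \opBullet G_{T''}$, so the goal reduces to $\cl{\cl{G_{T'}} \opBullet \cl{G_{T''}}} \subseteq \cl{G_{T'} \opBullet G_{T''}}$. This follows from \emph{strength} of the closure operator: applying strength on the right factor gives $\cl{G_{T'}} \opBullet \cl{G_{T''}} \subseteq \cl{G_{T'} \opBullet \cl{G_{T''}}}$, and applying it again together with commutativity of $\opBullet$ yields $\cl{G_{T'}} \opBullet \cl{G_{T''}} \subseteq \cl{G_{T'} \opBullet G_{T''}}$; closing both sides and using idempotence finishes the case.

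The additive case $\star = \csemic$ is the main obstacle. Here $\Sem{T}(\vec X) = \cl{G_{T'}} \cap \cl{G_{T''}}$ (the meet in $\mathcal{C}$ is just intersection, since $\mathcal{C}$ is closed under intersections), and I must show this lies in $\cl{G_T}$ with $G_T = \{s \csemic t \mid s \in G_{T'},\ t \in G_{T''}\}$. The crux is an auxiliary closedness fact: for every bunch $\Lambda$ and formula $\varphi$, the set $\{\Gamma \mid (\Gamma \csemic \Lambda) \provesCF \varphi\}$ is closed. I would prove this exactly as the closedness part of \Cref{lem:wand_eq}: if $\Gamma \csemic \Lambda \provesCF \varphi$ then collapsing $\Lambda$ via the left rules gives $\Gamma \csemic \frmlI{\Lambda} \provesCF \varphi$, hence $\Gamma \provesCF \frmlI{\Lambda} \to \varphi$ by \ruleref{implR}, so the set is contained in $\outI{\frmlI{\Lambda} \to \varphi}$; conversely any $\Gamma$ in that principal set yields $\Gamma \csemic \frmlI{\Lambda} \provesCF \varphi$ by invertibility of \ruleref{implR} (\Cref{lem:wand_inv_adm}) and then $\Gamma \csemic \Lambda \provesCF \varphi$ by \Cref{lem:comprehension_adm}, so the set equals $\outI{\frmlI{\Lambda} \to \varphi}$ and is closed. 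Granting this, take $\Delta \in \cl{G_{T'}} \cap \cl{G_{T''}}$ and any $\varphi$ with $G_T \subseteq \outI{\varphi}$. The set $\bigcap_{t \in G_{T''}} \{\Gamma \mid (\Gamma \csemic t) \provesCF \varphi\}$ is an intersection of closed sets, hence closed, and contains $G_{T'}$, so it contains $\cl{G_{T'}}$ and therefore $\Delta$; that is, $\Delta \csemic t \provesCF \varphi$ for all $t \in G_{T''}$. By the same fact (and commutativity of $\csemic$) the closed set $\{\Gamma \mid (\Delta \csemic \Gamma) \provesCF \varphi\}$ contains $G_{T''}$, hence $\cl{G_{T''}}$ and so $\Delta$, giving $\Delta \csemic \Delta \provesCF \varphi$; contraction (\Cref{lem:closed_set_laws}) then yields $\Delta \provesCF \varphi$. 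Since $\varphi$ was arbitrary, $\Delta \in \cl{G_T}$, completing the induction.
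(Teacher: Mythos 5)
Your proof is correct, and for everything the paper actually writes down it is the same proof: the inclusion $\cl{G_T} \subseteq \Sem{T}(\vec{X})$ (writing $G_T$ for the generating set, as in your proposal) comes from \Cref{lem:bterm_c} together with closedness of $\Sem{T}(\vec{X})$, and the reverse inclusion is by induction on $T$, with linearity used to factor the generating set and strength of $\cl{-}$ discharging the $\ccomma$ case. The paper dismisses the $\csemic$ case as ``similar''; the intended argument there parallels the $\ccomma$ case by first rewriting the intersection $\cl{G_{T'}} \cap \cl{G_{T''}}$ as a closure of $\csemic$-combinations via \Cref{prop:meet_in_C}, and then using the additive analogue of strength, which follows from closedness of the Heyting implication (\coqident{cutelim}{has_heyting_impl}). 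Your treatment of that case is a direct unfolding of exactly these ingredients: your auxiliary fact $\{\Gamma \mid (\Gamma \csemic \Lambda) \provesCF \varphi\} = \outI{\frmlI{\Lambda} \to \varphi}$ is the pointwise closedness statement underlying the Heyting implication (proved, as you note, just like \Cref{lem:wand_eq}), and your final step from $\Delta \csemic \Delta \provesCF \varphi$ to $\Delta \provesCF \varphi$ is the same use of contraction as in \Cref{prop:meet_in_C}. Two details you make explicit that the paper glosses over are genuinely needed: nonemptiness of closed sets ($\FALSE \in X_i$, by \Cref{lem:closed_set_laws}) is required for $G_{x_j} = X_j$ in the base case (and for the factorization when some variable is unused), and the factorization of $G_T$ into independent substitutions is precisely where linearity enters.
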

\begin{proof}
  In view of \Cref{lem:bterm_c} it suffices to show that the left-hand side is included in the right-hand side.
  This is done by induction on $T$.
  We show only the case for $\ccomma$, as the other case is similar.
  If $T(\vec{x}) = F(\vec{x})\ccomma U(\vec{x})$, then, since $T$ is linear, we can write it down as
  \[
    T(\vec{y}\vec{z}) = F(\vec{y})\ccomma U(\vec{z})
  \]
  for some factorization $\vec{y}\vec{z} = \vec{x}$, and for some linear terms $F$ and $U$.
  By the induction hypothesis we have
  \[
    \Sem{T}(\vec{Y}\vec{Z}) = \cl{\cl{\{F[\vec{\Gamma}] \mid \vec{\Gamma}\in \vec{Y}\}} \opBullet \cl{\{U[\vec{\Gamma}] \mid \vec{\Gamma}\in \vec{Z}\}}}.
  \]
  In order to show the inclusion into $\cl{\{T[\vec{\Delta}] \mid \vec{\Delta} \in \vec{Y}\vec{Z}\}}$ it suffices to show
  \[
    \{F[\vec{\Gamma}] \mid \vec{\Gamma}\in \vec{Y}\} \opBullet \{U[\vec{\Gamma}] \mid \vec{\Gamma}\in \vec{Z}\}
    \subseteq \{T[\vec{\Delta}] \mid \vec{\Delta} \in \vec{Y}\vec{Z}\}.
  \]
  Let $\vec{\Gamma}\in\vec{Y}$ and $\vec{\Theta} \in \vec{Z}$.
  Then, $\vec{\Gamma}\vec{\Theta} \in \vec{X}$, and, hence
  $F[\vec{\Gamma}] \ccomma U[\vec{\Theta}] = T[\vec{\Gamma}\vec{\Theta}]$,
  which concludes the proof the inclusion.
\end{proof}

With the two lemmas at hand we can prove that $\mathcal{C}$ is a model of BI+$L$.
\begin{lemma}[\coqident{cutelim}{C_extensions}]
\label{lem:C_model_ext}
Every rule from the set $L$ is validated in $\mathcal{C}$.
\end{lemma}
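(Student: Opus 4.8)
The goal is to show that each simple structural rule $(\{T_1,\dots,T_m\},T)\in L$ is validated in $\mathcal{C}$, i.e.\ that for all $X_1,\dots,X_n\in\mathcal{C}$ we have the inequality $\Sem{T}(\vec{X})\subseteq\Sem{T_1}(\vec{X})\vee\dots\vee\Sem{T_m}(\vec{X})$. The plan is to exploit the linearity of $T$, which is exactly where \Cref{lem:lin_bterm} applies and where the standard affine-weakening argument (the $X\opBullet Y\subseteq X$ computation sketched just before the simple-rules subsection) generalizes. First I would unfold the join on the right-hand side: since $\mathcal{C}$ is a Moore collection and joins are computed as $\bigvee_{1\le k\le m}\Sem{T_k}(\vec{X})=\cl{\bigcup_{1\le k\le m}\Sem{T_k}(\vec{X})}$, the right-hand side is a closed set. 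Because the left-hand side $\Sem{T}(\vec{X})$ is itself closed and, by \Cref{lem:lin_bterm}, equals $\cl{\{T[\vec\Delta]\mid\Delta_i\in X_i\}}$, the adjunction rule \coqident{algebra.from_closure}{cl_adj} reduces the problem to proving the inclusion of the generating set, namely
\[
  \{\,T[\vec\Delta]\mid\Delta_i\in X_i\,\}\subseteq\bigvee_{1\le k\le m}\Sem{T_k}(\vec{X}).
\]

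Next I would fix an arbitrary tuple $\Delta_1\in X_1,\dots,\Delta_n\in X_n$ and show $T[\vec\Delta]\in\bigvee_k\Sem{T_k}(\vec{X})$. The right-hand side is a closed set, so it is an intersection $\bigcap_{j}\outI{\varphi_j}$ of principal closed sets; thus it suffices to show $T[\vec\Delta]\provesCF\varphi_j$ whenever $\Sem{T_k}(\vec{X})\subseteq\outI{\varphi_j}$ holds for every $k$. The key step is to feed in \Cref{lem:bterm_c}: for each $k$, since $\Delta_i\in X_i$, we have $T_k[\vec\Delta]\in\Sem{T_k}(\vec{X})\subseteq\outI{\varphi_j}$, i.e.\ $T_k[\vec\Delta]\provesCF\varphi_j$ for all $1\le k\le m$. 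These are precisely the premises of the structural rule $(\{T_1,\dots,T_m\},T)$ applied with the substitution $x_i\mapsto\Delta_i$ inside the trivial context. Applying that rule (available in the BI+$L$ calculus whose cut-free provability $\provesCF$ we are working with) yields the conclusion $T[\vec\Delta]\provesCF\varphi_j$, as desired.

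The heart of the argument is therefore the interplay between the two auxiliary lemmas: \Cref{lem:lin_bterm} uses the \emph{linearity} of $T$ to give a clean generating description of $\Sem{T}(\vec{X})$ so that \coqident{algebra.from_closure}{cl_adj} can strip off the closure and the generators, while \Cref{lem:bterm_c} works for \emph{arbitrary} (possibly non-linear) $T_k$ and places each premise instance $T_k[\vec\Delta]$ into the corresponding semantic set. I expect the main obstacle to be getting the closure manipulations exactly right, in particular justifying that it genuinely suffices to check membership on the generating elements $T[\vec\Delta]$ rather than on all of $\Sem{T}(\vec{X})$; this is where linearity of $T$ is essential, since without it \Cref{lem:lin_bterm} fails and $\Sem{T}(\vec{X})$ need not be generated by the elements $T[\vec\Delta]$. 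Once the reduction to the generators is in place, the remaining step is just a single application of the structural rule, the cut-free analogue of the affine-weakening computation already spelled out for \ruleref{W,}.
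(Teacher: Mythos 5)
Your proposal is correct and follows essentially the same route as the paper's proof: you reduce to the generating set $\{T[\vec{\Delta}] \mid \vec{\Delta}\in\vec{X}\}$ via \Cref{lem:lin_bterm} (using linearity of $T$) together with the closure adjunction, then use \Cref{lem:bterm_c} to place each premise instance $T_k[\vec{\Delta}]$ inside $\outI{\varphi}$, and conclude with a single application of the structural rule in the cut-free BI+$L$ calculus. This matches the paper's argument step for step, including the observation that linearity is needed only for the conclusion term $T$ while the premise terms $T_k$ may be non-linear.
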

\begin{proof}
  Suppose that $(\{T_1, \dots, T_m\}, T)$ is a simple structural rule from $L$.
  We have to show $\Sem{T}(\vec{X}) \subseteq \cl{\bigcup_{1 \leq i \leq m}\Sem{T_i}(\vec{X})}$.
  By \Cref{lem:lin_bterm}, it suffices to show
  \[
    \{ T[\Delta_1, \dots, \Delta_n] \mid \vec{\Delta}\in\vec{X}\} \subseteq \cl{\bigcup_{1 \leq i \leq m}\Sem{T_i}(\vec{X})}
  \]
  where $\vec{\Delta}\in\vec{X}$ is a shorthand for $\Delta_i \in X_i$ for all $1 \leq i \leq n$.
  
  Suppose that $\varphi$ is such that $\bigcup_{1 \leq i \leq n}\Sem{T_i}(\vec{X}) \subseteq \outI{\varphi}$.
  We are to show that $T[\vec{\Delta}] \provesCF \varphi$, for any $\vec{\Delta} \in \vec{X}$.
  By \Cref{lem:bterm_c}, we have $T_i[\vec{\Delta}] \in \Sem{T_i}(\vec{X})\subseteq \outI{\varphi}$.
  So we get $T_i[\vec{\Delta}] \provesCF \varphi$, from which we can conclude that
  $T[\vec{\Delta}] \provesCF \varphi$
\end{proof}

\begin{theorem}[\coqident{cutelim}{cut}]
  The \ruleref{cut} rule is admissible in the cut-free fragment $\provesCF$ of BI+$L$.
\end{theorem}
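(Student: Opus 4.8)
The plan is to mirror the cut-admissibility argument already given for plain BI, now reading every occurrence of $\mathcal{C}$ and $\provesCF$ as the model and cut-free provability for BI+$L$. The only genuinely new work has already been isolated in the supporting lemmas of this section, so I expect the theorem itself to fall out as a short corollary.

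First I would confirm that $\mathcal{C}$ is a sound model of BI+$L$. The invertibility lemmas \Cref{lem:wand_inv_adm} and \Cref{lem:inv_rules} continue to hold for $\provesCF$ in BI+$L$ — their proofs carry over, as noted above — so the closure operator is still strong and both implications are still closed, and \Cref{thm:bi_alg_lifting_moore} equips $\mathcal{C}$ with a BI-algebra structure exactly as before. On top of this, \Cref{lem:C_model_ext} shows that $\mathcal{C}$ validates every rule of $L$. Combining this with \Cref{lem:ext_soundness} gives soundness of BI+$L$ with respect to $\mathcal{C}$: whenever $\Delta \proves \varphi$ is derivable in BI+$L$, we have $\Sem{\frmlI{\Delta}} \leq \Sem{\varphi}$ in $\mathcal{C}$.

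Next I would check that the completeness direction, i.e.~the analogue of \Cref{thm:c_interp_cf}, still holds. It rests only on Okada's fundamental property (\Cref{lem:fundamental}) and on \Cref{lem:closed_set_laws}, neither of which mentions the rules in $L$: the induction in \Cref{lem:fundamental} runs connective-by-connective using the right and left rules for each connective together with weakening and contraction for $\csemic$, all still present, while the proof of \Cref{lem:closed_set_laws} appeals only to \ruleref{falseL}, \ruleref{W;}, \ruleref{C;}, and \Cref{lem:comprehension_adm}. Both lemmas therefore transfer, and from them one obtains: if $\Sem{\frmlI{\Delta}} \leq \Sem{\varphi}$ in $\mathcal{C}$, then $\frmlI{\Delta} \in \Sem{\frmlI{\Delta}} \subseteq \Sem{\varphi} \subseteq \outI{\varphi}$, so $\Delta \provesCF \varphi$ after one more use of \Cref{lem:closed_set_laws}.

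With these two halves in place, the final step is verbatim the BI argument. Given $\Delta \provesCF \psi$ and $\Gamma(\psi) \provesCF \varphi$ in BI+$L$, soundness yields $\Sem{\frmlI{\Delta}} \leq \Sem{\psi}$, and an induction on the context $\Gamma$ (using only monotonicity of the connectives, which are unchanged) gives $\Sem{\frmlI{\Gamma(\Delta)}} \leq \Sem{\frmlI{\Gamma(\psi)}} \leq \Sem{\varphi}$; the completeness direction then delivers $\Gamma(\Delta) \provesCF \varphi$. The hard part is thus not this final step but the groundwork already carried out — above all \Cref{lem:C_model_ext}, which hinges on the description of linear bunched terms in \Cref{lem:lin_bterm}, together with the re-verification that the invertibility lemmas survive the addition of the rules of $L$.
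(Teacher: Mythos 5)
Your proposal is correct and follows essentially the same route as the paper: the paper states this theorem without a separate proof precisely because it is the plain-BI argument (soundness, then \Cref{thm:c_interp_cf} via Okada's property, then the induction on $\Gamma$) re-run for BI+$L$, with the new content isolated in the re-verified invertibility lemmas, \Cref{lem:ext_soundness}, and \Cref{lem:C_model_ext} --- exactly the decomposition you give. Your only imprecision is the claim that \Cref{lem:closed_set_laws} does not depend on the rules in $L$; it does indirectly, through \Cref{lem:comprehension_adm} and the invertibility lemmas, but since you re-verify those at the outset this does not create a gap.
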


\section{Extending the logic: an S4 modality}
\label{sec:modal_ext}
In this section we look at a different kind of extension to BI, the one obtained by ``freely'' adding an (intuitionistic) S4-like modality.
This amounts to adding the following rules (usual for intuitionistic formulation of S4 sequent calculus \cite{bierman.depaiva:2000}):
\begin{mathpar}
\inferhref{$\Box$R}{boxR}
  {\Box \Delta \proves A}
  {\Box \Delta \proves \Box A}
\and
\inferhref{$\Box$L}{boxL}
  {\Delta(A) \proves B}
  {\Delta(\Box A) \proves B}
\end{mathpar}
where $\Box \Delta$ is the same as $\Delta$, but with boxes $\Box$ put in front of all the formulas, e.g.
\[
  \Box (\empM\csemic (\varphi\ccomma \psi)\csemic \chi) \eqdef
  \empM\csemic (\Box \varphi\ccomma \Box \psi)\csemic \Box \chi.
\]
We denote the extended system (the sequent calculus from \Cref{fig:bi_seqcalc} with the rules \ruleref{boxR}, \ruleref{boxL} above) as BIS4.
We can verify that the relevant rules are still invertible (a version of \Cref{{lem:inv_rules}} and \Cref{lem:wand_inv_adm} for BIS4).

\paragraph{Interpreting the modality.}
As per the roadmap at the end of \Cref{{sec:cutelim_thm}} we need to interpret the modality $\Box$ on $\mathcal{C}$ somehow.
The usual way of interpreting a $\Box$ modality in intuitionistic setting is with an interior operator (c.f. the notion of a CS4 algebra \cite[Definition 3]{alechina.etal:2001}).
\begin{definition}
\label{def:BIS4_alg}
A \emph{BIS4 algebra} is a tuple $({\cal B}, \Box)$ where ${\cal B}$ is a BI algebra and $\Box : {\cal B} \to {\cal B}$ is a monotone function satisfying:
\begin{enumerate}
\item\label{eq:box_elim} $\Box p \leq p$;
\item\label{eq:box_idem} $\Box p \leq \Box \Box p$;
\item\label{eq:box_top} $\top = \Box \top$
\item\label{eq:box_emp} $\EMP = \Box \EMP$;
\item\label{eq:box_conj} $\Box p \wedge \Box q \leq \Box (p \wedge q)$;
\item\label{eq:box_sep} $\Box p \ast \Box q \leq \Box (p \ast q)$.
\end{enumerate}
\end{definition}

We define the interior operator $\Box$ on ${\cal C}$ as:
\[
  \Box X \eqdef \cl{\{ \Box \Delta \mid \Delta \in X \}}.
\]

In order to show that ${\cal C}$ satisfies the conditions from \Cref{def:BIS4_alg}, we will use the following lemmas.
\begin{lemma}[\coqident{seqcalc_height_s4}{box_l_inv}]
  \label{lem:box_inversion_l}
  The following rule is admissible:
  \begin{mathpar}
  \inferhref{$\Box$-idemp}{box-idemp}
  {\Gamma(\Box \Box \Delta) \proves \varphi}
  {\Gamma(\Box \Delta) \proves \varphi}
  \end{mathpar}
\end{lemma}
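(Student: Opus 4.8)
The plan is to prove this by induction on the \emph{height} of the given cut-free derivation of $\Gamma(\Box\Box\Delta) \provesCF \varphi$, mirroring the invertibility arguments of \Cref{lem:sepL-inv} and \Cref{lem:inv_rules}. As with those lemmas, plain structural induction on the derivation will be too weak: an instance of \ruleref{C;} can duplicate the marked sub-bunch, so the induction hypothesis must be applicable to each of the resulting copies, and tracking the height is exactly what furnishes the strict decrease needed to reapply it.

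First I would reduce the statement to its single-formula form, namely that $\Gamma(\Box\Box A) \provesCF \varphi$ entails $\Gamma(\Box A) \provesCF \varphi$ with no increase in height. Since $\Box\Box\Delta$ and $\Box\Delta$ differ only in the number of boxes on each leaf, the full statement then follows by applying the single-formula version once per leaf of $\Delta$, each time folding the remaining (still doubly-boxed) leaves into the ambient context $\Gamma(-)$.

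For the single-formula version I would case on the last rule of the derivation. The generic situation is that the rule acts on material disjoint from the marked occurrence of $\Box\Box A$: one applies the induction hypothesis to the premises, which still contain $\Box\Box A$ in a possibly relocated position, and reapplies the rule; the structural rules \ruleref{W;} and \ruleref{C;} are the instances that may delete or duplicate the marked sub-bunch, and in the duplicating case one invokes the induction hypothesis on each of the two copies before contracting again. Two further cases deserve attention. When the last rule is the principal \ruleref{boxL} acting on the outer box of $\Box\Box A = \Box(\Box A)$, its premise is already $\Gamma(\Box A) \provesCF \varphi$, so we simply take that subderivation. When the last rule is \ruleref{boxR}, the entire left context has the form $\Box\Sigma$; applying the induction hypothesis to its premise replaces $\Box\Box A$ by $\Box A$, and since $\Box A$ is itself a boxed formula the context stays fully boxed, so \ruleref{boxR} can be reapplied to recover the goal.

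The main obstacle I anticipate is the interaction with the contraction rule \ruleref{C;}, where the marked sub-bunch occurs twice in the premise and the induction hypothesis must be applied to both copies; this is precisely the phenomenon observed for \Cref{lem:sepL-inv} and is the reason the height-indexed formulation is essential. The side condition of \ruleref{boxR} also requires a small argument, but it is discharged by the observation that stripping one box off a doubly-boxed formula yields a formula that is still boxed, so the fully-boxed shape of the context is preserved.
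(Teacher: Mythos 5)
Your proposal is correct and follows essentially the same route as the paper: the paper's proof is exactly an induction on the height of the cut-free derivation, handled case-by-case as in \Cref{lem:inv_rules}, with the boxed shape of the context preserved in the \ruleref{boxR} case (which is why the lemma is stated as idempotence $\Box\Box\Delta \rightsquigarrow \Box\Delta$ rather than full $\Box$-inversion). Your preliminary reduction to the single-formula case, iterated leaf by leaf, is only a minor organizational difference from the paper's direct treatment of the bunch form.
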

\begin{proof}
By induction on the height of the derivation, similar to the proof of \Cref{lem:inv_rules}.
\end{proof}
\begin{lemma}[\coqmod{cutelim_s4}{Cl}{C_necessitate}, \coqmod{cutelim_s4}{Cl}{C_bunch_box_idemp}]
\label{lem:closed_set_laws_box}
Let $X$ be a closed set.
\begin{itemize}
\item If $\Delta \in X$, then $\Box \Delta \in X$.
\item If $\Box \Box \Delta \in X$, then $\Box \Delta \in X$.
\end{itemize}
\end{lemma}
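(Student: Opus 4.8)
The plan is to mirror the proof of \Cref{lem:closed_set_laws}. Since $X$ is closed, I would first write it as an intersection of principal closed sets, $X = \bigcap_{i \in I}\outI{\varphi_i}$, so that membership unfolds into cut-free provability: $\Delta \in X$ iff $\Delta \provesCF \varphi_i$ for all $i \in I$. Both bullets then reduce to proof-theoretic transformations of a derivation of $\Theta \provesCF \varphi_i$, carried out uniformly in $i$.

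For the first bullet I assume $\Delta \in X$, so $\Delta \provesCF \varphi_i$ for every $i$, and I must produce $\Box\Delta \provesCF \varphi_i$. I would establish the slightly more general claim that for every bunched context $\Gamma(-)$ and every bunch $\Theta$, a cut-free derivation of $\Gamma(\Theta) \provesCF \psi$ yields one of $\Gamma(\Box\Theta) \provesCF \psi$, proved by induction on $\Theta$: in the leaf case $\Theta$ is a formula $A$ and a single application of \ruleref{boxL} suffices; the unit cases are immediate since $\Box\empM = \empM$ and $\Box\empA = \empA$; and the binary cases $\Theta = \Theta_1 \ccomma \Theta_2$ and $\Theta = \Theta_1 \csemic \Theta_2$ follow by applying the induction hypothesis to each component inside the appropriately enlarged context. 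Specializing to $\Gamma = [-]$ and $\Theta = \Delta$ gives $\Box\Delta \provesCF \varphi_i$ for each $i$, whence $\Box\Delta \in X$.

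For the second bullet I assume $\Box\Box\Delta \in X$, so $\Box\Box\Delta \provesCF \varphi_i$ for every $i$, and I must produce $\Box\Delta \provesCF \varphi_i$. Here the work has already been done in \Cref{lem:box_inversion_l}: instantiating the admissible rule \ruleref{box-idemp} with the trivial surrounding context $\Gamma = [-]$ rewrites $\Box\Box\Delta \provesCF \varphi_i$ directly into $\Box\Delta \provesCF \varphi_i$. Quantifying over $i$ again yields $\Box\Delta \in X$.

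I expect the only genuine content to lie in the first bullet, namely in verifying that boxing propagates through an arbitrary bunched context; this is the induction on $\Theta$ above, and it is routine precisely because each application of \ruleref{boxL} acts on a single formula occurrence while leaving the surrounding bunch intact. The second bullet is essentially immediate once \Cref{lem:box_inversion_l} is available, as it is just the contextual $\Box$-idempotence inversion specialized to the empty context.
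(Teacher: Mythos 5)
Your proposal is correct and matches the paper's proof, which is stated in one line (``by examining the definitions of $\Box$ and $\cl{-}$, using \Cref{lem:box_inversion_l} for the second item''): unfolding closedness into cut-free provability against each $\outI{\varphi_i}$, propagating $\Box$ through the bunch by repeated \ruleref{boxL} for the first item, and invoking \ruleref{box-idemp} with the empty context for the second. Your induction on the bunch structure is precisely the detail the paper leaves implicit in ``examining the definition of $\Box$'' on bunches.
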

\begin{proof}
  By examining the definitions of $\Box$ and $\cl{-}$, using \Cref{lem:box_inversion_l} for the second item.
\end{proof}

\begin{lemma}[\coqmod{cutelim_s4}{Cl}{C_alg_box}]
  $({\cal C}, \Box)$ is a BIS4 algebra.
\end{lemma}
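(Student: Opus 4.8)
The plan is to verify each of the six inequalities of \Cref{def:BIS4_alg} for $\Box X = \cl{\{\Box\Delta \mid \Delta\in X\}}$ in turn, together with monotonicity of $\Box$. Monotonicity is immediate: $X\subseteq Y$ gives $\{\Box\Delta\mid\Delta\in X\}\subseteq\{\Box\Delta\mid\Delta\in Y\}$, hence $\Box X\subseteq\Box Y$ by monotonicity of $\cl{-}$. Two syntactic facts will be used throughout: first, that $\Box$ distributes over the bunch connectives and fixes the units, i.e.\ $\Box(\Delta\ccomma\Delta')=\Box\Delta\ccomma\Box\Delta'$, $\Box(\Delta\csemic\Delta')=\Box\Delta\csemic\Box\Delta'$, $\Box\empM=\empM$, and $\Box\empA=\empA$, all read off directly from the definition of $\Box$ on bunches; and second, the two closed-set laws of \Cref{lem:closed_set_laws_box}.

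For the interior law $\Box X\subseteq X$ (item~\ref{eq:box_elim}): since $X$ is closed, by the closure adjunction it suffices to show $\{\Box\Delta\mid\Delta\in X\}\subseteq X$, which is exactly the first item of \Cref{lem:closed_set_laws_box}. For idempotence $\Box X\subseteq\Box\Box X$ (item~\ref{eq:box_idem}): as $\Box\Box X$ is closed, it suffices to show $\Box\Delta\in\Box\Box X$ for each $\Delta\in X$; from $\Delta\in X$ we get $\Box\Delta\in\Box X$, hence $\Box\Box\Delta\in\Box\Box X$ by definition, and since $\Box\Box X$ is closed the second item of \Cref{lem:closed_set_laws_box} yields $\Box\Delta\in\Box\Box X$. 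For the unit laws (items~\ref{eq:box_top} and \ref{eq:box_emp}) the inclusions $\Box\top\subseteq\top$ and $\Box\EMP\subseteq\EMP$ follow from item~\ref{eq:box_elim}, so only the reverse inclusions need work. For $\EMP=\cl{\{\empM\}}$, the element $\empM=\Box\empM$ lies in $\{\Box\Delta\mid\Delta\in\EMP\}$, so $\{\empM\}\subseteq\Box\EMP$ and, $\Box\EMP$ being closed, $\EMP\subseteq\Box\EMP$. For $\top=\Bunch$, the element $\empA=\Box\empA$ lies in $\Box\top$; and any closed set containing $\empA$ is already all of $\top$, because $\empA\provesCF\varphi$ implies $\Gamma\provesCF\varphi$ for every $\Gamma$ by weakening (\ruleref{W;}). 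Hence $\top=\Box\top$.

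The main work is in the two preservation laws (items~\ref{eq:box_conj} and \ref{eq:box_sep}), which I would treat uniformly by a double application of the relevant adjunction to reduce to generators. Take item~\ref{eq:box_sep}, $\Box X\ast\Box Y\subseteq\Box(X\ast Y)$. By the $\ast$--$\wand$ adjunction (\Cref{prop:wand_adjunction}) this is equivalent to $\Box X\subseteq\Box Y\wand\Box(X\ast Y)$; since the right-hand side is closed (\Cref{lem:wand_eq}) and $\Box X=\cl{\{\Box\Delta\mid\Delta\in X\}}$, it suffices to show $\Box\Delta\in\Box Y\wand\Box(X\ast Y)$ for each $\Delta\in X$. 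Unfolding the wand via \Cref{lem:wand_eq}, this says $\Box\Delta\ccomma\Gamma'\in\Box(X\ast Y)$ for all $\Gamma'\in\Box Y$; but the set of such $\Gamma'$ is again closed (it is $\{\Box\Delta\}\wandBullet\Box(X\ast Y)$ up to commutativity of $\ccomma$, hence closed by \Cref{lem:wand_eq}), so it suffices to check it on the generators of $\Box Y$, i.e.\ to show $\Box\Delta\ccomma\Box\Delta'\in\Box(X\ast Y)$ for $\Delta'\in Y$. Here the two syntactic reductions pay off: $\Box\Delta\ccomma\Box\Delta'=\Box(\Delta\ccomma\Delta')$, and since $\Delta\in X$ and $\Delta'\in Y$ the bunch $\Delta\ccomma\Delta'$ is a generator of $X\ast Y$, so $\Box(\Delta\ccomma\Delta')\in\{\Box\Theta\mid\Theta\in X\ast Y\}\subseteq\Box(X\ast Y)$. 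Item~\ref{eq:box_conj} is proved in exactly the same way, using the Heyting adjunction and the closed-form descriptions of $\wedge$ (\Cref{prop:meet_in_C}) and of $\to$ in $\mathcal{C}$ in place of the $\ast$--$\wand$ data, together with $\Box(\Delta\csemic\Delta')=\Box\Delta\csemic\Box\Delta'$.

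I expect the two preservation laws to be the main obstacle. The naive attempt to prove $\Box X\ast\Box Y\subseteq\Box(X\ast Y)$ directly on generators $\Gamma\ccomma\Gamma'$ with $\Gamma\in\Box X$ and $\Gamma'\in\Box Y$ stalls, because membership in $\Box X$ only says $\Gamma$ lies in a closure and gives no handle on the $\Box$-shape of $\Gamma$. The adjunction-to-generators manoeuvre is precisely what lets us replace the arbitrary $\Gamma,\Gamma'$ by honest boxes $\Box\Delta,\Box\Delta'$, at which point $\Box$ commutes with $\ccomma$ (resp.\ $\csemic$) and the membership becomes syntactically transparent. The key supporting fact is the closedness of the wand and of the Heyting implication (\Cref{lem:wand_eq} and its Heyting analogue), which guarantees that each intermediate adjoint set is closed and hence determined by its generators.
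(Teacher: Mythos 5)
Your proof is correct, and its skeleton coincides with the paper's: monotonicity and the first two laws come from \Cref{lem:closed_set_laws_box}, the unit laws from unfolding the definitions (your weakening argument for $\top \subseteq \Box\top$ is exactly the needed observation), and the two preservation laws by reducing to generators, where the syntactic identity $\Box\Delta \ccomma \Box\Delta' = \Box(\Delta\ccomma\Delta')$ (resp.\ with $\csemic$) finishes the job. The only difference is the mechanism you use to strip the closures off $\Box X \ast \Box Y$: you apply the $\ast$--$\wand$ adjunction twice, each time exploiting that the residual $\{\Gamma \mid \Gamma \ccomma \Box\Delta \in \Box(X \ast Y)\}$ is closed (\Cref{lem:wand_eq}), whereas the paper applies the \emph{strength} of the closure operator twice, $\cl{A} \opBullet B \subseteq \cl{A \opBullet B}$, together with the closure adjunction, to arrive at the same generator inclusion $\{\Box\Delta \mid \Delta \in X\} \opBullet \{\Box\Delta' \mid \Delta' \in Y\} \subseteq \Box(X \ast Y)$. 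These two mechanisms are formally equivalent --- that is precisely \Cref{prop:bi_lift_strong_closure} --- so your route amounts to re-deriving the consequence of strength inline each time rather than invoking it as a black box. The paper's version is a shorter inequational computation; yours has the mild advantage of leaning only on the closedness of residuals, which is the form in which the property is actually established for this model (\Cref{lem:wand_eq} and its Heyting analogue), and it makes explicit why the naive generator-by-generator attack on $\Box X \ast \Box Y$ fails, namely that elements of a closure need not have $\Box$-shape.
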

\begin{proof}
The conditions (\ref{eq:box_elim}) and (\ref{eq:box_idem}) follow from \Cref{{lem:closed_set_laws_box}}.
The conditions (\ref{eq:box_top}) and (\ref{eq:box_emp}) can be shown by examining the definitions of all the connectives involved.

The condition (\ref{eq:box_sep}) can be shown as follows.
To show $\Box X \ast \Box Y \subseteq \Box (X \ast Y)$,
we reason as follows:
\begin{align*}
  &\Box X \ast \Box Y =
  \cl{\cl{\{\Box \Delta \mid \Delta \in X\}} \opBullet \cl{\{\Box \Delta \mid \Delta \in Y\}}} \subseteq{}\\
  &\cl{\cl{\{\Box \Delta \mid \Delta \in X\} \opBullet \cl{\{\Box \Delta \mid \Delta \in Y\}}}} =\\
  &\cl{\{\Box \Delta \mid \Delta \in X\} \opBullet \cl{\{\Box \Delta \mid \Delta \in Y\}}}.
\end{align*}
To show that 
\[
\cl{\{\Box \Delta \mid \Delta \in X\} \opBullet \cl{\{\Box \Delta \mid \Delta \in Y\}}} \subseteq \Box(X \ast Y)
\]
it suffices to show that
\[
\{\Box \Delta \mid \Delta \in X\} \opBullet \cl{\{\Box \Delta \mid \Delta \in Y\}} \subseteq \Box(X \ast Y).
\]
And, since
\begin{multline*}
  \{\Box \Delta \mid \Delta \in X\} \opBullet \cl{\{\Box \Delta \mid \Delta \in Y\}} \\
  \subseteq \cl{\{\Box \Delta \mid \Delta \in X\} \opBullet \{\Box \Delta \mid \Delta \in Y\}},
\end{multline*}
it suffices to show
\[
{\{\Box \Delta \mid \Delta \in X\} \opBullet \{\Box \Delta \mid \Delta \in Y\}} 
\subseteq \Box(X \ast Y).
\]
Let $\Delta$ be such that $\Delta = \Box \Delta_1 \ccomma \Box \Delta_2$, for $\Delta_1 \in X$, $\Delta_2 \in Y$.
Then $\Delta = \Box (\Delta_1 \ccomma \Delta_2)$, with $\Delta_1 \ccomma \Delta_2 \in X \ast Y$.

Finally, the condition (\ref{eq:box_conj}) is shown similarly.
\end{proof}

All it remains to verify is that Okada's property (\Cref{lem:fundamental}) still holds.
Since we have added only the $\Box$ modality we need to check one extra case:
\begin{lemma}
  Assume that $\varphi$ is such that $\varphi \in \Sem{\varphi} \subseteq \outI{\varphi}$.
  Then
  \[
    \Box \varphi \in \Sem{\Box \varphi} \subseteq \outI{\Box\varphi}.
  \]
\end{lemma}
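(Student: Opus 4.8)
The plan is to unfold $\Sem{\Box\varphi} = \cl{\{\Box\Delta \mid \Delta \in \Sem{\varphi}\}}$ (the definition of the interior operator on $\mathcal{C}$) and prove the two inclusions separately, in each case leaning on the induction hypothesis $\varphi \in \Sem{\varphi} \subseteq \outI{\varphi}$.

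\emph{Element inclusion.} The induction hypothesis gives $\varphi \in \Sem{\varphi}$, where the left occurrence of $\varphi$ is the single-leaf bunch. Taking $\Delta := \varphi$ as that bunch, $\Box\Delta$ is literally $\Box\varphi$, so $\Box\varphi$ lies in the generating set $\{\Box\Delta \mid \Delta \in \Sem{\varphi}\}$; since every set is contained in its closure, $\Box\varphi \in \cl{\{\Box\Delta \mid \Delta \in \Sem{\varphi}\}} = \Sem{\Box\varphi}$. This direction needs nothing beyond extensivity of $\cl{-}$.

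\emph{Set inclusion.} Because $\outI{\Box\varphi}$ is a principal closed set, the adjunction rule \coqident{algebra.from_closure}{cl_adj} reduces $\Sem{\Box\varphi} \subseteq \outI{\Box\varphi}$ to $\{\Box\Delta \mid \Delta \in \Sem{\varphi}\} \subseteq \outI{\Box\varphi}$. So I fix $\Delta \in \Sem{\varphi}$ and must show $\Box\Delta \provesCF \Box\varphi$. From $\Sem{\varphi} \subseteq \outI{\varphi}$ I get $\Delta \provesCF \varphi$; I would then box every leaf formula of $\Delta$ by iterating \ruleref{boxL} to obtain $\Box\Delta \provesCF \varphi$ (leaving the empty bunches $\empM, \empA$ untouched, in accordance with the definition of $\Box\Delta$), and finish with a single application of \ruleref{boxR}, whose left-hand side is now exactly of the form $\Box\Delta$, yielding $\Box\Delta \provesCF \Box\varphi$.

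\textbf{Main obstacle.} The only nontrivial step is this \emph{left necessitation}, passing from $\Delta \provesCF \varphi$ to $\Box\Delta \provesCF \varphi$, which I would isolate as an auxiliary lemma proved by induction on the bunch $\Delta$ using \ruleref{boxL}. It is precisely here that the two modal rules cooperate: \ruleref{boxL} installs the boxes on the leaves without disturbing the succedent, making the subsequent \ruleref{boxR} applicable. Once this lemma is in place, both inclusions are routine consequences of the definitions of $\Box X$ and $\cl{-}$.
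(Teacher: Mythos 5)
Your proof is correct and follows essentially the same route as the paper: the element inclusion via extensivity of $\cl{-}$, and the set inclusion via the adjunction reducing to the generating set, then deriving $\Box\Delta \provesCF \Box\varphi$ from $\Delta \provesCF \varphi$. The only difference is presentational: the paper compresses your ``left necessitation'' step (iterating \ruleref{boxL} and then applying \ruleref{boxR}) into a single ``hence,'' whereas you rightly isolate it as the one nontrivial ingredient.
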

\begin{proof}
In order to show the first inclusion, note that
by the hypothesis, we have $\varphi \in \Sem{\varphi}$.
Hence, $$\Box \varphi \in \{ \Box \Delta \mid \Delta \in \Sem{\varphi} \} \subseteq \Box \Sem{\varphi}.$$

To show the second inclusion it suffices to show
\[
  \{\Box \Delta \mid \Delta \in \Sem{\varphi}\} \subseteq \outI{\Box\varphi}.
\]
So, let us assume $\Delta \in \Sem{\varphi}$.
By the induction hypothesis we have $\Delta \provesCF \varphi$, and, hence $\Box \Delta \provesCF \Box \varphi$.
Which gives us the desired result $\Box \Delta \in \outI{\Box \varphi}$.
\end{proof}

\begin{theorem}[\href{\coqdocurl{cutelim_s4}{cut}}{\nolinkcoqident{cutelim_s4.cut}}]
  The \ruleref{cut} rule is admissible in the cut-free fragment $\provesCF$ of BIS4.
\end{theorem}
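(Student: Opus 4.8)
The plan is to transcribe, essentially verbatim, the proof of cut admissibility for plain BI, replacing each supporting result by its BIS4 counterpart. Concretely, suppose $\Delta \provesCF \psi$ and $\Gamma(\psi) \provesCF \varphi$ in BIS4; we must produce a cut-free derivation $\Gamma(\Delta) \provesCF \varphi$. As in the base case, I would not reason on derivations at all, but reduce the goal to a purely algebraic inequality in the syntactic model $\mathcal{C}$: it suffices to show $\Sem{\frmlI{\Gamma(\Delta)}} \leq \Sem{\varphi}$ in $\mathcal{C}$, and then invoke the BIS4 analogue of \Cref{thm:c_interp_cf} to recover $\Gamma(\Delta) \provesCF \varphi$.

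Two ingredients make this reduction legitimate for BIS4. First, the analogue of \Cref{thm:c_interp_cf} holds because its proof uses only \Cref{lem:fundamental} (extended to the $\Box$ case by the preceding lemma) together with the closed-set laws of \Cref{lem:closed_set_laws} and \Cref{lem:closed_set_laws_box}, all of which survive the addition of $\Box$. Second, I need soundness of the BIS4 sequent calculus in an arbitrary BIS4 algebra, i.e. the analogue of \Cref{thm:soundness} and \Cref{lem:ext_soundness}; since $(\mathcal{C},\Box)$ has been shown to be a BIS4 algebra (\Cref{def:BIS4_alg}), soundness then applies to $\mathcal{C}$ in particular.

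Granting these, the assembly is identical to the BI case. From soundness applied to $\Delta \proves \psi$ (derivable because $\Delta \provesCF \psi$) we obtain $\Sem{\frmlI{\Delta}} \leq \Sem{\psi}$, and similarly from $\Gamma(\psi) \proves \varphi$ we obtain $\Sem{\frmlI{\Gamma(\psi)}} \leq \Sem{\varphi}$. A routine induction on the bunched context $\Gamma(-)$, using only monotonicity of $\ast$ and $\wedge$ (the context nodes are still just $\ccomma$ and $\csemic$, and the modality appears only inside the fixed leaf formulas, so it plays no role here), lifts the first inequality to $\Sem{\frmlI{\Gamma(\Delta)}} \leq \Sem{\frmlI{\Gamma(\psi)}}$. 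Chaining yields $\Sem{\frmlI{\Gamma(\Delta)}} \leq \Sem{\frmlI{\Gamma(\psi)}} \leq \Sem{\varphi}$, exactly the inequality required by the reduction.

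The only step genuinely new relative to plain BI is soundness of the two modal rules, and within that \ruleref{boxR} is the crux. There, from a premise $\Sem{\frmlI{\Box\Delta}} \leq \Sem{A}$ one must derive $\Sem{\frmlI{\Box\Delta}} \leq \Box\Sem{A} = \Sem{\Box A}$. The key algebraic fact is that a boxed bunch already lies in the image of $\Box$ up to order, namely $\Sem{\frmlI{\Box\Delta}} \leq \Box\,\Sem{\frmlI{\Box\Delta}}$; I would prove this by induction on $\Delta$, using $\Box p \leq \Box\Box p$ at the boxed leaves and the distribution laws $\Box p \wedge \Box q \leq \Box(p \wedge q)$ and $\Box p \ast \Box q \leq \Box(p \ast q)$ (with $\top = \Box\top$ and $\EMP = \Box\EMP$ at the units) from \Cref{def:BIS4_alg} at the $\csemic$ and $\ccomma$ nodes. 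Combining this with monotonicity of $\Box$ applied to the premise gives $\Sem{\frmlI{\Box\Delta}} \leq \Box\,\Sem{\frmlI{\Box\Delta}} \leq \Box\Sem{A}$. Soundness of \ruleref{boxL} is by contrast immediate from $\Box p \leq p$ and monotonicity of the context interpretation. Everything else reuses, verbatim, the development of \Cref{sec:cutelim_thm}.
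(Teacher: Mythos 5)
Your proposal is correct and matches the paper's (implicit) proof: the paper states this theorem without an explicit argument precisely because Section 7 sets up all the ingredients you list --- invertibility of the relevant rules, $(\mathcal{C},\Box)$ being a BIS4 algebra, and Okada's property for $\Box$ --- so that the assembly from the BI case (soundness in $\mathcal{C}$ plus the BIS4 analogue of \Cref{thm:c_interp_cf}, chained through the context $\Gamma(-)$ by monotonicity) carries over verbatim. Your explicit verification of soundness for \ruleref{boxR} via the inequality $\Sem{\frmlI{\Box\Delta}} \leq \Box\,\Sem{\frmlI{\Box\Delta}}$, proved by induction on $\Delta$ using conditions (2)--(6) of \Cref{def:BIS4_alg}, is exactly the detail the paper leaves implicit in its definition of BIS4 algebras, and it is correct.
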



\section{The Coq formalization}
\label{sec:formalization}
As we mentioned, the results of this paper has been formalized in the Coq proof assistant.
In this section we describe some of the design choices and trade-offs that we made.

Instead of formalizing sequent calculus with the cut rule and deriving a cut-free sequence calculus from that, we opted for formalizing just the cut-free sequent calculus and proving that cut it admissible in that system.
The sequent calculus (and, consequently, the algebra $\mathcal{C}$) is parameterized by a collection of simple structural rules (as in \Cref{sec:simple_ext}), which is represented in Coq as a module of the 
 \href{\coqdocurl{seqcalc}{SIMPLE_STRUCT_EXT}}{following signature}:
\begin{lstlisting}[language=Coq]
Module Type SIMPLE_STRUCT_EXT.
  Definition bterm := bterm nat.
  Parameter rules :
    list (list bterm * bterm).
  Parameter rules_good :
    ∀ (Ts : list bterm) (T : bterm),
        (Ts, T) ∈ rules → linear_bterm T.
End SIMPLE_STRUCT_EXT.
\end{lstlisting}
The type \coqe{bterm} represents bunched terms,
and each simple structural rule is given as a tuple \coqe{(Ts, T)} of bunched terms in the premises and in the conclusion.

As for the algebraic semantics, we used a slightly modified formalization of BI algebras from the Iris Coq library~\cite{irisWWW,MoSeL}.
The original formulation BI algebras in Iris also includes a \emph{persistence modality}~\cite{bizjak.birkedal:2018}, which behaves quite differently from an S4-like modality that we use in \Cref{sec:modal_ext}.
To our knowledge, the proof theory of this modality has not been studied and there is no sequent calculus for this logic.
The Iris formalization makes heavy use of setoids, which allows us to easily formulate the model $\pset{\Bunch}$ of predicates on bunches quotiented by bunch equivalence.

The trickiest proofs to formalize were the admissibility of the inverted rules (\Cref{{lem:inv_rules}}) in the cut-free sequent calculus.
Firstly, as was mentioned in \Cref{{sec:seqcalc}}, those admissibility proofs proceed by induction on the height of the derivation.
To handle this in the Coq formalization, we use an auxiliary relation
\href{\coqdocurl{seqcalc_height}{SeqcalcHeight.proves}}{\coqe{proves : bunch → formula → nat → Prop}}
which includes the (upper bound on the) height of the derivation.
Our reasoning behind this definition is that if we were to define a proof height function and do induction on its value, we would have to formulate our goal (and the proof) in a rather unwieldy way: we would have to package together the context, the formula, and the derivation into a $\Sigma$-type:
\coqe{Σ ($\Delta$ : bunch) ($\varphi$ : formula), proves $\Delta$ $\varphi$}.

Secondly, even with induction on proof height, in the proof of \Cref{{lem:inv_rules}} we often end with a situation where we have a bunch $\Delta$ that can be decomposed multiple ways that we need to related to each other.
For example, in the proof of invertibility of \ruleref{sepL}, we want to obtain a proof of $\Delta_0(\varphi\ccomma \psi) \proves \chi$ from a proof of $\Delta_0(\varphi \ast \psi) \proves \chi$.
Suppose that the last applied rule in the proof was weakening
\begin{mathpar}
\infer*
{\Delta_1(\Gamma_1) \provesCF \chi}
{\Delta_1(\Gamma_1\csemic \Gamma_2) \provesCF \chi}
\end{mathpar}
with $\Delta_1(\Gamma_1 \csemic \Gamma_2) = \Delta_0(\varphi \ast \psi)$.
In order to apply the induction hypothesis we have to locate the formula $\varphi \ast \psi$ somewhere in the bunch $\Delta_1(\Gamma_1)$.
The formula may appear either in $\Gamma_1$, $\Gamma_2$, or be part of the bunched context $\Delta_1(\cdot)$, depending on the relation between $\Delta_0$ and $\Delta_1$.
This is an example of informal observation that comes up often in the BI sequent calculus because all the left rules (and structural rules) can be applied deep inside an arbitrary bunch.
As such, reasoning about what appears where in bunched contexts is of importance.

In order to reason about situations like this in Coq, we define an \href{\coqdocurl{bunch_decomp}{bunch_decomp}}{auxiliary inductive system} $\bunchDecomp{\Delta}{\Pi(-)}{\Delta'}$ that captures exactly when $\Delta = \Pi(\Delta')$.
\begin{figure}[t]
  \begin{mathpar}
  \axiom
  {\bunchDecomp{\Delta}{(-)}{\Delta}} 
  \and
  \infer
  {\bunchDecomp{\Delta_1}{\Pi(-)}{\Delta}}
  {\bunchDecomp{\Delta_1\ccomma\Delta_2}{\Pi(-)\ccomma\Delta_2}{\Delta}}
  \and
  \infer
  {\bunchDecomp{\Delta_2}{\Pi(-)}{\Delta}}
  {\bunchDecomp{\Delta_1\ccomma\Delta_2}{\Delta_1\ccomma\Pi(-)}{\Delta}}
  \and
  \infer
  {\bunchDecomp{\Delta_1}{\Pi(-)}{\Delta}}
  {\bunchDecomp{\Delta_1\csemic\Delta_2}{\Pi(-)\csemic\Delta_2}{\Delta}}
  \and
  \infer
  {\bunchDecomp{\Delta_2}{\Pi(-)}{\Delta}}
  {\bunchDecomp{\Delta_1\csemic\Delta_2}{\Delta_1\csemic\Pi(-)}{\Delta}}
  \end{mathpar}
  \caption{Inductive rules for decomposition of bunches.}
  \label{fig:bunchDecomp}
\end{figure}
The rules for the decomposition of bunches is given in \Cref{{fig:bunchDecomp}}.
\begin{lemma}[\coqident{bunch_decomp}{bunch_decomp_iff}]
  $\Delta = \Pi(\Delta')$ if and only if $\bunchDecomp{\Delta}{\Pi}{\Delta'}$.
\end{lemma}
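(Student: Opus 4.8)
The plan is to prove the two implications separately, each by a routine structural induction, since the statement merely asserts that the inductive relation of \Cref{fig:bunchDecomp} is an adequate presentation of the graph of the context-filling operation $\Pi \mapsto \Pi(\Delta')$. Throughout, the equality $\Delta = \Pi(\Delta')$ is syntactic equality of bunch trees, matching the fact that the decomposition rules are purely structural and contain no $\equiv$-rule.

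For the forward direction, I would show that $\bunchDecomp{\Delta}{\Pi}{\Delta'}$ implies $\Delta = \Pi(\Delta')$ by induction on the derivation of $\bunchDecomp{\Delta}{\Pi}{\Delta'}$. The base rule gives $\Pi = (-)$ and $\Delta = \Delta'$, so $\Pi(\Delta') = \Delta' = \Delta$. Each of the four inductive rules rebuilds the context by adding one node on the left or right of a $\ccomma$ or $\csemic$; applying the induction hypothesis to the sub-decomposition and then attaching the untouched sibling yields the equation. For instance, from $\bunchDecomp{\Delta_1}{\Pi'}{\Delta'}$ with $\Delta_1 = \Pi'(\Delta')$ one obtains $\Delta_1 \ccomma \Delta_2 = \Pi'(\Delta') \ccomma \Delta_2 = (\Pi'(-)\ccomma\Delta_2)(\Delta')$, which is exactly the conclusion of the second rule.

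For the backward direction, I would do induction on the structure of the bunched context $\Pi$. Its grammar has exactly five shapes — the hole, and a $\ccomma$ or $\csemic$ node with the hole in the left or right branch — mirroring the five rules. If $\Pi = (-)$, then $\Delta = \Delta'$ and the axiom applies. Otherwise, say $\Pi = \Pi'(-)\ccomma\Delta_2$, so $\Delta = \Pi'(\Delta')\ccomma\Delta_2$; the induction hypothesis gives $\bunchDecomp{\Pi'(\Delta')}{\Pi'}{\Delta'}$, and the corresponding inductive rule then yields $\bunchDecomp{\Delta}{\Pi}{\Delta'}$. The three remaining node cases are symmetric.

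I expect no genuine mathematical obstacle: the whole content of the lemma is that the inductive relation and the defining equation are in one-to-one structural correspondence. The only point requiring care in the Coq development is bookkeeping — the relation carries the context $\Pi$ as an index, so one must use the induction principle generated for the relation (rather than induction on $\Delta$) in the forward direction, and induction on the inductive datatype of bunched contexts in the backward direction. Since both the contexts and the decomposition relation are defined by the same case analysis on where the hole sits, this matching is immediate.
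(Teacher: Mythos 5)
Your proof is correct and takes the only natural route, which is also what the formalization does (the paper itself states this lemma without a written proof, deferring to the Coq development): the forward direction by induction on the derivation of $\bunchDecomp{\Delta}{\Pi}{\Delta'}$, the backward direction by structural induction on the context $\Pi$, with the five rules matching the five shapes of contexts one-to-one. Your remark that the equality $\Delta = \Pi(\Delta')$ is syntactic identity of bunch trees, with no $\equiv$-conversion involved, is also the intended reading, since the decomposition relation is defined on raw bunches rather than on $\equiv$-equivalence classes.
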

Using this inductive system we can prove the following lemmas about decomposition of contexts, that we use for formalizing proofs from \Cref{sec:seqcalc}:
\begin{lemma}[\coqident{bunch_decomp}{fill_is_frml}]
\label{lem:ctx_decomp_further}
If $\Pi(\Delta) = \varphi$ then $\Pi$ is an empty context and {$\Delta = \varphi$}.
\end{lemma}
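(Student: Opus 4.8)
The plan is to reduce the equation $\Pi(\Delta) = \varphi$ to the inductive decomposition judgment and then discharge everything but the trivial case by inversion. First I would apply the characterization lemma (\coqident{bunch_decomp}{bunch_decomp_iff}) to replace the hypothesis $\Pi(\Delta) = \varphi$ by a derivation of $\bunchDecomp{\varphi}{\Pi}{\Delta}$, where $\varphi$ on the left is read as the single-leaf bunch carrying the formula $\varphi$. Working with the inductive presentation is convenient because it exposes the shape of the bunch being decomposed in the conclusion of each rule.

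Next I would perform inversion on this derivation. Inspecting the rules of \Cref{fig:bunchDecomp}, the four non-axiom rules all have a conclusion whose decomposed bunch is of the form $\Delta_1 \ccomma \Delta_2$ or $\Delta_1 \csemic \Delta_2$, i.e.\ a bunch with a binary node at the root. Since $\varphi$ is a single formula leaf, it is not of either of these shapes, so none of these four rules can have produced the judgment $\bunchDecomp{\varphi}{\Pi}{\Delta}$. The only remaining possibility is the axiom rule $\bunchDecomp{\varphi}{(-)}{\varphi}$, which forces $\Pi = (-)$ and $\Delta = \varphi$, exactly as claimed.

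Equivalently, one can argue directly by case analysis on the bunched context $\Pi$: either $\Pi$ is the hole $(-)$, in which case $\Pi(\Delta) = \Delta = \varphi$ and we are done, or $\Pi$ has a binary connective at its root, in which case $\Pi(\Delta)$ is a composite bunch and cannot equal the leaf $\varphi$. I do not expect a real obstacle here: the statement is a one-step inversion, and the only point that needs care is that the datatype of bunches keeps leaf bunches (formulas, and the empty bunches $\empM, \empA$) syntactically disjoint from composite bunches, so that the composite constructors are automatically excluded when the whole bunch is the leaf $\varphi$. This disjointness is precisely what lets the inversion eliminate the four non-trivial cases and leave only the desired conclusion.
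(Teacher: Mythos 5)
Your proposal is correct and matches the paper's approach: the lemma is stated right after the inductive decomposition system of \Cref{fig:bunchDecomp} and its characterization lemma (\coqident{bunch_decomp}{bunch_decomp_iff}), and the intended (Coq) proof is precisely to transport the equation $\Pi(\Delta) = \varphi$ to a derivation of $\bunchDecomp{\varphi}{\Pi(-)}{\Delta}$ and invert it, with the four congruence rules excluded because their conclusions decompose a bunch with a binary root node while $\varphi$ is a leaf. Your remark about leaf constructors being syntactically disjoint from composite ones is exactly the point that makes the inversion go through.
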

\begin{lemma}[\coqident{bunch_decomp}{bunch_decomp_ctx}]
If $\bunchDecomp{\Pi(\Delta)}{\Pi'(-)}{\varphi}$
then one of the two conditions hold:
\begin{itemize}[leftmargin=*]
\item The formula $\varphi$ appears in $\Delta$ itself.
  That is, there is $\Pi_0(-)$ such that $\bunchDecomp{\Delta}{\Pi_0(-)}{\varphi}$ and $\Pi'(-) = \Pi(\Pi_0(-))$.
\item Or the formula $\varphi$ appears in the context $\Pi(-)$.
  Then we can think of $\Pi'(-)$ as a context with two holes, one of which is already filled with $\Delta$.
  Formally we represent this situation as follows.
  There are functions $\Pi_0, \Pi_1$ from bunches to bunched contexts, such that:
  \begin{itemize}
  \item For any bunch $\Lambda$, we have
    $\bunchDecomp{\Pi(\Lambda)}{\Pi_0(\Lambda)(-)}{\varphi}.$
  \item For any bunch $\Lambda$, we have
    $\bunchDecomp{\Pi'(\Lambda)}{\Pi_1(\Lambda)(-)}{\Delta}.$
  \item For any bunches $\Lambda, \Lambda'$, we have
    $\Pi_0(\Lambda)(\Lambda') = \Pi_1(\Lambda')(\Lambda).$
  \end{itemize}
\end{itemize}
\end{lemma}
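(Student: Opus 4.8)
The plan is to prove this by induction on the structure of the bunched context $\Pi(-)$ (equivalently, on the list of frames representing it), generalizing over $\Pi'$, $\Delta$, and $\varphi$. The guiding intuition is that the leaf $\varphi$ occupies some fixed position in the tree $\Pi(\Delta)$, and this position is either a descendant of the hole of $\Pi$ --- so that $\varphi$ lies inside the copy of $\Delta$, yielding the first disjunct --- or it is not, so that $\varphi$ lies in the context part $\Pi$, yielding the second disjunct. Doing induction on $\Pi$ lets me peel off one frame at a time and decide, frame by frame, whether I have yet ``descended past'' the hole of $\Pi$.

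In the base case $\Pi = (-)$ we have $\Pi(\Delta) = \Delta$, so the hypothesis is exactly $\bunchDecomp{\Delta}{\Pi'(-)}{\varphi}$; this is the first disjunct with witness $\Pi_0(-) \eqdef \Pi'(-)$, since $\Pi(\Pi_0(-)) = \Pi_0(-) = \Pi'(-)$. For the inductive step, suppose the top frame of $\Pi$ is, say, $(-)\ccomma\Sigma$, so that $\Pi(-) = \Pi_a(-)\ccomma\Sigma$ and $\Pi(\Delta) = \Pi_a(\Delta)\ccomma\Sigma$ (the remaining three frame shapes are symmetric). I would then invert the decomposition hypothesis $\bunchDecomp{\Pi_a(\Delta)\ccomma\Sigma}{\Pi'(-)}{\varphi}$. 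Since the top node is a $\ccomma$ and $\varphi$ is a single formula, the axiom rule is excluded (it would force $\Pi_a(\Delta)\ccomma\Sigma$ to be a leaf, contradicting \Cref{lem:ctx_decomp_further}), leaving two subcases according to which branch carries the hole: either $\Pi'(-) = \Pi''(-)\ccomma\Sigma$ with $\bunchDecomp{\Pi_a(\Delta)}{\Pi''(-)}{\varphi}$, or $\Pi'(-) = \Pi_a(\Delta)\ccomma\Pi''(-)$ with $\bunchDecomp{\Sigma}{\Pi''(-)}{\varphi}$.

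In the first subcase the hole of $\Pi'$ still lives in the same branch as the hole of $\Pi$, so I apply the induction hypothesis to $\bunchDecomp{\Pi_a(\Delta)}{\Pi''(-)}{\varphi}$ and re-wrap the result through the frame $(-)\ccomma\Sigma$. If the IH returns the first disjunct with witness $\Pi_0$, then $\Pi'(-) = \Pi_a(\Pi_0(-))\ccomma\Sigma = \Pi(\Pi_0(-))$ and I am again in the first disjunct; if it returns the second disjunct with two-hole functions $\Pi_0^a,\Pi_1^a$, I set $\Pi_0(\Lambda)(-) \eqdef \Pi_0^a(\Lambda)(-)\ccomma\Sigma$ and $\Pi_1(\Lambda)(-) \eqdef \Pi_1^a(\Lambda)(-)\ccomma\Sigma$, and the three required equations follow by prepending the fixed frame to the ones supplied by the IH. In the second subcase the hole of $\Pi'$ has branched away from $\Delta$, so $\varphi$ genuinely lives in the context part; here I land directly in the second disjunct and build the functions explicitly from $\Pi_a$ and the sibling decomposition $\bunchDecomp{\Sigma}{\Pi''(-)}{\varphi}$, namely $\Pi_0(\Lambda)(-) \eqdef \Pi_a(\Lambda)\ccomma\Pi''(-)$ and $\Pi_1(\Lambda)(-) \eqdef \Pi_a(-)\ccomma\Pi''(\Lambda)$, both decompositions holding by the appropriate $\ccomma$-rules, and the compatibility equation $\Pi_0(\Lambda)(\Lambda') = \Pi_a(\Lambda)\ccomma\Pi''(\Lambda') = \Pi_1(\Lambda')(\Lambda)$ holding definitionally.

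The main obstacle is the bookkeeping in the second disjunct: one must produce the two context-valued functions $\Pi_0,\Pi_1$, show that each yields a valid decomposition for \emph{every} filling $\Lambda$, verify the compatibility identity, and --- crucially --- thread these functions correctly through the inductive re-wrapping so that all three invariants are preserved across every frame. This is exactly the kind of reasoning about ``where a leaf sits relative to a hole'' that is error-prone on paper, which is why I would reify it through the inductive decomposition judgement $\bunchDecomp{\cdot}{\cdot}{\cdot}$ together with the helper \Cref{lem:ctx_decomp_further}, so that each inversion and each equation becomes a mechanical, checkable manipulation rather than an appeal to a picture.
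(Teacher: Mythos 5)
Your proof is correct and is essentially the argument underlying the paper's Coq lemma \nolinkcoqident{bunch_decomp_ctx} (the paper gives no in-text proof, only the formalized one): structural induction on $\Pi$, inversion of the decomposition judgment at each frame to decide whether the hole of $\Pi'$ stays in the branch containing $\Delta$ or branches away, and explicit construction of the two-hole functions in the latter case. The one step you leave implicit is the ``reflexive'' decomposition $\bunchDecomp{\Pi_a(\Delta)}{\Pi_a(-)}{\Delta}$ needed for your second subcase, which is immediate from the paper's lemma \nolinkcoqident{bunch_decomp_iff}.
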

Similarly, in order to prove the invertibility of relevant rules for the extension of BI with a set of simple structural rules (as in \Cref{{sec:simple_ext}}), we additionally make use of the following auxiliary lemma:
\begin{lemma}[\coqident{bunch_decomp}{bterm_ctx_act_decomp}]
  \begin{sloppypar}
    If $T$ is a linear bunched term with variables $x_1, \dots, x_n$, and $T[\vec{\Delta}] = \Pi(\varphi)$ for some bunched context $\Pi$, then
    there is a variable $x_j$ occurring in $T$, and a context $\Pi'$ such that
  \end{sloppypar}
  \begin{itemize}
\item $\Delta_j = \Pi'(\varphi)$;
\item for any bunch $\Gamma$,
  \[
    T[\Delta_1, \dots, \Delta_{j-1}, \Pi'(\Gamma), \Delta_{j+1}, \dots, \Delta_n] = \Pi(\Gamma).
  \]
\end{itemize}
\end{lemma}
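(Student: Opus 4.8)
The plan is to proceed by structural induction on the linear bunched term $T$, with the statement universally quantified over $\vec{\Delta}$, $\Pi$, and $\varphi$, tracking at each node where the hole of $\Pi$ lands. The key observation is that the leaves of $T$ are all variables, so after substitution every leaf of the bunch $T[\vec{\Delta}]$ belongs to one of the bunches $\Delta_i$; in particular the formula $\varphi$ sitting at the hole of $\Pi$ must occur inside some $\Delta_j$. Linearity guarantees that this $x_j$ occurs exactly once in $T$, which is precisely what will let us plug a fresh bunch into the $j$-th argument without disturbing the rest of the term.

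In the base case $T = x_j$, we have $T[\vec{\Delta}] = \Delta_j$, so the hypothesis reads $\Delta_j = \Pi(\varphi)$. We take $\Pi' \eqdef \Pi$; then $\Delta_j = \Pi'(\varphi)$ holds by assumption, and since substituting into the single variable $x_j$ just returns its argument, $T[\dots, \Pi'(\Gamma), \dots] = \Pi'(\Gamma) = \Pi(\Gamma)$ for every bunch $\Gamma$.

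For the inductive case, take $T = T_1 \ccomma T_2$ (the case $\csemic$ is identical). Then
\[
  \Pi(\varphi) = T[\vec{\Delta}] = T_1[\vec{\Delta}] \ccomma T_2[\vec{\Delta}],
\]
and a case analysis on the decomposition $\bunchDecomp{T_1[\vec{\Delta}] \ccomma T_2[\vec{\Delta}]}{\Pi(-)}{\varphi}$ shows that the hole lies in exactly one branch: either $\Pi(-) = \Pi_1(-) \ccomma T_2[\vec{\Delta}]$ with $T_1[\vec{\Delta}] = \Pi_1(\varphi)$, or symmetrically $\Pi(-) = T_1[\vec{\Delta}] \ccomma \Pi_2(-)$ with $T_2[\vec{\Delta}] = \Pi_2(\varphi)$. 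Assume the former. Since $T$ is linear, $T_1$ is linear and its variables are disjoint from those of $T_2$, so the induction hypothesis applied to $T_1$ yields a variable $x_j$ occurring in $T_1$ and a context $\Pi'$ with $\Delta_j = \Pi'(\varphi)$ and $T_1[\dots, \Pi'(\Gamma), \dots] = \Pi_1(\Gamma)$ for all $\Gamma$. This $x_j$ also occurs in $T$, and because $x_j$ does \emph{not} occur in $T_2$ by linearity, substituting $\Pi'(\Gamma)$ for the $j$-th argument leaves $T_2[\vec{\Delta}]$ untouched:
\[
  T[\dots, \Pi'(\Gamma), \dots] = T_1[\dots, \Pi'(\Gamma), \dots] \ccomma T_2[\vec{\Delta}] = \Pi_1(\Gamma) \ccomma T_2[\vec{\Delta}] = \Pi(\Gamma),
\]
which is exactly the required identity.

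The main obstacle, and the only place where linearity is genuinely needed, is this last step: we must know that modifying the $j$-th argument does not perturb the sibling subterm $T_2$. Were $x_j$ to occur in both branches, plugging $\Gamma$ would also alter the right-hand part that $\Pi$ is meant to keep fixed, and the equation $T[\dots, \Pi'(\Gamma), \dots] = \Pi(\Gamma)$ would fail. In the Coq development the genuine effort is bookkeeping rather than mathematical depth: splitting the substitution vector $\vec{\Delta}$ according to the variables of $T_1$ and $T_2$, managing the index $j$ across subterms, and performing the branch analysis on the inductive decomposition relation $\bunchDecomp{\cdot}{\cdot}{\cdot}$ rather than directly on the structural equality $T[\vec{\Delta}] = \Pi(\varphi)$.
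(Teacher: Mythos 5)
Your proposal is correct and takes essentially the same approach as the paper's formalized proof: structural induction on the linear term $T$, with the base case $T = x_j$ handled by taking $\Pi' = \Pi$, and the inductive case handled by analyzing the decomposition $\bunchDecomp{T_1[\vec{\Delta}]\ccomma T_2[\vec{\Delta}]}{\Pi(-)}{\varphi}$ to locate the hole in one branch, applying the induction hypothesis there, and invoking linearity so that substituting $\Pi'(\Gamma)$ at position $j$ leaves the sibling subterm unchanged. You also correctly pinpoint linearity (disjointness of the variables of $T_1$ and $T_2$) as the step that makes the final equation $T[\dots,\Pi'(\Gamma),\dots] = \Pi(\Gamma)$ go through, which is exactly the role it plays in the Coq development.
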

In order to prove the invertibility of relevant rules for BIS4 (\Cref{sec:modal_ext}), including \Cref{{lem:box_inversion_l}} we make use of the following auxiliary lemma:
\begin{lemma}[\coqident{seqcalc_s4}{bunch_decomp_box}]
  If $\Box \Delta = \Pi(\Box \varphi)$, then there is
  a bunched context $\Pi'$ such that 
  \begin{itemize}
  \item $\Delta = \Pi'(\varphi)$;
  \item for any $\Gamma$, $\Box \Pi'(\Gamma) = \Pi(\Box \Gamma)$.
  \end{itemize}
\end{lemma}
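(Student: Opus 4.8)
The plan is to prove this by induction on the structure of the bunch $\Delta$ (equivalently, on the derivation of the decomposition $\bunchDecomp{\Box\Delta}{\Pi(-)}{\Box\varphi}$ via the system of \Cref{fig:bunchDecomp}). The guiding observation is that boxing a bunch is a \emph{structure-preserving} operation: it acts only on the leaf formulas, sending each leaf $\psi$ to $\Box\psi$, while commuting with the bunch constructors, so that $\Box(\Delta_1 \ccomma \Delta_2) = \Box\Delta_1 \ccomma \Box\Delta_2$, $\Box(\Delta_1 \csemic \Delta_2) = \Box\Delta_1 \csemic \Box\Delta_2$, and $\Box\empM = \empM$, $\Box\empA = \empA$. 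Consequently the tree shape of $\Box\Delta$ is exactly that of $\Delta$, and the only freedom in an equation $\Box\Delta = \Pi(\Box\varphi)$ lies in the position of the hole. I will also use that $\Box$ is injective on formulas, i.e.\ $\Box\varphi = \Box\psi$ implies $\varphi = \psi$.

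First I would treat the base case, where $\Delta$ is a single leaf $\psi$. Then $\Box\Delta = \Box\psi$ is a single boxed leaf, so the equation $\Box\psi = \Pi(\Box\varphi)$ forces $\Pi$ to be the empty context $(-)$ and $\Box\varphi = \Box\psi$, hence $\varphi = \psi$ by injectivity. Taking $\Pi' = (-)$ discharges both obligations, since $\Pi'(\varphi) = \varphi = \Delta$ and $\Box\Pi'(\Gamma) = \Box\Gamma = \Pi(\Box\Gamma)$. The empty-bunch cases $\Delta = \empM$ and $\Delta = \empA$ are vacuous, since $\Pi(\Box\varphi)$ always contains at least the leaf $\Box\varphi$ and so cannot be an empty bunch. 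For the inductive step, suppose $\Delta = \Delta_1 \ccomma \Delta_2$ (the case $\csemic$ being identical). Then $\Box\Delta_1 \ccomma \Box\Delta_2 = \Pi(\Box\varphi)$ has a comma at the root, so inverting $\Pi$ gives either $\Pi = \Pi_1 \ccomma \Theta$ with $\Box\Delta_1 = \Pi_1(\Box\varphi)$ and $\Box\Delta_2 = \Theta$, or the symmetric split. In the first subcase the induction hypothesis applied to $\Box\Delta_1 = \Pi_1(\Box\varphi)$ yields $\Pi_1'$ with $\Delta_1 = \Pi_1'(\varphi)$ and $\Box\Pi_1'(\Gamma) = \Pi_1(\Box\Gamma)$ for all $\Gamma$. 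Setting $\Pi' = \Pi_1' \ccomma \Delta_2$ gives $\Pi'(\varphi) = \Pi_1'(\varphi) \ccomma \Delta_2 = \Delta$, and, using that boxing is a homomorphism together with $\Box\Delta_2 = \Theta$, $\Box\Pi'(\Gamma) = \Box\Pi_1'(\Gamma) \ccomma \Box\Delta_2 = \Pi_1(\Box\Gamma) \ccomma \Theta = \Pi(\Box\Gamma)$, as required. The symmetric subcase takes $\Pi' = \Delta_1 \ccomma \Pi_2'$.

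The main obstacle is not the bookkeeping of the step cases, which is routine once boxing is known to commute with the constructors, but the structural inversion that recovers the shape of $\Delta$, and of $\Pi$, from the single equation $\Box\Delta = \Pi(\Box\varphi)$. In a formalization this is where care is needed: one must show that whenever $\Box\Delta$ is syntactically a $\ccomma$- (resp.\ $\csemic$-) node, both $\Delta$ and the context $\Pi$ decompose correspondingly, and that the leaf occupied by the hole is genuinely $\Box\varphi$ arising from an unboxed $\varphi$ in $\Delta$. These inversions rely on $\Box$ being defined by structural recursion on bunches so that the homomorphism equations hold definitionally, and they are most conveniently discharged through the decomposition predicate $\bunchDecomp{\cdot}{\cdot}{\cdot}$ and its inversion lemmas already established in \Cref{fig:bunchDecomp}.
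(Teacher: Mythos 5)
Your proof is correct and follows essentially the same route as the paper, which states this lemma without an in-text proof and discharges it in the Coq formalization by exactly the kind of structural induction you describe, using the decomposition relation $\bunchDecomp{\Delta}{\Pi(-)}{\Delta'}$ of \Cref{fig:bunchDecomp} together with the fact that $\Box$ acts homomorphically on bunch constructors and injectively on leaf formulas. The key points you identify (inversion of the decomposition at $\ccomma$/$\csemic$ nodes, vacuity of the empty-bunch cases, and the leaf case forcing $\Pi = (-)$ and $\varphi = \psi$) are precisely the obligations the formalization handles.
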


\section{Related work}
\label{sec:related_work}
There has been a long line of work on formalizing cut elimination and other meta-theoretical properties of logics in proof assistants.
Here, we mention a few recent ones.
Pfenning~\cite{pfenning:2000} formalized cut elimination for intuitionistic and classical propositional logic in Elf, using only structural induction and avoiding termination measures.
Chaudhuri, Lima, and Reis~\cite{chaudhuri.etal:2017} have formalized cut elimination for various fragments of linear logic in Abella.
Xavier, Olarte, Reis, and Nigam~\cite{xavier.etal:2018} have formalized cut elimination and completeness of focusing for first-order linear logic in Coq, along with some other meta-theoretical properties.
In~\cite{dawson.gore:2010}, Dawson and Gor{\'e} describe their framework for formalizing sequent calculus with explicit structural rules in Isabelle/HOL.
They apply their framework for the provability logic GL and formalize the cut elimination argument for it.
Their framework was later ported Coq~\cite{dabrera.etal:2021} and used to formalize cut elimination for a modal logic Kt.
Another proof of cut elimination for GL was formalized in Coq~\cite{gore.etal:2021}; the authors noticed during the formalization process that the proof can be simplified in several parts.

\begin{sloppypar}
  Tews~\cite{tews:2013} used Coq to formalize Pattinson's and Schr{\"o}der's proof~\cite{pattinson.schroder:2010} of cut elimination for coalgebraic modal logics.
  During his formalization effort, Tews has uncovered a number of fixable gaps in the proof.
\end{sloppypar}

The formalized proofs mentioned above are syntactic.
A formalized semantic proof of cut elimination for the ${(\forall,\to,\bot)}$ fragment of intuitionistic FOL was given by Herbelin and Lee~\cite{herbelin.lee:2009}, using Kripke models.
The only similar formalization that we are aware is the formalization by Larchey-Wendling \cite{larchey_wendling:2021} of the Okada's semantic proof of cut elimination for linear logic~\cite{okada:99,okada:02}.
A similar formalization of cut elimination for implicational relevance logic was used by the author used part of a larger formalization~\cite{larchey-wendling:2020}.
In personal communication Larchey-Wendling mentioned that he has adapted the aforementioned phase semantics proof to the logic of Bunched Implications, but was not completely satisfied with it.

After Okada's proof, related methods for proving cut elimination were discussed for various logics.
For example, Belardinelli, Jipsen, and Ono~\cite{belardinelli:jipsen:ono:04} use intermediate structures (Gentzen structures) to interpret sequent calculi and prove cut elimination for various substructural variants of the Lambek calculus.
This method was generalized to handle non-associative logics (i.e. without the exchange rule)~\cite{galatos.ono:2010}.
Ciabattoni, Galatos, and Terui \cite{ciabattoni.etal:2008} prove semantic cut elimination for a wide ride of hypersequent calculi for nonclassical logics.

Galatos and Jipsen \cite{galatos.jipsen:2013} introduced the framework of residuated frames which they use to prove cut elimination (and other related properties) for many extensions of Lambek calculus with arbitrary structural rules.
The authors later extended their framework~\cite{galatos.jipsen:2017} to cover extensions of distributive Lambek calculus and BI.
\footnote{The residuated frames framework was used to derive other meta-theoretical properties, such as the finite model property.
Unfortunately, the finite model property proof in \cite{galatos.jipsen:2017} does not hold. The argument there relies on a version of the Curry's lemma (limiting a number of contractions that can occur in a given sequent in a proof search) which does not hold in BI (see \cite{jipsen2018algebraic}).}

Our proof can be seen as a simplification of the Galatos and Jipsen's method.
Instead of making heavy use of the residuated frames, our proof goes directly through algebraic semantics.
While this is a less general framework, it still allows us to extend the proof to cover, e.g. modal extensions of BI, which were not covered by the residuated frames framework.
We conjecture that the algebra we construct in \Cref{sec:cutelim_thm} is isomorphic to the Galois algebra constructed in \cite[Section 4]{galatos.jipsen:2017}.


\section{Conclusion and future work}
\label{sec:conclusion}
In this paper we have presented a fully formalized semantic-based proof of cut elimination for the logic of bunched implications.
We show that this proof can be extended to cover various extensions of BI, and demonstrated which parts of the proof have to be modified, and which remain unchanged.

As for future work, we see several ways of going forward.
Firstly, we can look at extensions of BI.
For example, we can probably extend the construction presented here to cover first-order/predicate BI.
The algebra $\mathcal{C}$ is already complete (has all the meets and joins), so it is suitable for interpreting quantifiers.
Unfortunately, formalizing this would require dealing with variable binders, which we decided to forgo in this paper.
It would also be natural to look at extensions such as GBI~\cite{{galatos.jipsen:2017}}, extensions of BI with various modalities that are used in separation logic~\cite{bizjak.birkedal:2018,dockins.etal:2008}, or the recently proposed polarized sequent calculus for BI~\cite{gheorghiu:marin:2021}.

\begin{sloppypar}
  Secondly, it would be interesting to go from logic to type theory.
  The algebra $\mathcal{C}$ is a subalgebra of predicates ${\Bunch \to \Prop}$, where $\Prop$ is the type of propositions.
  One can imagine it is possible to consider instead presheaves ${\Bunch \to \Set}$,
  and look for a categorification of ${\cal C}$ -- a reflexive subcategory of the category of presheaves, which is universal for cut-free provability.
  That might give us some insight into the connections to the normalization-by-evaluation method for type theories~\cite{altenkirch.etal:1995}, which is usually based on the category of presheaves.
\end{sloppypar}

\begin{acks}
The author would like to thank Jorge P{\'e}rez, Revantha Ramanayake, Niels van der Weide, and Dominique Larchey-Wendling, for their insightful comments on the earlier version of this article and for pointing me to some of the related work.
The author would also like to thank the anonymous CPP reviewers for providing their invaluable feedback.

\begin{sloppypar}
  The author was supported by VIDI Project No. 016.Vidi.189.046 (Unifying Correctness for Communicating Software).
\end{sloppypar}
\end{acks}

\bibliographystyle{ACM-Reference-Format}
\bibliography{cutelim}


\begin{thebibliography}{43}


\ifx \showCODEN    \undefined \def \showCODEN     #1{\unskip}     \fi
\ifx \showDOI      \undefined \def \showDOI       #1{#1}\fi
\ifx \showISBNx    \undefined \def \showISBNx     #1{\unskip}     \fi
\ifx \showISBNxiii \undefined \def \showISBNxiii  #1{\unskip}     \fi
\ifx \showISSN     \undefined \def \showISSN      #1{\unskip}     \fi
\ifx \showLCCN     \undefined \def \showLCCN      #1{\unskip}     \fi
\ifx \shownote     \undefined \def \shownote      #1{#1}          \fi
\ifx \showarticletitle \undefined \def \showarticletitle #1{#1}   \fi
\ifx \showURL      \undefined \def \showURL       {\relax}        \fi
\providecommand\bibfield[2]{#2}
\providecommand\bibinfo[2]{#2}
\providecommand\natexlab[1]{#1}
\providecommand\showeprint[2][]{arXiv:#2}

\bibitem[Alechina et~al\mbox{.}(2001)]%
        {alechina.etal:2001}
\bibfield{author}{\bibinfo{person}{Natasha Alechina}, \bibinfo{person}{Michael
  Mendler}, \bibinfo{person}{Valeria {de Paiva}}, {and} \bibinfo{person}{Eike
  Ritter}.} \bibinfo{year}{2001}\natexlab{}.
\newblock \showarticletitle{Categorical and {{Kripke Semantics}} for
  {{Constructive S4 Modal Logic}}}. In \bibinfo{booktitle}{\emph{Computer
  {{Science Logic}}}} \emph{(\bibinfo{series}{Lecture {{Notes}} in {{Computer
  Science}}})}, \bibfield{editor}{\bibinfo{person}{Laurent Fribourg}} (Ed.).
  \bibinfo{publisher}{{Springer}}, \bibinfo{address}{{Berlin, Heidelberg}},
  \bibinfo{pages}{292--307}.
\newblock
\urldef\tempurl%
\url{https://doi.org/10.1007/3-540-44802-0_21}
\showDOI{\tempurl}


\bibitem[Altenkirch et~al\mbox{.}(1995)]%
        {altenkirch.etal:1995}
\bibfield{author}{\bibinfo{person}{Thorsten Altenkirch},
  \bibinfo{person}{Martin Hofmann}, {and} \bibinfo{person}{Thomas Streicher}.}
  \bibinfo{year}{1995}\natexlab{}.
\newblock \showarticletitle{Categorical Reconstruction of a Reduction Free
  Normalization Proof}. In \bibinfo{booktitle}{\emph{Category {{Theory}} and
  {{Computer Science}}}} \emph{(\bibinfo{series}{Lecture {{Notes}} in
  {{Computer Science}}})}, \bibfield{editor}{\bibinfo{person}{David Pitt},
  \bibinfo{person}{David~E. Rydeheard}, {and} \bibinfo{person}{Peter
  Johnstone}} (Eds.). \bibinfo{publisher}{{Springer}},
  \bibinfo{address}{{Berlin, Heidelberg}}, \bibinfo{pages}{182--199}.
\newblock


\bibitem[Arisaka and Qin(2012)]%
        {arisaka.qin:2012}
\bibfield{author}{\bibinfo{person}{Ryuta Arisaka} {and}
  \bibinfo{person}{Shengchao Qin}.} \bibinfo{year}{2012}\natexlab{}.
\newblock \showarticletitle{{LBI} {Cut} {Elimination} {Proof} with
  {BI}-{MultiCut}}. In \bibinfo{booktitle}{\emph{2012 {Sixth} {International}
  {Symposium} on {Theoretical} {Aspects} of {Software} {Engineering}}}.
  \bibinfo{pages}{235--238}.
\newblock
\urldef\tempurl%
\url{https://doi.org/10.1109/TASE.2012.30}
\showDOI{\tempurl}


\bibitem[Belardinelli et~al\mbox{.}(2004)]%
        {belardinelli:jipsen:ono:04}
\bibfield{author}{\bibinfo{person}{Francesco Belardinelli},
  \bibinfo{person}{Peter Jipsen}, {and} \bibinfo{person}{Hiroakira Ono}.}
  \bibinfo{year}{2004}\natexlab{}.
\newblock \showarticletitle{Algebraic {Aspects} of {Cut} {Elimination}}.
\newblock \bibinfo{journal}{\emph{Studia Logica}} \bibinfo{volume}{77},
  \bibinfo{number}{2} (\bibinfo{date}{July} \bibinfo{year}{2004}),
  \bibinfo{pages}{209--240}.
\newblock
\showISSN{1572-8730}
\urldef\tempurl%
\url{https://doi.org/10.1023/B:STUD.0000037127.15182.2a}
\showDOI{\tempurl}


\bibitem[Bierman(1996)]%
        {bierman:1996}
\bibfield{author}{\bibinfo{person}{Gavin Bierman}.}
  \bibinfo{year}{1996}\natexlab{}.
\newblock \showarticletitle{A Note on Full Intuitionistic Linear Logic}.
\newblock \bibinfo{journal}{\emph{Annals of Pure and Applied Logic}}
  \bibinfo{volume}{79}, \bibinfo{number}{3} (\bibinfo{date}{June}
  \bibinfo{year}{1996}), \bibinfo{pages}{281--287}.
\newblock
\showISSN{0168-0072}
\urldef\tempurl%
\url{https://doi.org/10.1016/0168-0072(96)00004-8}
\showDOI{\tempurl}


\bibitem[Bierman and {de Paiva}(2000)]%
        {bierman.depaiva:2000}
\bibfield{author}{\bibinfo{person}{Gavin Bierman} {and}
  \bibinfo{person}{Valeria {de Paiva}}.} \bibinfo{year}{2000}\natexlab{}.
\newblock \showarticletitle{On an {{Intuitionistic Modal Logic}}}.
\newblock \bibinfo{journal}{\emph{Studia Logica}} \bibinfo{volume}{65},
  \bibinfo{number}{3} (\bibinfo{date}{Aug.} \bibinfo{year}{2000}),
  \bibinfo{pages}{383--416}.
\newblock
\showISSN{1572-8730}
\urldef\tempurl%
\url{https://doi.org/10.1023/A:1005291931660}
\showDOI{\tempurl}


\bibitem[Bizjak and Birkedal(2018)]%
        {bizjak.birkedal:2018}
\bibfield{author}{\bibinfo{person}{Ale{\v s} Bizjak} {and}
  \bibinfo{person}{Lars Birkedal}.} \bibinfo{year}{2018}\natexlab{}.
\newblock \showarticletitle{On {{Models}} of {{Higher}}-{{Order Separation
  Logic}}}.
\newblock \bibinfo{journal}{\emph{Electronic Notes in Theoretical Computer
  Science}}  \bibinfo{volume}{336} (\bibinfo{date}{April}
  \bibinfo{year}{2018}), \bibinfo{pages}{57--78}.
\newblock
\showISSN{1571-0661}
\urldef\tempurl%
\url{https://doi.org/10.1016/j.entcs.2018.03.016}
\showDOI{\tempurl}


\bibitem[Borisavljevi{\'c} et~al\mbox{.}(2000)]%
        {borisavljevic.etal:2000}
\bibfield{author}{\bibinfo{person}{Mirjana Borisavljevi{\'c}},
  \bibinfo{person}{Kosta Do{\v s}en}, {and} \bibinfo{person}{Zoran
  Petri{\'c}}.} \bibinfo{year}{2000}\natexlab{}.
\newblock \showarticletitle{On Permuting Cut with Contraction}.
\newblock \bibinfo{journal}{\emph{Mathematical Structures in Computer Science}}
  \bibinfo{volume}{10}, \bibinfo{number}{2} (\bibinfo{date}{April}
  \bibinfo{year}{2000}), \bibinfo{pages}{99--136}.
\newblock
\showISSN{1469-8072, 0960-1295}
\urldef\tempurl%
\url{https://doi.org/10.1017/S0960129599003011}
\showDOI{\tempurl}


\bibitem[Brotherston(2012)]%
        {brotherston:2012}
\bibfield{author}{\bibinfo{person}{James Brotherston}.}
  \bibinfo{year}{2012}\natexlab{}.
\newblock \showarticletitle{Bunched {{Logics Displayed}}}.
\newblock \bibinfo{journal}{\emph{Studia Logica}} \bibinfo{volume}{100},
  \bibinfo{number}{6} (\bibinfo{year}{2012}), \bibinfo{pages}{1223--1254}.
\newblock
\showISSN{0039-3215}


\bibitem[Br{\"u}nnler and Stra{\ss}burger(2009)]%
        {brunnler.strassburger:2009}
\bibfield{author}{\bibinfo{person}{Kai Br{\"u}nnler} {and}
  \bibinfo{person}{Lutz Stra{\ss}burger}.} \bibinfo{year}{2009}\natexlab{}.
\newblock \showarticletitle{Modular {{Sequent Systems}} for {{Modal Logic}}}.
  In \bibinfo{booktitle}{\emph{Automated {{Reasoning}} with {{Analytic
  Tableaux}} and {{Related Methods}}}} \emph{(\bibinfo{series}{Lecture
  {{Notes}} in {{Computer Science}}})},
  \bibfield{editor}{\bibinfo{person}{Martin Giese} {and} \bibinfo{person}{Arild
  Waaler}} (Eds.). \bibinfo{publisher}{{Springer}}, \bibinfo{address}{{Berlin,
  Heidelberg}}, \bibinfo{pages}{152--166}.
\newblock
\urldef\tempurl%
\url{https://doi.org/10.1007/978-3-642-02716-1_12}
\showDOI{\tempurl}


\bibitem[Chaudhuri et~al\mbox{.}(2017)]%
        {chaudhuri.etal:2017}
\bibfield{author}{\bibinfo{person}{Kaustuv Chaudhuri},
  \bibinfo{person}{Leonardo Lima}, {and} \bibinfo{person}{Giselle Reis}.}
  \bibinfo{year}{2017}\natexlab{}.
\newblock \showarticletitle{Formalized {{Meta}}-{{Theory}} of {{Sequent
  Calculi}} for {{Substructural Logics}}}.
\newblock \bibinfo{journal}{\emph{Electronic Notes in Theoretical Computer
  Science}}  \bibinfo{volume}{332} (\bibinfo{date}{June} \bibinfo{year}{2017}),
  \bibinfo{pages}{57--73}.
\newblock
\showISSN{1571-0661}
\urldef\tempurl%
\url{https://doi.org/10.1016/j.entcs.2017.04.005}
\showDOI{\tempurl}


\bibitem[Ciabattoni et~al\mbox{.}(2008)]%
        {ciabattoni.etal:2008}
\bibfield{author}{\bibinfo{person}{Agata Ciabattoni}, \bibinfo{person}{Nikolaos
  Galatos}, {and} \bibinfo{person}{Kazushige Terui}.}
  \bibinfo{year}{2008}\natexlab{}.
\newblock \showarticletitle{From {{Axioms}} to {{Analytic Rules}} in
  {{Nonclassical Logics}}}. In \bibinfo{booktitle}{\emph{2008 23rd {{Annual
  IEEE Symposium}} on {{Logic}} in {{Computer Science}}}}.
  \bibinfo{pages}{229--240}.
\newblock
\showISSN{1043-6871}
\urldef\tempurl%
\url{https://doi.org/10.1109/LICS.2008.39}
\showDOI{\tempurl}


\bibitem[D'Abrera et~al\mbox{.}(2021)]%
        {dabrera.etal:2021}
\bibfield{author}{\bibinfo{person}{Caitlin D'Abrera}, \bibinfo{person}{Jeremy
  Dawson}, {and} \bibinfo{person}{Rajeev Gor{\'e}}.}
  \bibinfo{year}{2021}\natexlab{}.
\newblock \showarticletitle{A {{Formally Verified Cut}}-{{Elimination
  Procedure}} for {{Linear Nested Sequents}} for~{{Tense Logic}}}. In
  \bibinfo{booktitle}{\emph{Automated {{Reasoning}} with {{Analytic Tableaux}}
  and {{Related Methods}}}} \emph{(\bibinfo{series}{Lecture {{Notes}} in
  {{Computer Science}}})}, \bibfield{editor}{\bibinfo{person}{Anupam Das} {and}
  \bibinfo{person}{Sara Negri}} (Eds.). \bibinfo{publisher}{{Springer
  International Publishing}}, \bibinfo{address}{{Cham}},
  \bibinfo{pages}{281--298}.
\newblock
\showISBNx{978-3-030-86059-2}
\urldef\tempurl%
\url{https://doi.org/10.1007/978-3-030-86059-2_17}
\showDOI{\tempurl}


\bibitem[Dawson and Gor{\'e}(2010)]%
        {dawson.gore:2010}
\bibfield{author}{\bibinfo{person}{Jeremy~E. Dawson} {and}
  \bibinfo{person}{Rajeev Gor{\'e}}.} \bibinfo{year}{2010}\natexlab{}.
\newblock \showarticletitle{Generic {{Methods}} for {{Formalising Sequent
  Calculi Applied}} to {{Provability Logic}}}. In
  \bibinfo{booktitle}{\emph{Logic for {{Programming}}, {{Artificial
  Intelligence}}, and {{Reasoning}}}} \emph{(\bibinfo{series}{Lecture {{Notes}}
  in {{Computer Science}}})}, \bibfield{editor}{\bibinfo{person}{Christian~G.
  Ferm{\"u}ller} {and} \bibinfo{person}{Andrei Voronkov}} (Eds.).
  \bibinfo{publisher}{{Springer}}, \bibinfo{address}{{Berlin, Heidelberg}},
  \bibinfo{pages}{263--277}.
\newblock
\urldef\tempurl%
\url{https://doi.org/10.1007/978-3-642-16242-8_19}
\showDOI{\tempurl}


\bibitem[{de Paiva} and Bra{\"u}ner(1996)]%
        {depaiva.brauner:1996}
\bibfield{author}{\bibinfo{person}{Valeria {de Paiva}} {and}
  \bibinfo{person}{Torben Bra{\"u}ner}.} \bibinfo{year}{1996}\natexlab{}.
\newblock \bibinfo{title}{Cut-{{Elimination}} for {{Full Intuitionistic Linear
  Logic}}}.
\newblock
\newblock


\bibitem[Dockins et~al\mbox{.}(2008)]%
        {dockins.etal:2008}
\bibfield{author}{\bibinfo{person}{Robert Dockins}, \bibinfo{person}{Andrew~W.
  Appel}, {and} \bibinfo{person}{Aquinas Hobor}.}
  \bibinfo{year}{2008}\natexlab{}.
\newblock \showarticletitle{Multimodal {{Separation Logic}} for {{Reasoning
  About Operational Semantics}}}.
\newblock \bibinfo{journal}{\emph{Electronic Notes in Theoretical Computer
  Science}}  \bibinfo{volume}{218} (\bibinfo{date}{Oct.} \bibinfo{year}{2008}),
  \bibinfo{pages}{5--20}.
\newblock
\showISSN{1571-0661}
\urldef\tempurl%
\url{https://doi.org/10.1016/j.entcs.2008.10.002}
\showDOI{\tempurl}


\bibitem[Everett(1944)]%
        {everett:1944}
\bibfield{author}{\bibinfo{person}{C.~J. Everett}.}
  \bibinfo{year}{1944}\natexlab{}.
\newblock \showarticletitle{Closure {Operators} and {Galois} {Theory} in
  {Lattices}}.
\newblock \bibinfo{journal}{\emph{Trans. Amer. Math. Soc.}}
  \bibinfo{volume}{55}, \bibinfo{number}{3} (\bibinfo{year}{1944}),
  \bibinfo{pages}{514--525}.
\newblock
\showISSN{0002-9947}
\urldef\tempurl%
\url{https://doi.org/10.2307/1990306}
\showDOI{\tempurl}


\bibitem[Galatos and Jipsen(2013)]%
        {galatos.jipsen:2013}
\bibfield{author}{\bibinfo{person}{Nikolaos Galatos} {and}
  \bibinfo{person}{Peter Jipsen}.} \bibinfo{year}{2013}\natexlab{}.
\newblock \showarticletitle{Residuated {{Frames}} with {{Applications}} to
  {{Decidability}}}.
\newblock \bibinfo{journal}{\emph{Trans. Amer. Math. Soc.}}
  \bibinfo{volume}{365}, \bibinfo{number}{3} (\bibinfo{year}{2013}),
  \bibinfo{pages}{1219--1249}.
\newblock
\showISSN{0002-9947}


\bibitem[Galatos and Jipsen(2017)]%
        {galatos.jipsen:2017}
\bibfield{author}{\bibinfo{person}{Nikolaos Galatos} {and}
  \bibinfo{person}{Peter Jipsen}.} \bibinfo{year}{2017}\natexlab{}.
\newblock \showarticletitle{Distributive Residuated Frames and Generalized
  Bunched Implication Algebras}.
\newblock \bibinfo{journal}{\emph{Algebra universalis}} \bibinfo{volume}{78},
  \bibinfo{number}{3} (\bibinfo{date}{Nov.} \bibinfo{year}{2017}),
  \bibinfo{pages}{303--336}.
\newblock
\showISSN{1420-8911}
\urldef\tempurl%
\url{https://doi.org/10.1007/s00012-017-0456-x}
\showDOI{\tempurl}


\bibitem[Galatos and Ono(2010)]%
        {galatos.ono:2010}
\bibfield{author}{\bibinfo{person}{Nikolaos Galatos} {and}
  \bibinfo{person}{Hiroakira Ono}.} \bibinfo{year}{2010}\natexlab{}.
\newblock \showarticletitle{Cut Elimination and Strong Separation for
  Substructural Logics: {{An}} Algebraic Approach}.
\newblock \bibinfo{journal}{\emph{Annals of Pure and Applied Logic}}
  \bibinfo{volume}{161}, \bibinfo{number}{9} (\bibinfo{date}{June}
  \bibinfo{year}{2010}), \bibinfo{pages}{1097--1133}.
\newblock
\showISSN{0168-0072}
\urldef\tempurl%
\url{https://doi.org/10.1016/j.apal.2010.01.003}
\showDOI{\tempurl}


\bibitem[Gheorghiu and Marin(2021)]%
        {gheorghiu:marin:2021}
\bibfield{author}{\bibinfo{person}{Alexander Gheorghiu} {and}
  \bibinfo{person}{Sonia Marin}.} \bibinfo{year}{2021}\natexlab{}.
\newblock \showarticletitle{Focused Proof-search in the Logic of Bunched
  Implications}. In \bibinfo{booktitle}{\emph{Foundations of Software Science
  and Computation Structures - 24th International Conference, {FOSSACS} 2021}}
  \emph{(\bibinfo{series}{Lecture Notes in Computer Science},
  Vol.~\bibinfo{volume}{12650})}, \bibfield{editor}{\bibinfo{person}{Stefan
  Kiefer} {and} \bibinfo{person}{Christine Tasson}} (Eds.).
  \bibinfo{publisher}{Springer}, \bibinfo{pages}{247--267}.
\newblock
\urldef\tempurl%
\url{https://doi.org/10.1007/978-3-030-71995-1\_13}
\showDOI{\tempurl}


\bibitem[Gor{\'e} and Ramanayake(2012)]%
        {gore.ramanayake:2012}
\bibfield{author}{\bibinfo{person}{Rajeev Gor{\'e}} {and}
  \bibinfo{person}{Revantha Ramanayake}.} \bibinfo{year}{2012}\natexlab{}.
\newblock \showarticletitle{Valentini's Cut-Elimination for Provability Logic
  Resolved}.
\newblock \bibinfo{journal}{\emph{The Review of Symbolic Logic}}
  \bibinfo{volume}{5}, \bibinfo{number}{2} (\bibinfo{date}{June}
  \bibinfo{year}{2012}), \bibinfo{pages}{212--238}.
\newblock
\showISSN{1755-0211, 1755-0203}
\urldef\tempurl%
\url{https://doi.org/10.1017/S1755020311000323}
\showDOI{\tempurl}


\bibitem[Gor{\'e} et~al\mbox{.}(2021)]%
        {gore.etal:2021}
\bibfield{author}{\bibinfo{person}{Rajeev Gor{\'e}}, \bibinfo{person}{Revantha
  Ramanayake}, {and} \bibinfo{person}{Ian Shillito}.}
  \bibinfo{year}{2021}\natexlab{}.
\newblock \showarticletitle{Cut-{{Elimination}} for {{Provability Logic}} by
  {{Terminating Proof}}-{{Search}}: {{Formalised}} and {{Deconstructed Using
  Coq}}}. In \bibinfo{booktitle}{\emph{Automated {{Reasoning}} with {{Analytic
  Tableaux}} and {{Related Methods}}}} \emph{(\bibinfo{series}{Lecture
  {{Notes}} in {{Computer Science}}})},
  \bibfield{editor}{\bibinfo{person}{Anupam Das} {and} \bibinfo{person}{Sara
  Negri}} (Eds.). \bibinfo{publisher}{{Springer International Publishing}},
  \bibinfo{address}{{Cham}}, \bibinfo{pages}{299--313}.
\newblock
\showISBNx{978-3-030-86059-2}
\urldef\tempurl%
\url{https://doi.org/10.1007/978-3-030-86059-2_18}
\showDOI{\tempurl}


\bibitem[Herbelin and Lee(2009)]%
        {herbelin.lee:2009}
\bibfield{author}{\bibinfo{person}{Hugo Herbelin} {and} \bibinfo{person}{Gyesik
  Lee}.} \bibinfo{year}{2009}\natexlab{}.
\newblock \showarticletitle{Forcing-{{Based Cut}}-{{Elimination}} for
  {{Gentzen}}-{{Style Intuitionistic Sequent Calculus}}}. In
  \bibinfo{booktitle}{\emph{Logic, {{Language}}, {{Information}} and
  {{Computation}}}} \emph{(\bibinfo{series}{Lecture {{Notes}} in {{Computer
  Science}}})}, \bibfield{editor}{\bibinfo{person}{Hiroakira Ono},
  \bibinfo{person}{Makoto Kanazawa}, {and} \bibinfo{person}{Ruy {de Queiroz}}}
  (Eds.). \bibinfo{publisher}{{Springer}}, \bibinfo{address}{{Berlin,
  Heidelberg}}, \bibinfo{pages}{209--217}.
\newblock
\showISBNx{978-3-642-02261-6}
\urldef\tempurl%
\url{https://doi.org/10.1007/978-3-642-02261-6_17}
\showDOI{\tempurl}


\bibitem[{Iris team}(2021)]%
        {irisWWW}
\bibfield{author}{\bibinfo{person}{{Iris team}}.}
  \bibinfo{year}{2021}\natexlab{}.
\newblock \bibinfo{title}{The {Iris} {Project} website and {Coq} development}.
\newblock \bibinfo{howpublished}{{\url{https://iris-project.org/}}}.
\newblock
\newblock
\shownote{Accessed: 2021-09-08}.


\bibitem[Jipsen and Litak(2018)]%
        {jipsen2018algebraic}
\bibfield{author}{\bibinfo{person}{Peter Jipsen} {and} \bibinfo{person}{Tadeusz
  Litak}.} \bibinfo{year}{2018}\natexlab{}.
\newblock \bibinfo{title}{An Algebraic Glimpse at Bunched Implications and
  Separation Logic}.
\newblock
\newblock
\showeprint[arxiv]{1709.07063}~[cs.LO]
\newblock
\shownote{To appear in ``Outstanding Contributions: Hiroakira Ono on Residuated
  Lattices and Substructural Logics''.}.


\bibitem[Krebbers et~al\mbox{.}(2018)]%
        {MoSeL}
\bibfield{author}{\bibinfo{person}{Robbert Krebbers},
  \bibinfo{person}{Jacques{-}Henri Jourdan}, \bibinfo{person}{Ralf Jung},
  \bibinfo{person}{Joseph Tassarotti}, \bibinfo{person}{Jan{-}Oliver Kaiser},
  \bibinfo{person}{Amin Timany}, \bibinfo{person}{Arthur Chargu{\'{e}}raud},
  {and} \bibinfo{person}{Derek Dreyer}.} \bibinfo{year}{2018}\natexlab{}.
\newblock \showarticletitle{{MoSeL}: A general, extensible modal framework for
  interactive proofs in separation logic}.
\newblock \bibinfo{journal}{\emph{PACMPL}} \bibinfo{volume}{2},
  \bibinfo{number}{ICFP} (\bibinfo{year}{2018}), \bibinfo{pages}{77:1--77:30}.
\newblock
\urldef\tempurl%
\url{https://doi.org/10.1145/3236772}
\showDOI{\tempurl}


\bibitem[{Larchey-Wendling}(2020)]%
        {larchey-wendling:2020}
\bibfield{author}{\bibinfo{person}{Dominique {Larchey-Wendling}}.}
  \bibinfo{year}{2020}\natexlab{}.
\newblock \showarticletitle{Constructive {{Decision}} via {{Redundancy}}-{{Free
  Proof}}-{{Search}}}.
\newblock \bibinfo{journal}{\emph{Journal of Automated Reasoning}}
  \bibinfo{volume}{64}, \bibinfo{number}{7} (\bibinfo{date}{Oct.}
  \bibinfo{year}{2020}), \bibinfo{pages}{1197--1219}.
\newblock
\showISSN{1573-0670}
\urldef\tempurl%
\url{https://doi.org/10.1007/s10817-020-09555-y}
\showDOI{\tempurl}


\bibitem[{Larchey-Wendling}(2021)]%
        {larchey_wendling:2021}
\bibfield{author}{\bibinfo{person}{Dominique {Larchey-Wendling}}.}
  \bibinfo{year}{2021}\natexlab{}.
\newblock \bibinfo{title}{Semantic Cut-Elimination for {ILL} via relational
  phase semantics}.
\newblock
  \bibinfo{howpublished}{\url{https://github.com/DmxLarchey/Coq-Phase-Semantics}}.
\newblock
\newblock
\shownote{Accessed: 2021-09-08}.


\bibitem[Marin and Stra{\ss}burger(2014)]%
        {marin.strassburger:2014}
\bibfield{author}{\bibinfo{person}{Sonia Marin} {and} \bibinfo{person}{Lutz
  Stra{\ss}burger}.} \bibinfo{year}{2014}\natexlab{}.
\newblock \showarticletitle{Label-Free Modular Systems for Classical and
  Intuitionistic Modal Logics}. In \bibinfo{booktitle}{\emph{Advances in Modal
  Logic 10}}. \bibinfo{address}{{Groningen, Netherlands}}.
\newblock


\bibitem[O'Hearn(2019)]%
        {ohearn:2019}
\bibfield{author}{\bibinfo{person}{Peter O'Hearn}.}
  \bibinfo{year}{2019}\natexlab{}.
\newblock \showarticletitle{Separation logic}.
\newblock \bibinfo{journal}{\emph{CACM}} \bibinfo{volume}{62},
  \bibinfo{number}{2} (\bibinfo{year}{2019}), \bibinfo{pages}{86--95}.
\newblock
\urldef\tempurl%
\url{https://doi.org/10.1145/3211968}
\showDOI{\tempurl}


\bibitem[O'Hearn and Pym(1999)]%
        {ohearn:pym:99}
\bibfield{author}{\bibinfo{person}{Peter O'Hearn} {and} \bibinfo{person}{David
  Pym}.} \bibinfo{year}{1999}\natexlab{}.
\newblock \showarticletitle{The {Logic} of {Bunched} {Implications}}.
\newblock \bibinfo{journal}{\emph{The Bulletin of Symbolic Logic}}
  \bibinfo{volume}{5}, \bibinfo{number}{2} (\bibinfo{year}{1999}),
  \bibinfo{pages}{215--244}.
\newblock
\showISSN{1079-8986}
\urldef\tempurl%
\url{https://doi.org/10.2307/421090}
\showDOI{\tempurl}


\bibitem[Okada(1999)]%
        {okada:99}
\bibfield{author}{\bibinfo{person}{Mitsuhiro Okada}.}
  \bibinfo{year}{1999}\natexlab{}.
\newblock \showarticletitle{Phase Semantic Cut-Elimination and Normalization
  Proofs of First- and Higher-Order Linear Logic}.
\newblock \bibinfo{journal}{\emph{Theoretical Computer Science}}
  \bibinfo{volume}{227}, \bibinfo{number}{1-2} (\bibinfo{year}{1999}),
  \bibinfo{pages}{333--396}.
\newblock
\urldef\tempurl%
\url{https://doi.org/10.1016/S0304-3975(99)00058-4}
\showDOI{\tempurl}


\bibitem[Okada(2002)]%
        {okada:02}
\bibfield{author}{\bibinfo{person}{Mitsuhiro Okada}.}
  \bibinfo{year}{2002}\natexlab{}.
\newblock \showarticletitle{A uniform semantic proof for cut-elimination and
  completeness of various first and higher order logics}.
\newblock \bibinfo{journal}{\emph{Theoretical Compututer Science}}
  \bibinfo{volume}{281}, \bibinfo{number}{1} (\bibinfo{date}{June}
  \bibinfo{year}{2002}), \bibinfo{pages}{471--498}.
\newblock
\showISSN{0304-3975}
\urldef\tempurl%
\url{https://doi.org/10.1016/S0304-3975(02)00024-5}
\showDOI{\tempurl}


\bibitem[Pattinson and Schr{\"o}der(2010)]%
        {pattinson.schroder:2010}
\bibfield{author}{\bibinfo{person}{Dirk Pattinson} {and} \bibinfo{person}{Lutz
  Schr{\"o}der}.} \bibinfo{year}{2010}\natexlab{}.
\newblock \showarticletitle{Cut Elimination in Coalgebraic Logics}.
\newblock \bibinfo{journal}{\emph{Information and Computation}}
  \bibinfo{volume}{208}, \bibinfo{number}{12} (\bibinfo{date}{Dec.}
  \bibinfo{year}{2010}), \bibinfo{pages}{1447--1468}.
\newblock
\showISSN{0890-5401}
\urldef\tempurl%
\url{https://doi.org/10.1016/j.ic.2009.11.008}
\showDOI{\tempurl}


\bibitem[Pfenning(2000)]%
        {pfenning:2000}
\bibfield{author}{\bibinfo{person}{Frank Pfenning}.}
  \bibinfo{year}{2000}\natexlab{}.
\newblock \showarticletitle{Structural Cut Elimination: I. Intuitionistic and
  Classical Logic}.
\newblock \bibinfo{journal}{\emph{Information and Computation}}
  \bibinfo{volume}{157}, \bibinfo{number}{1-2} (\bibinfo{year}{2000}),
  \bibinfo{pages}{84--141}.
\newblock
\urldef\tempurl%
\url{https://doi.org/10.1006/inco.1999.2832}
\showDOI{\tempurl}


\bibitem[Pinto and Uustalu(2009)]%
        {pinto.uustalu:2009}
\bibfield{author}{\bibinfo{person}{Lu{\'i}s Pinto} {and} \bibinfo{person}{Tarmo
  Uustalu}.} \bibinfo{year}{2009}\natexlab{}.
\newblock \showarticletitle{Proof {{Search}} and {{Counter}}-{{Model
  Construction}} for {{Bi}}-Intuitionistic {{Propositional Logic}} with
  {{Labelled Sequents}}}. In \bibinfo{booktitle}{\emph{Automated {{Reasoning}}
  with {{Analytic Tableaux}} and {{Related Methods}}}}
  \emph{(\bibinfo{series}{Lecture {{Notes}} in {{Computer Science}}})},
  \bibfield{editor}{\bibinfo{person}{Martin Giese} {and} \bibinfo{person}{Arild
  Waaler}} (Eds.). \bibinfo{publisher}{{Springer}}, \bibinfo{address}{{Berlin,
  Heidelberg}}, \bibinfo{pages}{295--309}.
\newblock
\urldef\tempurl%
\url{https://doi.org/10.1007/978-3-642-02716-1_22}
\showDOI{\tempurl}


\bibitem[Pym(2002)]%
        {pym:2002}
\bibfield{author}{\bibinfo{person}{David Pym}.}
  \bibinfo{year}{2002}\natexlab{}.
\newblock \bibinfo{booktitle}{\emph{The {{Semantics}} and {{Proof Theory}} of
  the {{Logic}} of {{Bunched Implications}}}}.
\newblock \bibinfo{publisher}{{Springer Netherlands}}.
\newblock
\urldef\tempurl%
\url{https://doi.org/10.1007/978-94-017-0091-7}
\showDOI{\tempurl}


\bibitem[Pym et~al\mbox{.}(2004)]%
        {pym.etal:2004}
\bibfield{author}{\bibinfo{person}{David Pym}, \bibinfo{person}{Peter O'Hearn},
  {and} \bibinfo{person}{Hongseok Yang}.} \bibinfo{year}{2004}\natexlab{}.
\newblock \showarticletitle{Possible Worlds and Resources: The Semantics of
  {{BI}}}.
\newblock \bibinfo{journal}{\emph{Theoretical Computer Science}}
  \bibinfo{volume}{315}, \bibinfo{number}{1} (\bibinfo{date}{May}
  \bibinfo{year}{2004}), \bibinfo{pages}{257--305}.
\newblock
\showISSN{0304-3975}
\urldef\tempurl%
\url{https://doi.org/10.1016/j.tcs.2003.11.020}
\showDOI{\tempurl}


\bibitem[Reynolds(2002)]%
        {reynolds:2002}
\bibfield{author}{\bibinfo{person}{John~C. Reynolds}.}
  \bibinfo{year}{2002}\natexlab{}.
\newblock \showarticletitle{Separation Logic: {A} Logic for Shared Mutable Data
  Structures}. In \bibinfo{booktitle}{\emph{LICS}}. \bibinfo{publisher}{{IEEE}
  Computer Society}, \bibinfo{pages}{55--74}.
\newblock
\urldef\tempurl%
\url{https://doi.org/10.1109/LICS.2002.1029817}
\showDOI{\tempurl}


\bibitem[Sambin and Valentini(1982)]%
        {sambin.valentini:1982}
\bibfield{author}{\bibinfo{person}{Giovanni Sambin} {and}
  \bibinfo{person}{Silvio Valentini}.} \bibinfo{year}{1982}\natexlab{}.
\newblock \showarticletitle{The Modal Logic of Provability. {{The}} Sequential
  Approach}.
\newblock \bibinfo{journal}{\emph{Journal of Philosophical Logic}}
  \bibinfo{volume}{11}, \bibinfo{number}{3} (\bibinfo{date}{Aug.}
  \bibinfo{year}{1982}), \bibinfo{pages}{311--342}.
\newblock
\showISSN{1573-0433}
\urldef\tempurl%
\url{https://doi.org/10.1007/BF00293433}
\showDOI{\tempurl}


\bibitem[Tews(2013)]%
        {tews:2013}
\bibfield{author}{\bibinfo{person}{Hendrik Tews}.}
  \bibinfo{year}{2013}\natexlab{}.
\newblock \showarticletitle{Formalizing {{Cut Elimination}} of {{Coalgebraic
  Logics}} in {{Coq}}}. In \bibinfo{booktitle}{\emph{Automated {{Reasoning}}
  with {{Analytic Tableaux}} and {{Related Methods}}}}
  \emph{(\bibinfo{series}{Lecture {{Notes}} in {{Computer Science}}})},
  \bibfield{editor}{\bibinfo{person}{Didier Galmiche} {and}
  \bibinfo{person}{Dominique {Larchey-Wendling}}} (Eds.).
  \bibinfo{publisher}{{Springer}}, \bibinfo{address}{{Berlin, Heidelberg}},
  \bibinfo{pages}{257--272}.
\newblock
\urldef\tempurl%
\url{https://doi.org/10.1007/978-3-642-40537-2_22}
\showDOI{\tempurl}


\bibitem[Xavier et~al\mbox{.}(2018)]%
        {xavier.etal:2018}
\bibfield{author}{\bibinfo{person}{Bruno Xavier}, \bibinfo{person}{Carlos
  Olarte}, \bibinfo{person}{Giselle Reis}, {and} \bibinfo{person}{Vivek
  Nigam}.} \bibinfo{year}{2018}\natexlab{}.
\newblock \showarticletitle{Mechanizing {{Focused Linear Logic}} in {{Coq}}}.
\newblock \bibinfo{journal}{\emph{Electronic Notes in Theoretical Computer
  Science}}  \bibinfo{volume}{338} (\bibinfo{date}{Oct.} \bibinfo{year}{2018}),
  \bibinfo{pages}{219--236}.
\newblock
\showISSN{1571-0661}
\urldef\tempurl%
\url{https://doi.org/10.1016/j.entcs.2018.10.014}
\showDOI{\tempurl}


\end{thebibliography}

\end{document}